\newtheorem{theorem}{Theorem}
\newtheorem{proposition}{Proposition}
\newtheorem{lemma}{Lemma}
\newtheorem{remark}{Remark}
\newtheorem{corollary}{Corollary}
\newtheorem{definition}{Definition}
\newcommand{\rcolor}[1]{{#1}}
\newcommand{\displacement}[1]{u^{#1}}
\newcommand{\ii}{\mathrm{i}}
\newcommand{\dd}{\mathrm{d}}
\renewcommand{\dd}{\,\operatorname{d}}
\renewcommand{\ii}{\mathrm{i}}
\newcommand{\Id}{\operatorname{Id}}
\newcommand{\mb}[1]{\ensuremath{\mathbb{#1}}}
\newcommand{\RR}{\mb{R}}
\newcommand{\C}{\mb{C}}
\def\vp{\varphi}
\DeclareMathOperator{\Div}{Div}
\newcommand{\tred}[1]{#1}
\title{Characterization of the spectra of rotating truncated gas planets and inertia-gravity modes}
\author{Maarten V. de Hoop, Sean Holman, Alexei Iantchenko\footnote{We dedicate this paper to the memory of our friend and collaborator Alexei Iantchenko who tragically passed away while working on this research.}}
\date{}
\begin{document}

\maketitle

\begin{abstract}
    We study the essential spectrum, which corresponds to inertia-gravity modes, of the system of equations governing a rotating and self-gravitating gas planet. With certain boundary conditions, we rigorously and precisely characterize the essential spectrum and show how it splits from the portion of the spectrum corresponding to the acoustic modes. The fundamental mathematical tools in our analysis are a generalization of the Helmholtz decomposition and the Lopantinskii conditions.
\end{abstract}



\section{Introduction}

We characterize the spectrum of rotating gas planets \tred{with a nonvanishing surface density, which we will call truncated,} and the spectral component associated with fluid (outer) cores of rotating terrestrial planets. \tred{In the case of a polytropic model, truncation yields such a nonvanishing surface density.} As commonly done, we assume the absence of viscosity of the fluid. We focus on determining the essential spectrum associated with gravito-inertial (gi) modes, in addition to the discrete spectrum associated with acoustic (p) modes, starting from the acoustic-gravitational system of linear equations for seismology supplemented with appropriate boundary conditions in a rotating reference frame. That is, in the case of gas planets we impose a vanishing (Lagrangian) pressure boundary condition. We do not impose incompressibility as the fluid does support acoustic modes. (We note that inertia-gravity modes in fluid cores of terrestrial planets are sometimes referred to as undertones \cite{CrossleyRochester:1980}.) While gravity modes owe their existence to a buoyancy force and inertial modes use Coriolis force as the restoring force, the inertia-gravity spectrum is controlled by both the Coriolis and buoyancy forces. For the characterization of the spectra we introduce a modification of the classical Helmholtz decomposition and Leray projector (with range reminiscent of the anelastic approximation) while assuming ``general'' spatial variability in the parameters such as density of mass and Brunt-V\"{a}is\"{a}l\"{a} frequency. Related work to the study of inertia-gravity modes, which we \tred{will} discuss below, has taken the linearized hydrodynamics equations as a point of departure.

Assuming an incompressible fluid and homogeneity, inertial waves were studied in a rotating sphere of fluid in the laboratory \cite{Greenspan:1965, AldridgeToomre:1969}. Kudlick \cite{Kudlick:1966} found an implicit solution for the eigenfrequencies of the inertial modes of a contained fluid spheroid, and Greenspan \cite{Greenspan:1969} calculated a pure point dense spectrum for the Poincar\'{e}'s problem (called so after Cartan \cite{Cartan:1922} who followed Poincar\'{e}'s paper \cite{Poincare:1885}) for cylindrical and spherical configurations. \tred{Bryan \cite{bryan1889vi} was the first to
give the eigenfrequencies of an incompressible axisymmetric ellipsoid including
surface gravity waves. However, in the present work we do not consider the incompressible case.} Ralston \cite{Ralston:1973} studied the spectrum of the generator of the group of motions of an inviscid fluid in a slowly rotating container, and of axisymmetric motions of a large rotating ring of fluid. He presented a family of examples exhibiting various mixtures of continuous and point spectra for this case. Colin de Verdi\`{e}re and Vidal \cite{CdVVidal:2023} reproved the fact, due to Backus and Rieutord \cite{BackusRieutord:2017}, that the Poincar\'{e} operator in ellipsoids admits a pure point spectrum with polynomial eigenfunctions. (Rapidly rotating fluid masses are usually ellipsoidal at the leading order \cite{Chandrasekhar:1969} because of centrifugal forces and, possibly, tidal interactions due to orbital partners.) They then showed that the eigenvalues of this operator restricted to polynomial vector fields of fixed degree admit a limit repartition given by a probability measure that they construct explicitly. In this context, we mention the work by Ivers \cite{Ivers:2017} on the enumeration, orthogonality and completeness of the incompressible Coriolis modes in a tri-axial ellipsoid, and the work by Maffei, Jackson and Livermore \cite{MaffeiJacksonLivermore:2017} on the characterization of columnar inertial modes in rapidly rotating spheres and spheroids. A WKBJ formalism, under the assumption of a spherically symmetric structure, for inertial modes in rotating bodies and a comparison with numerical results were developed by Ivanov and Papaloizou \cite{IvanovPapaloizou:2010}.

We mention the work by Rieutord and Noui \cite{RieutordNoui:1999} studying the analogy between gravity modes and inertial modes in spherical geometry and the work by Dintrans, Rieutord, and Valdettaro \cite{DintransRieutordValdettaro:1999} on gravito-inertial waves in a rotating stratified sphere or spherical shell. Vidal and Colin de Verdi\`{e}re \cite{VidalCdV:2024} studied the inertia-gravity oscillations that can exist within pancake-like geophysical vortices; they considered a fluid enclosed within a triaxial ellipsoid which is stratified in density with a constant Brunt-V\"{a}is\"{a}l\"{a} frequency. \tred{Results of the present work include the case considered in \cite{VidalCdV:2024} if we replace the gravitational field of a planet by an externally imposed gravitational field on a rotating fluid.}

Studies of internal oscillations and inertia-gravity modes specifically pertaining to the Earth's fluid (outer) core, (again) with simple models, with different boundary conditions date back to the work of Olson \cite{Olson:1977} and Friedlander \cite{Friedlander:1985}. (Before that, Friedlander and Siegmann had studied internal oscillations in a contained rotating stratified fluid \cite{FriedlanderSiegmann:1982a} and in  a rotating stratified fluid in an arbitrary gravitational field \cite{FriedlanderSiegmann:1982b}. Seyed-Mahmoud, Moradi, Kamruzzaman and Naseri \cite{Seyed-Mahmoud-etal:2015} studied numerically axisymmetric compressible and stratified fluid core models with different stratification parameters in order to study the effects of the core’s density stratification on the frequencies of some of the inertia-gravity modes of this body.


WKB asymptotics of inertia-gravity modes -- as well as the ray dynamics including attractors where these modes concentrate or exhibit some singularities -- specifically pertaining to axisymmetric stars (but gas planets alike) without rigid boundaries were developed by Prat et al. \cite{Prat-etal:2016,Prat-etal:2018}. Colin de Verdi\`{e}re and Saint-Raymond \cite{CdVSaintRaymond:2020} investigated spectral properties of $0$th order pseudodifferential operators under natural dynamical conditions motivated by the study of (linearized) internal waves on tori. Dyatlov and Zworski \cite{DyatlovZworski:2019} provided proofs of their results based on the analogy to scattering theory. 

Here, we depart from assumptions invoking symmetry or homogeneity, when the component of the spectrum associated with inertia-gravity modes is no longer pure point or everywhere dense, and aim to characterize the essential spectrum in generality, starting from the full acoustic-gravitational system of (linear) equations supplemented with appropriate boundary conditions in a rotating reference frame. Indeed, the main challenge in the analysis is honoring the boundary conditions. We will not rely on any knowledge of expressions for the modes (generalized eigenfunctions), while this knowledge was essential in many of the works referenced above. In pioneering work, Valette \cite{valette1987spectre,Valette:1989a} presented an initial characterization of the spectrum on similar grounds. We build this, providing a mathematical description of the essential spectrum. In particular, we prove that the essential spectrum has no contributions away from the real and imaginary axes, and the portions on the axes are bounded within a certain region. Our method begins with decomposition into a component with zero dynamic pressure and a corresponding potential component. Using the Schur complement as in \cite{tretter2008spectral}, we see that this decomposition naturally splits the spectrum into one portion that can be associated \tred{with} acoustic modes, and a second component comprising intertia-gravity modes. The decomposition allows us to prove that the essential spectrum entirely corresponds with the inertia-gravity component. The inertia-gravity modes are then analyzed first using techniques from microlocal analysis due to Colin de Verdi\`{e}re \cite{CdV:2020} in the interior, and then by reformulation into a large system of PDEs to handle the boundary. We are able to show that, when certain ellipticity conditions are satisfied, this system satisfies the Lopatinskii conditions \cite{agmon1964estimates}, and this allows us to precisely determine the essential spectrum in Theorem~\ref{thm:pp}. We also consider the bounds on the full spectrum of \cite{dyson1979perturbations}, and provide an alternate proof of those estimates, adapted to the situation we consider, in Proposition \ref{prop:DS}. Finally, we provide a partial resolution of the identity in terms of acoustic modes as this plays a role in seismological studies.

It has been noted that eigenfrequencies associated with rotational modes (originating from Liouville's equations), which we will not consider here, are embedded in the essential spectrum \cite{RogisterValette:2009}. It will be interesting to study viscosity limits \cite{GalkowskiZworski:2022}, which we leave for future work. \tred{Literature already exists in this direction, including Rieutord, Valdettaro and Georgeot \cite{rieutord2002analysis} who gave an explicit solution
in the limit of vanishing viscosity for modes around attractors in a
spherical shell.}
We mention that analysis of the essential spectrum for Maxwell's equations with conductivity has also been carried out \cite{lassas1998essential,alberti2019essential} in both bounded and unbounded domains using some of the same tools. 



\section{Acousto-gravitational system of equations and well-posedness}

We consider the linearised hydrodynamics arising from perturbations of rotating self-gravitating truncated gas planets. Here the truncation is realised by setting the pressure equal to zero at the surface similar to \cite{dewberry2021constraining}. The displacement vector of a gas or liquid parcel between the
unperturbed and perturbed flow is $\displacement{}$. The unperturbed values of pressure ($p$), density ($\rho$) and gravitational potential ($\Phi$) are denoted with a zero subscript. We have\tred{, in a coordinate system rotating with the planet,}~\footnote{This form follows from the incremental Lagrangian stress formulation in the acoustic limit \tred{as presented, for example, in \cite{de2015system}.}} \cite{Dalsgard2002,dewberry2021constraining}
\begin{equation}
   \rho_0 \partial_t^2 \displacement{}
     + 2 \rho_0 \Omega \times \partial_t \displacement{}
   = \nabla({\kappa \nabla \cdot {\displacement{}}})
   - \nabla (\rho_0 {\displacement{} \cdot \nabla (\Phi_0 + \Psi^s)})
   + (\nabla \cdot (\rho_{0} \displacement{}))
     \nabla (\Phi_0 + \Psi^s)
   - \rho_0 \nabla\Phi' ,
\label{eq: MomentumConservation6}
\end{equation}
where \tred{$\kappa = p_0 \gamma$ is the bulk-modulus and $\gamma$ is the adiabatic index,}
\begin{equation}
   \nabla^2 \Phi' = -4\pi G \nabla \cdot (\rho_{0} \displacement{})
\label{eq: PoissonEqPert-2}
\end{equation}
and $\Psi^s$ denotes the centrifugal potential,
\begin{equation}
   \Psi^s = -\tfrac{1}{2} (\Omega^2 x^2 - (\Omega \cdot x)^2)
   \label{eq: RotPot}
\end{equation}
($|\Omega|$ signifying the rotation rate of the planet). We may
introduce the solution operator, $S$, such that
\begin{equation}
     \Phi' = S(\rho_0 \displacement{}) .
       \label{eq: PerturbGravOperator}
\end{equation}
We will use the shorthand notation,
\begin{equation}
   g_0' = -\nabla (\Phi_0 + \Psi^s).
    \label{eq: PerturbGrav}
\end{equation}
\tred{In fact our results and method of proof also apply in the so-called f-plane approximation\footnote{\tred{That is, a local approximation assuming $g_0' \cdot \Omega$ is constant.}} as considered in \cite{VidalCdV:2024}.} In the above, $\Omega$, $\rho_0$, $\Phi_0$ and $\kappa$ are known
(unperturbed) quantities. We recognize the acoustic wave speed,
\begin{equation}
   c^2 = \kappa \rho_0^{-1} .
   \label{eq: SoundSpeed}
\end{equation}
Typically, the underlying manifold, $M$ say, is a spheroid with the axis of rotation aligned with $\Omega$. A spherically symmetric manifold requires $\Omega = 0$ from well-posedness arguments.


We rewrite the first two terms on the right-hand side of (\ref{eq:
  MomentumConservation6}),
\begin{equation}
   \nabla({\kappa \nabla \cdot {\displacement{}}})
   - \nabla (\rho_0 {\displacement{} \cdot \nabla (\Phi_0 + \Psi^s)})
   = \nabla [\kappa \rho_0^{-1} \,
             (\nabla \cdot (\rho_0 \displacement{})
     - \tilde{s} \cdot \displacement{})] ,
\end{equation}
in which
\begin{equation}
   \tilde{s} = \nabla \rho_0 - g_0' \frac{(\rho_0)^2}{\kappa};
\end{equation}
$\tilde{s}$ is related to the Brunt-V\"ais\"al\"a frequency
\begin{equation}
   N^2 = \rho_0^{-1} (\tilde{s} \cdot g_0').
\label{eq: N2stilde}
\end{equation}
We may identify $-\kappa \rho_0^{-1} \, (\nabla \cdot (\rho_0
\displacement{}) - \tilde{s} \cdot \displacement{})$ with the dynamic pressure, $P$ say. (The so-called reduced pressure is given by $\rho_0^{-1} P + \Phi'$.) Thus (\ref{eq: MomentumConservation6}) takes the form
\begin{equation}
   \partial_t^2 (\rho_0 \displacement{})
     + 2 \Omega \times \partial_t (\rho_0 \displacement{})
   = \nabla [c^2 \,
             (\nabla \cdot (\rho_0 \displacement{})
     - \tilde{s} \cdot  \displacement{})]
     - (\nabla \cdot (\rho_{0} \displacement{})) g_0'
     - \rho_0 \nabla\Phi' .
\label{eq: MCts}
\end{equation}
At high eigenfrequencies, we can suppress the Coriolis-force term, $2 \Omega \times \partial_t (\rho_0 \displacement{})$ and invoke \textit{the Cowling approximation, when one neglects $\nabla\Phi'$}. However, we retain these terms for the moment and rewrite (\ref{eq: MCts}) as
\begin{equation}
   \partial_t^2 (\rho_0 \displacement{})
     + 2 \Omega \times \partial_t (\rho_0 \displacement{})
   = \nabla [c^2 \,
             (\nabla \cdot (\rho_0 \displacement{})
     - \tilde{s} \cdot  \displacement{})]
     - (\nabla \cdot (\rho_0 \displacement{})) g_0'
     - \rho_0 \nabla S(\rho_0 \displacement{}) .
\label{eq: MCx}
\end{equation}
For typical models of gas giants $\tilde{s}$ and $N^2$ are zero in a finite-thickness \tred{shell} in the outer part of the planet, and we will make this assumption. In polytropic models, it can also be shown that density scales as $D^n$ near the surface of the planet, where $D$ is the depth and $n$ is the polytropic index. Furthermore, $c^2$ tends linearly to zero at the boundary independent of the polytropic index. Though such models, where $c^2$ and $\rho_0$ tend to zero at the boundary, are \tred{more} realistic, for simplicity in this article we will apply a different boundary condition.
We assume the free-surface boundary condition given by the vanishing of the Lagrangian pressure perturbation, $\kappa (\nabla \cdot u)$, holds and thus get the boundary condition
\begin{equation} \label{eq:BC}
    [\nabla \cdot u]|_{\partial M} = 0,
\end{equation}
or in terms of the mass-motion,
\begin{equation}\label{eq:Bdmassmotion}
    [\nabla \cdot (\rho_0 \displacement{}) - \displacement{} \cdot \nabla \rho_0]|_{\partial M} = 0 .
\end{equation}
\noindent
Analyzing the spectrum follows replacing the operator on the left-hand side of \eqref{eq: MCx} by
\begin{equation} \label{eq:Fldef}
    \lambda^2 \operatorname{Id}
         + 2 \lambda R_{\Omega} =: F(\lambda),\quad
   R_{\Omega} \partial_t (\rho_0 \displacement{})
   = \Omega \times \partial_t (\rho_0 \displacement{}) ,
\end{equation}
that is, upon replacing $\partial_t$ by $\lambda$. We introduce the shorthand notation
\begin{equation} \label{eq:A2x}
    \rho_0 A_2 (\displacement{}) := -\nabla [c^2 \,
             (\nabla \cdot (\rho_0 \displacement{})
     -  \tilde{s} \cdot \displacement{})]
     + (\nabla \cdot (\rho_0 \displacement{})) g_0'
     + \rho_0 \nabla S(\rho_0 \displacement{}) ,
\end{equation}
and
\begin{equation} \label{eq:Llambda}
    L(\lambda) = F(\lambda) + A_2
\end{equation}
identified as a quadratic operator pencil.

Let us now analyze the operator $A_2$ introduced above. We will consider the weak form of $A_2$ on the Hilbert space $H = L^2(\rho_0 \dd x)^3$.
For $u$ and $v \in H$ sufficiently regular, using integration by parts gives
\[
\begin{split}
\Big \langle v, A_2(u) \Big \rangle_H & = \int_M \kappa (\nabla \cdot \overline{v}) (\nabla \cdot u ) + \rho_0 (\nabla \cdot \overline{v}) (g_0' \cdot u) + (g_0' \cdot \overline{v})(\nabla \cdot (\rho_0 u)) + \rho_0 \overline{v} \cdot \nabla S(\rho_0 u) \ \mathrm{d} x\\
&\hskip1cm - \int_{\partial M} (n \cdot \overline{v} ) \Big (  \kappa  (\nabla \cdot u) + \rho_0 (g_0'\cdot u) \Big ) \ \mathrm{d} s.
\end{split}
\]
By the proof of \cite[Lemma 4.1, Eqn (4.10)]{de2015system} we can rewrite the gravitational term to obtain
\[
\begin{split}
\Big \langle v, A_2(u) \Big \rangle_H & = \int_M \kappa (\nabla \cdot \overline{v}) (\nabla \cdot u ) + \rho_0 (\nabla \cdot \overline{v}) (g_0' \cdot u) + (g_0' \cdot \overline{v})(\nabla \cdot (\rho_0 u))  \ \mathrm{d} x\\
& \hskip1cm - \frac{1}{4 \pi G} \int_{\mathbb{R}^3}\nabla S(\rho_0\overline{v}) \cdot \nabla S ( \rho_0 u) \ \mathrm{d} x - \int_{\partial M} (n \cdot \overline{v} ) \Big (  \kappa  (\nabla \cdot u) + \rho_0 (g_0'\cdot u) \Big ) \ \mathrm{d} s
\end{split}
\]
If we assume boundary condition \eqref{eq:BC}, we can obtain the quadratic form
\begin{equation}\label{eq:a2}
\begin{split}
a_2(v,u) = \Big \langle v, A_2(u) \Big \rangle_H & = \int_M \kappa (\nabla \cdot \overline{v}) (\nabla \cdot u ) + \rho_0 (\nabla \cdot \overline{v}) (g_0' \cdot u) + (g_0' \cdot \overline{v})(\nabla \cdot (\rho_0 u))  \ \mathrm{d} x\\
& \hskip1cm - \frac{1}{4 \pi G} \int_{\mathbb{R}^3}\nabla S(\rho_0 \overline{v}) \cdot \nabla S (\rho_0 u) \ \mathrm{d} x - \int_{\partial M} (n \cdot \overline{v} ) \rho_0 (g_0'\cdot u) \ \mathrm{d} s
\end{split}
\end{equation}
which is symmetric because $g_0'$ and $\nabla \rho_0$ are parallel in $M$ and on the boundary $\partial M$ the vectors $g_0'$ and $n$ are parallel (see \cite[Lemma 2.1]{de2015system}). Also, $a_2$ can be extended to a bounded sequilinear form on domain $H_{\Div}(M,L^2(\partial M))$,
which is the closure of $C^\infty(M) \cap H$ under the inner-product
\[
\langle v, u \rangle_{H_{\Div}(M,L^2(\partial M))} = \int_M \Big( (\nabla \cdot \overline{v}) (\nabla \cdot u ) + \overline{v} \cdot u \Big ) \ \rho_0\ \mathrm{d}x + \int_{\partial M} (n\cdot \overline{v})(n \cdot u) \ \mathrm{d} s.
\]
Under some hypothesis, it is then true that $a_2$ is $H$-coercive on $H_{\Div}(M,L^2(\partial M))$. The properties mentioned above are summarised in the next lemma (see also \cite{de2015system} for a similar, but more complicated, case).

\begin{lemma}\label{lem:a2}
    Suppose that $M$ is compact with smooth boundary $\partial M$, $c$, $g_0' \in C(M)$, $\rho_0 \in C^1(M)$,  $c^2> 0$ on $M$, $g_0'$ and $\nabla \rho_0$ are parallel in $M$, $g_0'$ and $n$
are parallel on $\partial M$, and $g_0'\cdot n < 0$ on $\partial M$. Then $a_2$ defined by \eqref{eq:a2} is a continuous sequilinear form on $H_{\Div}(M,L^2(\partial M))$. Furthermore, there exist constants $\alpha$, $\beta > 0$ such that
\begin{equation} \label{eq:coerc}
    a_2(u,u) \geq \alpha \|u\|_{H_{\Div}(M,L^2(\partial M))}^2 - \beta \|u\|_H^2
\end{equation}
for all $u \in H_{\Div}(M,L^2(\partial M))$.
\end{lemma}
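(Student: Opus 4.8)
\noindent\textit{Proof plan.} The plan is to verify the two assertions first for $u,v\in C^\infty(M)\cap H$ and then pass to the closure $H_{\Div}(M,L^2(\partial M))$ by density and continuity. The standing hypotheses guarantee that on the compact manifold $M$ the coefficients $\rho_0$ (which is positive, as is implicit in the definition of $H$), $c^2$, $\kappa=c^2\rho_0$, $|g_0'|$ and $|\nabla\rho_0|$ are all bounded, with $\rho_0$ and $c^2$ bounded below by positive constants; in particular $\|\cdot\|_H$ is equivalent to the unweighted $L^2(M)^3$-norm and $\kappa\ge(\min_M c^2)\,\rho_0$ on $M$. For continuity I would estimate the five terms of \eqref{eq:a2} one at a time. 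The three volume terms not involving $S$ are immediate from Cauchy--Schwarz, boundedness of the coefficients, and the identity $\nabla\cdot(\rho_0 u)=\rho_0\,\nabla\cdot u+\nabla\rho_0\cdot u$, which gives $\|\nabla\cdot(\rho_0 u)\|_{L^2}\lesssim\|u\|_{H_{\Div}}$. For the Newtonian-potential term I would use that, by \eqref{eq: PoissonEqPert-2}--\eqref{eq: PerturbGravOperator}, $\Phi':=S(\rho_0 u)$ solves $\Delta\Phi'=-4\pi G\,\nabla\cdot(\rho_0 u)$ on $\mathbb{R}^3$ with source supported in $M$, so that integrating by parts twice yields the energy bound $\|\nabla S(\rho_0 u)\|_{L^2(\mathbb{R}^3)}\le 4\pi G\,\|\rho_0 u\|_{L^2(M)}\lesssim\|u\|_H$ (equivalently, the Newtonian potential of an $L^2$ mass current lands in $\dot H^1(\mathbb{R}^3)$, cf.\ \cite{de2015system}); hence this term is controlled by $\|v\|_H\|u\|_H$. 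The one term needing care is the boundary integral: since $g_0'$ and $n$ are parallel on $\partial M$ and $|n|=1$, one has $g_0'=(g_0'\cdot n)\,n$ there, so $g_0'\cdot u=(g_0'\cdot n)(n\cdot u)$ on $\partial M$ and the boundary term equals $-\int_{\partial M}(g_0'\cdot n)\,\rho_0\,(n\cdot\bar v)(n\cdot u)\,ds$, which is bounded by $\|n\cdot v\|_{L^2(\partial M)}\|n\cdot u\|_{L^2(\partial M)}\lesssim\|v\|_{H_{\Div}}\|u\|_{H_{\Div}}$.

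For the Gårding inequality \eqref{eq:coerc} I would set $v=u$ (the diagonal is real, since $a_2$ is Hermitian under the parallelism hypotheses, as noted in the text) and isolate the manifestly nonnegative contributions. One is the volume term $\int_M\kappa|\nabla\cdot u|^2\,dx\ge(\min_M c^2)\,\|\nabla\cdot u\|_H^2$; the other is the boundary term, which by the computation above equals $\int_{\partial M}|g_0'\cdot n|\,\rho_0\,|n\cdot u|^2\,ds$, because $g_0'\cdot n<0$ and $\rho_0>0$ on $\partial M$, and by compactness this dominates $\eta_0\|n\cdot u\|_{L^2(\partial M)}^2$ for some $\eta_0>0$. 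The remaining pieces are all of lower order: the self-gravity term $-\tfrac{1}{4\pi G}\|\nabla S(\rho_0 u)\|^2_{L^2(\mathbb{R}^3)}\ge-C\|u\|_H^2$ by the energy estimate; the term $\int_M(g_0'\cdot\bar u)(\nabla\rho_0\cdot u)\,dx$ is bounded by $C\|u\|_H^2$; and the cross term $2\,\mathrm{Re}\int_M\rho_0(\nabla\cdot\bar u)(g_0'\cdot u)\,dx$ is handled by Young's inequality, $2\rho_0|\nabla\cdot u|\,|g_0'|\,|u|\le\epsilon\,\rho_0|\nabla\cdot u|^2+\epsilon^{-1}\rho_0|g_0'|^2|u|^2$ pointwise, which absorbs $\epsilon\|\nabla\cdot u\|_H^2$ into the good volume term (choose $\epsilon<\min_M c^2$) at the cost of a further $C_\epsilon\|u\|_H^2$. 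Collecting these and using $\|u\|_{H_{\Div}(M,L^2(\partial M))}^2=\|\nabla\cdot u\|_H^2+\|u\|_H^2+\|n\cdot u\|_{L^2(\partial M)}^2$, one arrives at \eqref{eq:coerc} with $\alpha=\min(\tfrac{1}{2}\min_M c^2,\ \eta_0)$ and $\beta$ equal to $\alpha$ plus the accumulated lower-order constant.

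The step I expect to be the main obstacle --- really the only place where the hypotheses are used essentially rather than for bookkeeping --- is the boundary term. Its continuity on $H_{\Div}(M,L^2(\partial M))$ would fail without $g_0'\parallel n$ on $\partial M$, since a generic trace of $u$ on $\partial M$ is not controlled by this Hilbert space, which only sees the normal component; and its positivity, which supplies the boundary part of the coercive norm, rests entirely on $g_0'\cdot n<0$ together with $\rho_0>0$ on $\partial M$. The only other slightly non-routine point is the nonlocal operator $S$: one needs the boundedness of $L^2(M)\ni\rho_0 u\mapsto\nabla S(\rho_0 u)\in L^2(\mathbb{R}^3)$ (standard potential theory, or the integration-by-parts identity of \cite[Lemma 4.1]{de2015system} already invoked above), after which its negative sign in $a_2$ is harmless as it is of lower order. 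The interior parallelism of $g_0'$ and $\nabla\rho_0$, beyond making $a_2$ Hermitian, plays no further role --- boundedness of both fields is all the estimates require --- and everything else reduces to Cauchy--Schwarz, Young's inequality, and a density argument.
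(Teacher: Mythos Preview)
Your proposal is correct and follows essentially the same approach as the paper's proof: both isolate the two nonnegative pieces $\int_M\kappa|\nabla\cdot u|^2\,dx$ and the boundary term (positive thanks to $g_0'\cdot n<0$ and $g_0'\parallel n$), control the self-gravity term by $C\|u\|_H^2$ via the $L^2\to\dot H^1$ bound for $S$, and absorb the mixed volume terms into the good $\|\nabla\cdot u\|_H^2$ piece by Young/Cauchy--Schwarz with a parameter $\epsilon$. Your write-up is slightly more explicit in places (e.g.\ writing out $g_0'=(g_0'\cdot n)\,n$ on $\partial M$ to reduce the boundary integral to normal traces, and splitting $(g_0'\cdot\bar u)\nabla\cdot(\rho_0 u)$ into a cross term plus a zero-order term), but the argument is the same.
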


\begin{remark}
    The hypothesis that $c^2>0$ is not realistic for gas giants as noted above since in that case $c^2$ will go to zero at $\partial M$. The requirements that the given vectors are parallel follow from hydrostatic equilibrium \cite[Lemma 2.1]{de2015system}. \tred{Note that hydrostatic equilibrium also implies barotropic equilibrium; that is, $g_0'$, $\nabla \rho_0$ and $\nabla p_0$ are all parallel meaning level surfaces of $\rho_0$ and $p_0$ will coincide.} 
\end{remark}

\begin{proof}
    Sesquilinearity follows from the hypotheses that certain vectors are parallel as can be seen from \eqref{eq:a2}. Also, continuity is proven by directly applying the Cauchy-Schwartz inequality to \eqref{eq:a2}. Let us now establish \eqref{eq:coerc}.
    
    In the rest of the proof, $C$ and $D$ will always be positive constants which may change from step to step. Since $M$ is compact, $c^2 \geq C > 0$ and so
    \[
    \int_M \kappa (\nabla \cdot u) (\nabla \cdot u ) \ \mathrm{d} x \geq C \| \nabla \cdot u \|_H^2.
    \]
    Applying the Cauchy-Schwartz inequality and using bounds on $|g_0'|$, $\rho_0$ and $|\nabla \rho_0|$ gives, for any $\epsilon >0$
    \[
    \left | \int_M \rho_0 (\nabla \cdot u) (g_0' \cdot u) + (g_0' \cdot u)(\nabla \cdot (\rho_0 u))  \ \mathrm{d} x \right | \leq C \epsilon \|\nabla \cdot u \|_H^2 + D  \epsilon^{-1} \|u\|_H^2.
    \]
    Applying the definition \eqref{eq: PerturbGravOperator} of $S$, we can bound the gravitational term
    \[
    \left | \frac{1}{4 \pi G} \int_{\mathbb{R}^3}\nabla S(\rho_0 u) \cdot \nabla S (\rho_0 u) \ \mathrm{d} x \right | \leq C \| u \|_H^2,
    \]
    and the fact that $g_0'$ and $n$ are parallel as well as hypothesis $g_0'\cdot n < 0$ implies
    \[
    - \int_{\partial M} (n \cdot u ) \rho_0 (g_0'\cdot u) \ \mathrm{d} s \geq C \|u\|^2_{L^2(\partial M)}.
    \]
    Combining all of the previous estimates and taking $\epsilon$ sufficiently small proves \eqref{eq:coerc}.
\end{proof}

Lemma \ref{lem:a2} shows that $(H_{\Div}(M,L^2(\partial M)),H,a_2)$ is a Hilbert triple, which implies many results about the operator $A_2$ \cite[Chapter VI.3.2.5]{dautray1999mathematical} some of which we collect in the next Corollary.

\begin{corollary} \label{cor:Hilbert3}
    The operator $A_2$ is continuous from $H_{\Div}(M,L^2(\partial M))$ to the Hilbert dual $H_{\Div}(M,L^2(\partial M))'$. Also, $A_2$ is an unbounded self-adjoint operator on $H$ with domain
    \[
    D(A_2) = \{ u \in H_{\Div}(M,L^2(\partial M)) \ : \ v \mapsto a_2(u,v) \mbox{ is continuous with respect to the $L^2(M,\rho_0 \ \mathrm{d} x)$ norm}\}.
    \]
\end{corollary}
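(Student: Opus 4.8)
The plan is to deduce the corollary entirely from Lemma~\ref{lem:a2} by invoking the standard theory of Hilbert (Gelfand) triples, as in \cite[Chapter VI.3.2.5]{dautray1999mathematical} together with the first representation theorem for closed symmetric sesquilinear forms. Write $V = H_{\Div}(M,L^2(\partial M))$ for brevity. The first assertion requires nothing beyond the boundedness statement in Lemma~\ref{lem:a2}: for each fixed $u \in V$ the map $v \mapsto a_2(v,u)$ is a bounded antilinear functional on $V$ with norm $\lesssim \|u\|_V$, so it defines an element of $V'$, and $u \mapsto A_2 u := a_2(\cdot,u)$ is then the claimed continuous operator $V \to V'$.

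For the self-adjointness statement, I would first verify that $(V, H, V')$ is an honest Hilbert triple, i.e.\ that $V \hookrightarrow H$ is continuous with dense range. Continuity is immediate from the two definitions of inner product, since $\|u\|_H \le \|u\|_V$; density holds because $V$ is by construction the closure of $C^\infty(M) \cap H$ in the $V$-norm, and $C^\infty(M) \cap H$ is dense in $H = L^2(M,\rho_0\,\mathrm{d}x)^3$ under the standing hypotheses on $\rho_0$. Next, with $\beta$ as in Lemma~\ref{lem:a2}, I would pass to the shifted form $a_2^\beta(v,u) := a_2(v,u) + \beta \langle v, u\rangle_H$, which by \eqref{eq:coerc} is coercive, $a_2^\beta(u,u) \ge \alpha\|u\|_V^2$, is bounded on $V \times V$, and is Hermitian because $a_2$ is (the symmetry of $a_2$, including its boundary and gravitational terms, being exactly what was recorded just below \eqref{eq:a2}). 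Lax--Milgram then yields an isomorphism $\mathcal{A}_\beta \colon V \to V'$ representing $a_2^\beta$, and the operator $A_2 + \beta \Id$ is precisely the part of $\mathcal{A}_\beta$ in $H$: its domain is $\{ u \in V : \mathcal{A}_\beta u \in H \}$, and the condition $\mathcal{A}_\beta u \in H$ is equivalent to $v \mapsto a_2^\beta(v,u)$ being continuous for the $H$-norm, which --- the shift being trivially $H$-continuous --- is equivalent to $v \mapsto a_2(v,u)$ being $H$-continuous; this is exactly the set $D(A_2)$ in the statement. The first representation theorem then gives that this operator is self-adjoint on $H$ and bounded below by $-\beta$, so subtracting $\beta \Id$ shows that $A_2$ with domain $D(A_2)$ is self-adjoint (and bounded below). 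Alternatively, self-adjointness follows directly: $A_2 + \beta\Id \colon D(A_2) \to H$ is a bijection by the above, $A_2$ is symmetric by the Hermitian symmetry of $a_2$, and a symmetric operator that becomes onto $H$ after a real shift is self-adjoint.

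I do not expect a genuine obstacle here, since Lemma~\ref{lem:a2} already provides the two analytic inputs --- boundedness and a G\aa rding inequality for a Hermitian form on $V$ --- that drive the abstract argument. The points that will require the most care are: confirming that $C^\infty(M) \cap H$, hence $V$, is dense in the weighted space $H$ so that the triple is nondegenerate and the pivot identification $H \hookrightarrow V'$ is legitimate; keeping the conventions for the antilinear dual pairing consistent throughout; and unwinding the abstract ``part in $H$'' of $\mathcal{A}_\beta$ to confirm that it coincides with the explicitly written domain $D(A_2)$. Each of these is routine once the setup is fixed.
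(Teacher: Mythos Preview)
Your proposal is correct and follows exactly the approach the paper takes: the paper does not give a separate proof of this corollary at all, but simply observes that Lemma~\ref{lem:a2} makes $(H_{\Div}(M,L^2(\partial M)),H,a_2)$ a Hilbert triple and cites \cite[Chapter VI.3.2.5]{dautray1999mathematical} for the conclusions. Your write-up supplies the details of that citation---the shift to a coercive form, Lax--Milgram, and the identification of the part in $H$---and is faithful to the intended argument.
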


By analysing \eqref{eq:a2} and the equation before, we can say more about $D(A_2)$ if we make additional regularity assumptions about the parameters. This is done in the next corollary.

\begin{corollary} \label{cor:DA2}
    In addition to the hypotheses of Lemma \ref{lem:a2}, assume that $c^2 \in C^1(M)$ and $\rho_0 \in C^2(M)$. Then
    \[
    D(A_2) = \{ u \in H_{\Div}(M,L^2(\partial M)) \: | \: \nabla [c^2 \,
             (\nabla \cdot (\rho_0 \displacement{})
     - \tilde{s} \cdot  \displacement{})] \in L^2(M), \ [\nabla 
    \cdot u]|_{\partial M} = 0 \}.
    \]
\end{corollary}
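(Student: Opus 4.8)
The plan is to unwind the abstract description of $D(A_2)$ from Corollary~\ref{cor:Hilbert3}. Since $a_2$ is symmetric, $u\in D(A_2)$ exactly when $u\in H_{\Div}(M,L^2(\partial M))$ and there is $f\in H$ with $a_2(v,u)=\langle v,f\rangle_H$ for all $v$, in which case $A_2 u=f$; and because $C^\infty(M)$ is dense in $H_{\Div}(M,L^2(\partial M))$ with $\|\cdot\|_H\le\|\cdot\|_{H_{\Div}(M,L^2(\partial M))}$, it suffices to test against $v\in C^\infty(M)$. Write $Q:=c^2(\nabla\cdot(\rho_0 u)-\tilde s\cdot u)$; the relations $\kappa=c^2\rho_0$ and $\tilde s=\nabla\rho_0-g_0'\rho_0^2/\kappa$ give the pointwise identity $Q=\kappa\,\nabla\cdot u+\rho_0\,(g_0'\cdot u)$, which is precisely the combination occurring under the first two volume integrals and the boundary integral in the un-reduced expression for $\langle v,A_2 u\rangle_H$ displayed just before \eqref{eq:a2}. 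The extra regularity ($c^2\in C^1(M)$, $\rho_0\in C^2(M)$, $c^2>0$ on compact $M$, so $\rho_0$ is bounded above and below by positive constants) guarantees $\tilde s\in C(M)$, hence $Q\in L^2(M)$, $\nabla\cdot(\rho_0 u)\in L^2(M)$, and $\nabla S(\rho_0 u)\in L^2$ by the estimate in the proof of Lemma~\ref{lem:a2}, for every $u\in H_{\Div}(M,L^2(\partial M))$.

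For the inclusion ``$\subseteq$'', I would first test $a_2(v,u)=\langle v,f\rangle_H$ against $v\in C^\infty(M)$ supported away from $\partial M$. All boundary integrals drop out, and, identifying the gravitational term with an $H$-inner product as in the passage to \eqref{eq:a2}, one reads off $\int_M(\nabla\cdot\bar v)\,Q\,\dd x=\int_M\bar v\cdot G\,\dd x$ for a fixed $G\in L^2(M)$, whence $\nabla Q=-G\in L^2(M)$ in the distributional sense in the interior; thus $\nabla[c^2(\nabla\cdot(\rho_0 u)-\tilde s\cdot u)]\in L^2(M)$ and $Q\in H^1(M)$. Testing now against arbitrary $v\in C^\infty(M)$ and integrating $\int_M(\nabla\cdot\bar v)Q\,\dd x$ by parts (legitimate since $Q\in H^1(M)$), and using $g_0'\parallel n$ on $\partial M$ to replace $(g_0'\cdot u)|_{\partial M}$ by $(g_0'\cdot n)(n\cdot u)\in L^2(\partial M)$ (well defined because $u\in H_{\Div}(M,L^2(\partial M))$), one obtains
\[
a_2(v,u)=\langle v,A_2 u\rangle_H+\int_{\partial M}(n\cdot\bar v)\,\bigl(Q|_{\partial M}-\rho_0\,(g_0'\cdot n)(n\cdot u)\bigr)\,\dd s,
\]
with $A_2 u:=\rho_0^{-1}\bigl(-\nabla Q+g_0'\,\nabla\cdot(\rho_0 u)\bigr)+\nabla S(\rho_0 u)\in H$ as in \eqref{eq:A2x}. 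As the left side is $\langle v,f\rangle_H$, the functional $v\mapsto\int_{\partial M}(n\cdot\bar v)h\,\dd s$ with $h:=Q|_{\partial M}-\rho_0(g_0'\cdot n)(n\cdot u)\in L^2(\partial M)$ is bounded by a constant times $\|v\|_H$ on $C^\infty(M)$; choosing $v$ concentrated in a $\delta$-collar of $\partial M$ with prescribed normal trace and sending $\delta\to0$ forces $h=0$. Since $h=\kappa\,[\nabla\cdot u]|_{\partial M}$ (this being the definition of the trace of $\nabla\cdot u$ for such $u$, agreeing with the classical one when $u$ is smooth) and $\kappa>0$, this is exactly $[\nabla\cdot u]|_{\partial M}=0$, and moreover $f=A_2 u$.

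For ``$\supseteq$'', let $u\in H_{\Div}(M,L^2(\partial M))$ satisfy $\nabla[c^2(\nabla\cdot(\rho_0 u)-\tilde s\cdot u)]\in L^2(M)$ — so $Q\in H^1(M)$ and $Q|_{\partial M}$ is defined — and $[\nabla\cdot u]|_{\partial M}=0$, i.e.\ $Q|_{\partial M}=\rho_0(g_0'\cdot n)(n\cdot u)$; this is meaningful because the hydrostatic-equilibrium hypotheses force $\tilde s\parallel\nabla\rho_0\parallel g_0'\parallel n$ on $\partial M$, so every trace that occurs is a multiple of $n\cdot u$ or of $Q|_{\partial M}$. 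Running the above computation in reverse, for every $v\in C^\infty(M)$ the boundary term cancels and $a_2(v,u)=\langle v,f\rangle_H$ with $f:=\rho_0^{-1}\bigl(-\nabla Q+g_0'\,\nabla\cdot(\rho_0 u)\bigr)+\nabla S(\rho_0 u)\in H$. Since $a_2(\cdot,u)$ and $\langle\cdot,f\rangle_H$ are both continuous on $H_{\Div}(M,L^2(\partial M))$ and $C^\infty(M)$ is dense there, this identity passes to all $v\in H_{\Div}(M,L^2(\partial M))$, so $v\mapsto a_2(v,u)$ is $H$-continuous and $u\in D(A_2)$.

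The hard part is the boundary analysis, because at the available regularity neither $u|_{\partial M}$ nor $(\nabla\cdot u)|_{\partial M}$ makes sense a priori for $u\in H_{\Div}(M,L^2(\partial M))$. One must (i) bootstrap $Q=c^2(\nabla\cdot(\rho_0 u)-\tilde s\cdot u)$ to $H^1(M)$ so that $Q|_{\partial M}$ exists and the integration by parts is legitimate, and (ii) use the parallelism of $g_0'$, $\nabla\rho_0$, $\tilde s$ with $n$ on $\partial M$ so that the a priori ill-defined $(\nabla\cdot u)|_{\partial M}$ is recovered from the controlled combination $\kappa^{-1}\bigl(Q|_{\partial M}-\rho_0(g_0'\cdot n)(n\cdot u)\bigr)$, the term $n\cdot u$ being the only boundary trace of $u$ that $H_{\Div}(M,L^2(\partial M))$ supplies. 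The collar test-function argument turning $H$-boundedness of the boundary functional into its vanishing is the decisive step in the forward inclusion.
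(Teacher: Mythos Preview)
The paper does not actually supply a proof of this corollary: it is stated immediately after the sentence ``By analysing \eqref{eq:a2} and the equation before, we can say more about $D(A_2)$\ldots'', and left at that. Your argument is correct and is precisely the computation the paper is alluding to --- you undo the integration by parts that produced \eqref{eq:a2} from the preceding display, using the identity $Q=\kappa\,\nabla\cdot u+\rho_0\,g_0'\cdot u$ to recognise the quantity whose gradient must lie in $L^2$, and then read off the boundary condition from the residual surface term. The bootstrap to $Q\in H^1(M)$ via compactly supported test functions, the use of $g_0'\parallel n$ on $\partial M$ to express every boundary trace through $n\cdot u\in L^2(\partial M)$, and the collar argument forcing the surface functional to vanish are all sound and are exactly the ingredients needed to make the paper's one-line justification rigorous. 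One minor remark: your observation that $[\nabla\cdot u]|_{\partial M}$ has to be \emph{defined} as $\kappa^{-1}(Q|_{\partial M}-\rho_0(g_0'\cdot n)(n\cdot u))$ at this regularity level is a point the paper glosses over, and it is worth keeping explicit.
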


\section{Decompositions of Hilbert space and spectrum}


Our main goal is to characterise the spectrum of the operator pencil $L(\lambda)$ given by \eqref{eq:Llambda}. To this end, let us recall the definition of the resolvent set, spectrum and essential spectrum as given, for example, in \cite{moller2015spectral}.

\begin{definition}\label{def:spect}
Let $\Omega \subset \mathbb{C}$ be an open set and for each $\lambda \in \Omega$ suppose $T(\lambda)$ is a closed linear operator from $D(T(\lambda)) \subset H$ to $H$. The set of $\lambda \in \Omega$ such that $T(\lambda)$ is bijective on its domain with bounded inverse $T(\lambda)^{-1}:H\rightarrow H$ is the resolvent set of $T$ which we will notate as $\rho(T)$. The complement of the resolvent set is the spectrum $\sigma(T)$, and the set of $\lambda \in \Omega$ such that $T(\lambda)$ is not a Fredholm operator\tred{\footnote{\tred{A closed linear operator between Hilbert spaces is called Fredholm if its nullspace is finite dimensional and its range has finite codimension. Thus, at points in the essential spectrum there may be infinitely many linearly independent eigenfunctions.}}} is the essential spectrum $\sigma_{ess}(T)$.
\end{definition}

\noindent
Note that by Lemma \ref{lem:a2} and the Lax-Milgram Theorem (see for example \cite[Theorem 7, p 368]{dautray1999mathematical}), for $\mathrm{Re}\ \lambda^2 > \beta$ we know that $L(\lambda)$ has bounded inverse and so such $\lambda$ are in the resolvent set $\rho(L)$. However, as we will see below after considering an appropriate decomposition of $H$, $L(\lambda)^{-1}$ is not compact meaning that the analytic Fredholm theory cannot be applied as in, for example, \cite[Lemma 1.2.1, p 7]{moller2015spectral} and the essential spectrum $\sigma_{ess}(L)$ is not empty. 


We now develop an orthogonal decomposition generalizing the Helmholtz decomposition,
\begin{equation}\label{eq:Hdecomp}
   H =  H_1 \oplus H_2 ,
\end{equation}
with corresponding projections,
\[
   \pi_{1,2} :\ H \to H_{1,2} ,
\]
with the goal to extract the part of the point spectrum associated
with the acoustic normal modes and characterize the essential spectrum. The construction of the decomposition will entail the introduction of a space $E_1$ such that the injection of $E_1$ into $H_1$ is compact.

We introduce the operator,
\begin{equation} \label{eq:Tdef}
    T u :=  \nabla \cdot (\rho_0 u) - \tilde{s} \cdot u
       = \rho_0 [\nabla \cdot u + \rho_0 \kappa^{-1} g_0' \cdot u] ,\quad
     D(T) = H_{\Div,0}(M) = \{ u \in H_{\Div}(M) \ : \ u \cdot n|_{\partial M} = 0\} ,
\end{equation}
which sets the dynamic pressure to zero (and induces the so-called \textit{anelastic} approximation). The adjoint, $T^*$, of $T$ is given by
\begin{equation} \label{eq:T*}
   T^* \varphi = -(\nabla (\rho_0 \varphi)
                        + \tilde{s} \varphi)
\end{equation}
with $D(T^*) = H^1(M)$. Operators $T$, $T^*$, have the following properties.

\begin{lemma} \label{lem:T-compact}
Assume the hypotheses of Lemma \ref{lem:a2}. Then $\operatorname{Ran}(T^*)$ is closed in $H$ and $H = \operatorname{Ran}(T^*)\oplus \operatorname{Ker}(T)$. Moreover, the map $\Pi: H \rightarrow H^1(M)$ taking $u$ to the unique minimum norm $\varphi \in H^1(M)$ satisfying $u = T^* \varphi + u_2$ for $u_2 \in \operatorname{Ker}(T)$ is continuous. Finally, the injection $\operatorname{Ran}(T^*) \, \cap \, D(T) \hookrightarrow H$, with the $H_{\Div}(M)$ topology on the domain, is compact.
\end{lemma}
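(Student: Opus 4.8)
The plan is to treat the four assertions in turn, using the Hilbert triple structure from Lemma~\ref{lem:a2} and standard facts about closed-range operators. First I would establish that $T$ is a closed, densely defined operator on $H$ with the indicated domain; this is routine since $T$ differs from the divergence-of-mass operator by the order-zero term $-\tilde s\cdot u$, and $H_{\Div,0}(M)$ is the natural domain making $u\cdot n|_{\partial M}=0$ meaningful (trace theorem for $H_{\Div}$). The key analytic input is an a priori estimate: for $\varphi\in H^1(M)$,
\[
   \|\varphi\|_{H^1(M)}^2 \leq C\left(\|T^*\varphi\|_H^2 + \|\varphi\|_H^2\right),
\]
which follows because $T^*\varphi = -(\nabla(\rho_0\varphi)+\tilde s\varphi) = -\rho_0\nabla\varphi - (\nabla\rho_0+\tilde s)\varphi$ and $\rho_0$ is bounded below on the compact manifold $M$, so the top-order part $-\rho_0\nabla\varphi$ controls $\|\nabla\varphi\|_H$ modulo the lower-order term $(\nabla\rho_0+\tilde s)\varphi$ absorbed into $\|\varphi\|_H$. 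Combined with the compact (Rellich) embedding $H^1(M)\hookrightarrow\hookrightarrow H$, this a priori estimate gives that $T^*$ is semi-Fredholm, hence has closed range; then $H = \operatorname{Ran}(T^*)\oplus\operatorname{Ker}((T^*)^*) = \operatorname{Ran}(T^*)\oplus\operatorname{Ker}(T)$ by the closed range theorem, using $(T^*)^* = T$ (with $T$ closed). The orthogonal projection onto $\operatorname{Ran}(T^*)$ composed with the inverse of $T^*$ restricted to $(\operatorname{Ker} T^*)^\perp$, which is bounded into $H^1(M)$ again by the a priori estimate once the finite-dimensional kernel is quotiented out, yields the continuous map $\Pi$; the ``minimum norm'' choice is exactly the component of $\varphi$ orthogonal to $\operatorname{Ker}(T^*)$.

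For the final compactness assertion I would argue as follows. An element $u\in\operatorname{Ran}(T^*)\cap D(T)$ carries two pieces of control: membership in $D(T)=H_{\Div,0}(M)$ gives $u\in H_{\Div}(M)$ with vanishing normal trace, and membership in $\operatorname{Ran}(T^*)$ means $u = T^*\varphi$ with $\varphi = \Pi u\in H^1(M)$ and $\|\varphi\|_{H^1}\leq C\|u\|_{H_{\Div}(M)}$. Writing $u = -\rho_0\nabla\varphi - (\nabla\rho_0+\tilde s)\varphi$, one sees the curl of $u$ is, up to bounded order-zero terms, $-\nabla\rho_0\times\nabla\varphi$ plus terms of the form (derivative of a bounded function)$\times\varphi$; since $\nabla\rho_0$ and $g_0'$ are parallel and (by hydrostatic equilibrium) $\nabla\rho_0$ is itself a gradient-like direction, $\nabla\rho_0\times\nabla\varphi$ is controlled in $H$ by $\|\varphi\|_{H^1}$, hence by $\|u\|_{H_{\Div}}$. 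Thus on $\operatorname{Ran}(T^*)\cap D(T)$ we control $\nabla\cdot u$, $\nabla\times u$ in $L^2(M)$ and $u\cdot n$ on $\partial M$ — exactly the hypotheses of the div-curl (Gaffney/Weck) compactness theorem, so the embedding into $H=L^2(M)^3$ is compact. Alternatively, and perhaps more cleanly, bootstrap: $u=T^*\varphi$ with $\varphi\in H^1$, and the equation $Tu = T T^*\varphi\in L^2$ (it is in $L^2$ because $u\in D(T)$) is a second-order elliptic equation for $\varphi$ with the Neumann-type boundary condition coming from $u\cdot n|_{\partial M}=0$; elliptic regularity then bounds $\varphi$ in $H^2(M)$ by $\|u\|_{H_{\Div}}$, so $u=T^*\varphi$ is bounded in $H^1(M)^3$, and Rellich finishes.

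The main obstacle I anticipate is the compactness step — specifically, verifying that $TT^*$ is genuinely (second-order) elliptic with a coercive, Lopatinskii-satisfying boundary condition, and handling the fact that $T^*$ has a nontrivial kernel direction structure because $\rho_0\nabla\varphi$ is only ``morally'' the full gradient. One must check that the principal symbol of $TT^*$ is $\rho_0^2|\xi|^2$ (nondegenerate since $c^2>0$, equivalently $\kappa>0$, bounds $\rho_0$ away from $0$ — though strictly this needs $\rho_0>0$, which follows from $c^2>0$ and $\kappa>0$ only if $\kappa$ is bounded; I would add that $\rho_0$ is bounded below, as is implicit in the model) and that the induced boundary condition is the standard Neumann condition, for which ellipticity in the sense of Agmon--Douglis--Nirenberg is classical. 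The parallelism hypotheses of Lemma~\ref{lem:a2} are what make the lower-order/curl terms cooperate; I would lean on \cite{de2015system} where an analogous but more involved computation is carried out, adapting it to the present simpler operator $T$.
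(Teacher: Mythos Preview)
Your proposal is correct and follows essentially the same strategy as the paper: both derive the closed range of $T^*$ from the a priori estimate $\|\varphi\|_{H^1}^2 \lesssim \|T^*\varphi\|_H^2 + \|\varphi\|_{L^2}^2$ combined with Rellich (the paper phrases this as coercivity of the form $\langle \mathcal{L}_{T^*}\varphi,\psi\rangle = \langle T^*\varphi, T^*\psi\rangle_H$ and runs a Cauchy-sequence argument, whereas you invoke semi-Fredholmness directly), and for the compactness step the paper simply cites Valette's Rellich--Kondrashov argument, which is your second alternative via elliptic regularity for $TT^*$. The only difference worth noting is that the paper writes $\Pi = \mathcal{L}_{T^*}^\dagger T$ explicitly using the pseudoinverse built from the eigendecomposition, a formula it reuses in the proof of the next lemma to compute the principal symbols of $\pi_1$ and $\pi_2$.
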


\begin{proof}
The decomposition $H =\overline{\operatorname{Ran}(T^*)} \oplus \operatorname{Ker}(T)$ is a general fact for a closed operator $T$ with dense domain. In this case,
$D(T) = H_{\Div,0}(M)$ is dense in $H$ as in \cite[Prop. 2, p. 67]{valette1987spectre} it is stated that $C_0^\infty \subset H_{\Div,0}(M)$ is dense in $H_{\Div}(M)$ and then $\overline{H_{\Div}(M)}=H$. The next part of the proof follows the method of \cite[Chap]{Taylor1}. To show that $\operatorname{Ran}(T^*)$ is closed, consider the operator $\mathcal{L}_{T^*}: H^1(M) \rightarrow H^1(M)^*$ defined by
\[
\langle \mathcal{L}_{T^*} \varphi,\psi \rangle = \langle T^* \varphi, T^* \psi \rangle_H \quad \forall \varphi, \ \psi \in H^1(M).
\]
It is possible to find a real constant $C$ sufficiently large so that
\[
\langle (\mathcal{L}_{T^*}+C) \varphi,\varphi \rangle = \langle T^* \varphi, T^* \varphi \rangle_H + C \langle \varphi,\varphi \rangle_{L^2(\rho_0 \ \mathrm{d} x)} \geq \tilde{C} \|\varphi\|_{H^1(M)}^2 \quad \forall \varphi \in H^1(M).
\]
Following the method of \cite[]{Taylor1}, we deduce that $\mathcal{L}_{T^*} + C$ is one-to-one and onto, and by considering the inverse of $\mathcal{L}_{T^*} + C$, which is a compact and self-adjoint operator, we find that there exists an orthonormal basis of eigenvectors on $H$ for $\mathcal{L}_{T^*}$ with discrete eigenvalues going to infinity. Since $\mathcal{L}_{T^*}$ is a non-negative operator by its definition, the corresponding eigenvalues must all be non-negative with the possibility that zero is an eigenvalue with finite multiplicity. Thus the kernel of $\mathcal{L}_{T^*}$ is finite dimensional, and note that it also coincides with kernel of $T^*$.

Now, suppose that $\{ y_n \}_{n=1}^\infty \subset \operatorname{Ran}(T^*)$ is a Cauchy sequence for $H$. Then there exists a sequence $\{\varphi_n \}_{n=1}^\infty \subset H^1(M)$ such that $y_n = T^* \varphi_n$ and without loss of generality we can assume that all $\varphi_n$ are orthogonal to the kernel of $\mathcal{L}_{T^*}$. By taking the smallest positive eigenvalue of $\mathcal{L}_{T^*}$, which is strictly greater than zero by the above argument, we therefore show that $\varphi_n$ is a Cauchy sequence in $L^2(\rho_0 \ \mathrm{d} x)$. From the formula \eqref{eq:T*} of $T^*$, as well as facts that $\rho_0 \in C^1(M)$ is positive and $\tilde{s} \in C(M)^3$, we have the inequality
\[
\|y_n - y_m\|^2_{H} + C \|\varphi_n - \varphi_m\|^2_{L^2(\rho_0 \ \mathrm{d} x)} \geq C \| \nabla (\varphi_n-\varphi_m) \|^2_{H} 
\]
for some $C>0$ possibly different from the one above. Therefore $\{\varphi_n\}_{n=1}^\infty$ is a Cauchy sequence in $H^1(M)$ and since $H^1(M)$ is complete it must converge. This implies that $\{y_n\}_{n=1}^\infty$ must also converge in $H$ to a point in $\operatorname{Ran}(T^*)$. Since $H$ is a complete space, this proves that $\operatorname{Ran}(T^*)$ is closed.

Using its eigendecomposition, we can find a pseudoinverse for $\mathcal{L}_{T^*}$ which we will denote $\mathcal{L}_{T^*}^{\dagger}: H^1(M)^* \rightarrow H^1(M)$. To prove the continuity of the map $\Pi$, note that, after extending $T$ to an operator from $H$ to $H^1(M)^*$ by duality,
\[
\Pi = \mathcal{L}_{T^*}^\dagger  T.
\]
That the injection, $\operatorname{Ran}(T^*) \, \cap \, D(T) \hookrightarrow H$, is compact follows exactly as in \cite[Prop 3d, p. 70]{valette1987spectre} where it was done for the case $g_0'=0 $ using the  Rellich-Kondrashov compactness theorem for Sobolev spaces. The proof is straightforwardly adapted to $g_0'\neq 0.$
\end{proof}

Given Lemma \ref{lem:T-compact}, we obtain the decomposition \eqref{eq:Hdecomp} by setting $H_1 = \operatorname{Ran}(T^*)$ and $H_2 = \operatorname{Ker}(T)$. This is a generalization of the Helmholtz decomposition which is required for our analysis. Using the proof of Lemma \ref{lem:T-compact}, the projection onto $\operatorname{Ran}(T^*)$, generalizing the notion of irrotational, is given by $T^* \mathcal{L}_{T^*}^\dagger T$ and the projection onto $\operatorname{Ker}(T)$, generalizing the notion of anelastic, is given by $I - T^* \mathcal{L}_{T^*}^\dagger T$. These formulae lead to the next lemma.


\begin{lemma}\label{lem:pi12} The orthogonal projection operators $\pi_1: H \rightarrow \mathrm{Ran}(T^*) = H_1$ and $\pi_2: H \rightarrow \mathrm{Ker}(T) = H_2$ are zero order pseudodifferential operators in the interior of $M$ with principal symbols given respectively by
\begin{equation} \label{eq:pisymbol}
\sigma_p(\pi_1) = \frac{\xi \xi^T}{|\xi|^2}, \quad \sigma_p(\pi_2) = \mathrm{Id} - \frac{\xi \xi^T}{|\xi|^2}.
\end{equation}

\end{lemma}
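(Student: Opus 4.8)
The plan is to compute the Schwartz kernels of $\pi_1$ and $\pi_2$ in the interior of $M$ (away from $\partial M$) by exploiting the explicit formula $\pi_1 = T^* \mathcal{L}_{T^*}^\dagger T$ from Lemma~\ref{lem:T-compact}. First I would observe that, microlocally in the interior, boundary terms play no role, so we may work on a neighborhood embedded in $\mathbb{R}^3$ and treat $T$, $T^*$ as differential operators with the formulas in \eqref{eq:Tdef} and \eqref{eq:T*}. Reading off leading symbols: $T u = \rho_0 \nabla\cdot u + (\text{order }0)$, so $\sigma_p(T) = \ii\,\rho_0\,\xi^T$ acting on vectors (a $1\times 3$ symbol); dually $\sigma_p(T^*) = -\ii\,\rho_0\,\xi$ (a $3\times 1$ symbol). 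Hence $T^* T$ is a second order operator with principal symbol $\rho_0^2 |\xi|^2$ — a scalar elliptic operator on functions — while $\mathcal{L}_{T^*}$, defined by $\langle \mathcal{L}_{T^*}\varphi,\psi\rangle = \langle T^*\varphi, T^*\psi\rangle_H$, is exactly $T$ (extended to $H$) composed with $T^*$ in the appropriate duality, so it too has principal symbol $\rho_0^2|\xi|^2$ (using that the $H$-inner product carries the $\rho_0\,dx$ weight; the weight cancels consistently between $T^*$ and the pairing).

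Next I would argue that $\mathcal{L}_{T^*}$, being elliptic of order $2$ with a finite-dimensional kernel, admits a parametrix, and its pseudoinverse $\mathcal{L}_{T^*}^\dagger$ differs from any such parametrix by a smoothing operator (the difference is controlled by the finite-rank spectral projection onto $\ker \mathcal{L}_{T^*}$, whose eigenfunctions are smooth by elliptic regularity). Therefore $\mathcal{L}_{T^*}^\dagger$ is a pseudodifferential operator of order $-2$ in the interior with principal symbol $\rho_0^{-2}|\xi|^{-2}$. Composing, $\pi_1 = T^* \mathcal{L}_{T^*}^\dagger T$ is a pseudodifferential operator of order $0+(-2)+1 = \ldots$ wait, orders are $1 + (-2) + 1 = 0$, with principal symbol obtained by multiplying the principal symbols (legitimate since the leading terms compose without lower-order interference for the top symbol of a product):
\[
\sigma_p(\pi_1) = (-\ii\,\rho_0\,\xi)\,\big(\rho_0^{-2}|\xi|^{-2}\big)\,(\ii\,\rho_0\,\xi^T) = \frac{\xi\xi^T}{|\xi|^2}.
\]
Then $\sigma_p(\pi_2) = \mathrm{Id} - \sigma_p(\pi_1) = \mathrm{Id} - \xi\xi^T/|\xi|^2$ since $\pi_2 = I - \pi_1$ and the identity has principal symbol $\mathrm{Id}$ (at order $0$).

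The main obstacle I anticipate is the rigorous justification that $\mathcal{L}_{T^*}^\dagger$ is genuinely a pseudodifferential operator of order $-2$ \emph{in the interior}, despite $\mathcal{L}_{T^*}$ being defined globally via a boundary-sensitive sesquilinear form on $H^1(M)$. The resolution is microlocal: for any interior cutoff $\chi$, $\chi \mathcal{L}_{T^*}^\dagger \chi$ agrees modulo smoothing with $\chi Q \chi$ where $Q$ is an interior parametrix of the elliptic operator $\rho_0^2(-\Delta) + \text{l.o.t.}$ representing $\mathcal{L}_{T^*}$ away from $\partial M$; this uses that $\mathcal{L}_{T^*}^\dagger$ inverts $\mathcal{L}_{T^*}$ modulo the finite-rank smooth projection and that elliptic regularity propagates no singularities from the boundary into the interior. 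A secondary care-point is checking that the density weight $\rho_0$ in $H = L^2(\rho_0\,dx)$ does not alter the principal symbol — it does not, because conjugation by multiplication by a positive smooth function is a $0$th order ps.d.o.\ with scalar principal symbol $1$, hence leaves $\sigma_p(\pi_{1,2})$ unchanged. With these points settled, \eqref{eq:pisymbol} follows.
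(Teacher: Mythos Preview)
Your proposal is correct and follows essentially the same route as the paper: both express $\pi_1$ as $T^*$ composed with an order $-2$ inverse of the elliptic operator $TT^*$ (equivalently $\mathcal{L}_{T^*}$) composed with $T$, then read off the principal symbol by the composition calculus and obtain $\sigma_p(\pi_2)$ from $\pi_2 = I - \pi_1$. The only cosmetic difference is that the paper directly invokes a parametrix $(TT^*)^{-1}$ and writes $\pi_1(u) = T^*(TT^*)^{-1}Tu - T^*Ku$ with $K$ smoothing, whereas you start from the exact pseudoinverse $\mathcal{L}_{T^*}^\dagger$ and argue it agrees with the parametrix modulo smoothing in the interior; your additional remarks on the $\rho_0$-weight and the boundary-insensitivity of the interior symbol are correct and make explicit points the paper leaves implicit.
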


\begin{proof}
For $u \in H$, suppose that $u_1 = T^* \Pi(u) = \pi_1(u)$ and $u_2 = \pi_2(u)$. Thus
\[
u = T^* \Pi(u) + u_2 \Rightarrow Tu = TT^* \Pi(u).
\]
$TT^*$ is an elliptic second order differential operator and as such has a pseudodifferential parametrix on the interior of $M$, which is an order $-2$ pseudodifferential operator $(TT^*)^{-1}$ there such that
\begin{equation}\label{eq:TT*inv}
(TT^*)^{-1} T u = \Pi(u) +  Ku
\end{equation}
where $K$ is a smoothing operator in the interior of $M$. Therefore
\[
\pi_1(u) = T^* (TT^*)^{-1} T u - T^* Ku.
\]
This proves that $\pi_1$ is a zero order pseudodifferential operator in the interior of $M$. By looking at the prinicpal symbols of $T$ and $T^*$ and using the composition calculus we conclude that $\sigma_p(\pi_1)$ is as given in \eqref{eq:pisymbol}. Since $\pi_2 = \mathrm{Id} - \pi_1$, the conclusion for $\sigma_p(\pi_2)$ follows as well.
\end{proof}

Our next task is to decompose the operator $A_2$. Towards this end, we introduce
\[
E_1 = D(A_2) \cap \operatorname{Ran}(T^*)
\]
and
\[
E_2 = D(A_2) \cap \operatorname{Ker}(T),
\]
whence\rcolor{
\[
   E_1 \oplus E_2\subset D(A_2).
\]
For the opposite inclusion, consider $u \in D(A_2)$. Since $\pi_2(u) \in \operatorname{Ker}(T)$, we see that $\nabla c^2Tu = 0$ and, since $n$ is parallel with $g_0'$ on $\partial M$,
\[
0 = \rho_0 \kappa^{-1} g_0' \cdot u |_{\partial M} = \nabla \cdot u|_{\partial M}.
\]
Therefore, referring to Corollary \ref{cor:DA2}, $\pi_2(u) \in D(A_2)$ and so
\[
E_1 \oplus E_2= D(A_2).
\]
We} now aim to introduce a corresponding block decomposition of the operator $L(\lambda)$ introduced in \eqref{eq:Llambda}. Indeed, let us define the component operators by
\[
L_{ij}(\lambda) = \pi_i L(\lambda) \pi^*_j
\]
for $i$, $j = 1$ and $2$. Considering $D(A_2)$ in Corollary \ref{cor:DA2} and noting that $\operatorname{Ker}(T) \subset D(A_2)$, we see that
\[
D(L_{i1}(\lambda)) = E_1, \quad D(L_{i2}(\lambda)) = \operatorname{Ker}(T)
\]
for $i = 1$ and $2$. With these operators, we see that $L(\lambda)$ is related to these component operators by
\[
L(\lambda) = 
\begin{pmatrix}
\pi_1^* & \pi_2^*
\end{pmatrix}
\begin{pmatrix} L_{11}(\lambda) & L_{12}(\lambda) \\
L_{21}(\lambda) & L_{22}(\lambda)
\end{pmatrix}
\begin{pmatrix}
\pi_1 \\ \pi_2
\end{pmatrix}
\]
and thus the resolvent set, spectrum and essential spectrum of $L(\lambda)$ is equivalent to the same for the block matrix on the right side of the equation which we label as
\[
\mathcal{L}(\lambda) = \begin{pmatrix} L_{11}(\lambda) & L_{12}(\lambda) \\
L_{21}(\lambda) & L_{22}(\lambda)
\end{pmatrix}, \quad D(\mathcal{L}(\lambda)) = E_1 \oplus \operatorname{Ker}(T).
\]
In the next Proposition we summarise some of the properties of the component operators.
\begin{proposition} \label{prop:Lij}
Suppose that $g_0' \in C(M)$ and $\rho_0 \in C^1(M)$. Then the operators $L_{12}(\lambda): \operatorname{Ker}(T) \rightarrow \operatorname{Ran}(T^*)$ and $L_{22}(\lambda): \operatorname{Ker}(T) \rightarrow \operatorname{Ker}(T)$ are bounded. The operator $L_{21}(\lambda)$ with domain $E_1$ is closable with closure a bounded operator from $\operatorname{Ran}(T^*)$ to $\operatorname{Ker}(T)$. Finally, $L_{11}(\lambda)$ with domain $E_1$ is a Fredholm operator with index $0$ and discrete spectrum consisting of eigenvalues which have finite multiplicity. Further, $L_{11}(\lambda)^{-1}$ is compact on the resolvent set of $L_{11}$.
\end{proposition}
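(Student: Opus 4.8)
The plan is to exploit that on $\operatorname{Ker}(T)$ the second--order part of $A_2$ disappears, so that three of the four blocks are bounded and $L_{11}(\lambda)$ reduces to a bounded perturbation of a form operator with \emph{compact} resolvent. First I would handle $L_{22}$ and $L_{12}$: rewriting \eqref{eq:A2x} with the operator $T$ of \eqref{eq:Tdef} gives, for $u\in D(A_2)$,
\[
\rho_0 A_2(u) = -\nabla\bigl[c^2\, Tu\bigr] + \bigl(\nabla\cdot(\rho_0 u)\bigr)\, g_0' + \rho_0\,\nabla S(\rho_0 u).
\]
For $u\in\operatorname{Ker}(T)$ the first term vanishes and $\nabla\cdot(\rho_0 u)=\tilde{s}\cdot u$, so $A_2 u = \rho_0^{-1}(\tilde{s}\cdot u)\, g_0' + \nabla S(\rho_0 u)$. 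Since $\rho_0$ is bounded away from $0$, $g_0',\tilde{s}\in C(M)$, and $u\mapsto\nabla S(\rho_0 u)$ is bounded on $H$ (the Riesz--transform/Newtonian--potential bound already used in the proof of Lemma~\ref{lem:a2}), the restriction $A_2|_{\operatorname{Ker}(T)}\colon\operatorname{Ker}(T)\to H$ is bounded; adding the bounded $F(\lambda)$ and composing with the bounded projections $\pi_1,\pi_2$ gives boundedness of $L_{12}(\lambda)\colon\operatorname{Ker}(T)\to\operatorname{Ran}(T^*)$ and $L_{22}(\lambda)\colon\operatorname{Ker}(T)\to\operatorname{Ker}(T)$.

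For $L_{21}(\lambda)$ I would work with the form \eqref{eq:a2}. Let $v\in\operatorname{Ker}(T)$ and $u\in D(A_2)$. On $\operatorname{Ker}(T)$ one has $\kappa\,\nabla\cdot v + \rho_0\, g_0'\cdot v = 0$, i.e. $\nabla\cdot\overline v = -\rho_0\kappa^{-1}(g_0'\cdot\overline v)$, which collapses the first two integrands of \eqref{eq:a2} to $-(g_0'\cdot\overline v)\,Tu$; writing $\nabla\cdot(\rho_0 u)=Tu+\tilde{s}\cdot u$ in the third integrand then leaves $\int_M (g_0'\cdot\overline v)(\tilde{s}\cdot u)\,\dd x$ overall. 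Moreover $v\in\operatorname{Ker}(T)\subset H_{\Div,0}(M)$ has $n\cdot v|_{\partial M}=0$, so the boundary integral drops and
\[
a_2(v,u)=\int_M (g_0'\cdot\overline v)(\tilde{s}\cdot u)\,\dd x - \frac{1}{4\pi G}\int_{\mathbb{R}^3}\nabla S(\rho_0\overline v)\cdot\nabla S(\rho_0 u)\,\dd x,
\]
so $|a_2(v,u)|\le C\|v\|_H\|u\|_H$ for all such $v,u$. Since $\langle v,L_{21}(\lambda)u\rangle_H=\langle v,L(\lambda)u\rangle_H=\langle v,F(\lambda)u\rangle_H+a_2(v,u)$ for $v\in\operatorname{Ker}(T)$, $u\in E_1$, this yields $\|L_{21}(\lambda)u\|_H\le C(\lambda)\|u\|_H$; as $D(A_2)$ is dense in $H$ and $\pi_1$ continuous, $E_1=\pi_1(D(A_2))$ is dense in $\operatorname{Ran}(T^*)$, so $L_{21}(\lambda)$ is closable with bounded closure $\operatorname{Ran}(T^*)\to\operatorname{Ker}(T)$ (symmetry of $a_2$ re-proves boundedness of $L_{12}$).

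Finally, for $L_{11}(\lambda)=\pi_1 A_2\pi_1^* + \pi_1 F(\lambda)\pi_1^*$: the first summand is the self--adjoint operator $A_{2,1}$ on $H_1$ associated with the $H$--coercive form $a_2$ restricted to $\widetilde H_1:=\operatorname{Ran}(T^*)\cap H_{\Div}(M,L^2(\partial M))$ (coercivity inherited from Lemma~\ref{lem:a2}), and arguing as in Corollary~\ref{cor:DA2} its domain is $E_1$. The crucial point is that $\widetilde H_1\hookrightarrow H_1$ is \emph{compact}, in contrast to $H_{\Div}(M,L^2(\partial M))\hookrightarrow H$, whose noncompactness is exactly what produces the essential spectrum. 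To prove it, for a bounded sequence $\{u_n\}\subset\widetilde H_1$ write $u_n=T^*\varphi_n$ with $\varphi_n\perp\operatorname{Ker}(\mathcal{L}_{T^*})$; by the proof of Lemma~\ref{lem:T-compact}, $\|\varphi_n\|_{H^1(M)}\lesssim\|u_n\|_H$, while $\|TT^*\varphi_n\|_H=\|Tu_n\|_H\lesssim\|u_n\|_{H_{\Div}(M)}$, both bounded. Since $TT^*$ is second order with positive--definite principal symbol and the boundary condition built into $\mathcal{L}_{T^*}$ is of Neumann/Robin type (hence Shapiro--Lopatinskii regular), elliptic regularity up to $\partial M$ bounds $\{\varphi_n\}$ in $H^2(M)$; Rellich then extracts a subsequence convergent in $H^1(M)$, whence $u_n=T^*\varphi_n$ converges in $H$. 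Consequently $A_{2,1}$ has compact resolvent and a discrete spectrum of finite--multiplicity eigenvalues. Writing $L_{11}(\lambda)=(A_{2,1}+\beta+1)\bigl(\Id+(A_{2,1}+\beta+1)^{-1}(\pi_1 F(\lambda)\pi_1^*-\beta-1)\bigr)$ exhibits $L_{11}(\lambda)$ as an invertible operator composed with $\Id$ plus a compact operator depending analytically on $\lambda$, hence Fredholm of index $0$; the estimate $\mathrm{Re}\,\langle u,L_{11}(\lambda)u\rangle_H=\mathrm{Re}\,\langle u,L(\lambda)u\rangle_H\ge\alpha\|u\|_{\widetilde H_1}^2+(\mathrm{Re}\,\lambda^2-\beta-2|\Omega||\lambda|)\|u\|_H^2$ shows $L_{11}(\lambda)$ is invertible when $\mathrm{Re}\,\lambda^2-2|\Omega||\lambda|$ is large, so the analytic Fredholm theorem gives a discrete spectrum of finite--multiplicity eigenvalues, and on the resolvent set $L_{11}(\lambda)^{-1}=(\Id+\text{compact})^{-1}(A_{2,1}+\beta+1)^{-1}$ is compact. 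The main obstacle is precisely this compact--embedding step --- i.e. confirming that the boundary condition implicit in the form definition of $\mathcal{L}_{T^*}$ is a regular (Lopatinskii) condition so that $H^2$ elliptic regularity up to $\partial M$ applies --- after which the remaining work is routine Hilbert--triple and analytic--Fredholm bookkeeping together with the block structure.
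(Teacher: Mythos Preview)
Your treatment of $L_{12}$, $L_{22}$ and $L_{21}$ is essentially the paper's argument. The paper computes $A_2u$ on $\operatorname{Ker}(T)$ exactly as you do, and for $L_{21}$ obtains the bounded expression $\pi_2\bigl(F(\lambda)u+\rho_0^{-1}(g_0'\cdot u)\tilde{s}+\nabla S(\rho_0 u)\bigr)$ by invoking self--adjointness of $A_2$, i.e.\ $\langle A_2u,v\rangle_H=\langle u,A_2v\rangle_H$ with $v\in\operatorname{Ker}(T)$; your direct form computation of $a_2(v,u)$ collapses to the same inner product and is an equivalent path.

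The genuine difference is in the $L_{11}$ part. The paper does \emph{not} re--derive any compact embedding: it simply invokes the last assertion of Lemma~\ref{lem:T-compact}, namely that $\operatorname{Ran}(T^*)\cap D(T)\hookrightarrow H$ is compact in the $H_{\Div}(M)$ topology (proved there via Rellich--Kondrashov following Valette), and combines this with coercivity of $a_2$ and the closed graph theorem to conclude that $(\pi_1 A_2\pi_1^*+\beta I)^{-1}$ is compact. The Fredholm/analytic--Fredholm steps are then the same as yours. Your route instead upgrades $\varphi_n$ to $H^2(M)$ via elliptic regularity for $TT^*$ with the natural Robin--type boundary condition coming from the form $\langle T^*\varphi,T^*\psi\rangle_H$. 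That works --- the principal part of $TT^*$ is $-\rho_0^2\Delta$ and a Robin condition of the type $\partial_n(\rho_0\varphi)+(\text{zeroth order})=0$ certainly satisfies Shapiro--Lopatinskii --- but it is heavier machinery than needed, since the required compactness is already available in Lemma~\ref{lem:T-compact}. What your approach buys is that it avoids having to check $E_1\subset D(T)$ (i.e.\ vanishing normal trace for elements of $E_1$), at the cost of the boundary--regularity verification you flag.
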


\begin{proof}
Suppose that $u \in \operatorname{Ker}(T)$. Then from \eqref{eq:A2x}, \eqref{eq:Llambda} and \eqref{eq:Tdef},
\begin{equation} \label{eq:LKerT1}
L(\lambda) u = F(\lambda) u + \frac{\tilde{s} \cdot u}{\rho_0} g_0' + \nabla S(\rho_0 u).
\end{equation}
The first two terms on the right side above are clearly bounded as they are only multiplication by bounded quantities. The third term, corresponding to self-gravitation is also bounded by Lemma \ref{lem:-4-} which is proven below. Because the projectors $\pi_1$ and $\pi_2$ are both continuous this proves the boundedness of $L_{12}(\lambda)$ and $L_{22}(\lambda)$ as stated.

Now let us consider $L_{21}(\lambda)$. Taking $u \in E_1$ and $v \in \operatorname{Ker}(T)$ we have
\[
\begin{split}
\langle A_2 u, v \rangle_H & = \langle u, A_2 v \rangle _H \\
& = \left \langle u,  \frac{\tilde{s} \cdot v}{\rho_0} g_0' + \nabla S(\rho_0 v) \right \rangle_H\\
& = \left \langle  \frac{g_0' \cdot u}{\rho_0} \tilde{s} + \nabla S(\rho_0 u), v 
\right \rangle_H.
\end{split}
\]
Since this is true for any $v \in \operatorname{Ker}(T)$, we conclude that
\[
L_{21}(\lambda) u =  \pi_2 \left (F(\lambda) u + \frac{g_0' \cdot u}{\rho_0} \tilde{s} + \nabla S(\rho_0 u)\right ).
\]
Similar to above, this is a bounded operator and so $L_{21}(\lambda)$ extends to a bounded operator from $\operatorname{Ran}(T^*)$ to $\operatorname{Ker}(T)$ as claimed.

Finally, the statement about $L_{11}(\lambda)$ follows by the argument of \cite[Lemma 1.1.11]{moller2015spectral} and the compactness of resolvent of $\pi_1 A_2 \pi_1^*$ which is a consequece of Lemma \ref{lem:T-compact}. 
Indeed, since $E_1 \subset H_{\Div}(M,L^2(\partial M))$ 
by Lemma \ref{lem:a2} $\pi_1 A_2 \pi_1^* + \beta I$ is invertible from its domain $E_1$ into $\operatorname{Ran}(T^*)$ and, by the closed graph theorem,
the inverse is bounded with the graph norm on $E_1$. Since the injection $E_1 \hookrightarrow \operatorname{Ran}(T^*) \cap D(T)$, with the graph norm on $E_1$ and
$H_{\Div}(M)$ topology on $\operatorname{Ran}(T^*) \cap D(T)$ is continuous, by Lemma \ref{lem:T-compact} $(\pi_1 A_2 \phi_1^* + \beta I)^{-1} : \operatorname{Ran}(T^*) \rightarrow \operatorname{Ran}(T^*)$ is compact. Furthermore, we have the identity
\[
\pi_1 A_2 \pi_1^* (\pi_1 A_2 \pi_1^* + \beta I)^{-1} = I - \beta (\pi_1 A_2 \pi_1^* + \beta I)^{-1}
\]
which when applied to $L_{11}(\lambda)$ gives
\begin{equation} \label{eq:L11inv}
L_{11}(\lambda) (\pi_1 A_2 \pi_1^* + \beta I)^{-1} =  I + (\pi_1 F(\lambda) \pi_1^* - \beta I) (\pi_1 A_2 \pi_1^* + \beta I)^{-1}.
\end{equation}
This operator is therefore a compact pertubation of the identity and so $L_{11}(\lambda)$ is Fredholm with index $0$ as claimed in the statement of the proposition. The fact that $L_{11}(\lambda)$ has discrete spectrum consisting of eigenvalues with finite multiplicity then follows from the analytic Fredholm theory. For $\lambda$ in the resolvent set of $L_{11}$, we can apply $L_{11}^{-1}$ to \eqref{eq:L11inv} and get
\[
L_{11}^{-1}(\lambda) = ((1-\beta)I - \pi_1 F(\lambda) \pi_1^*) (\pi_1 A_2 \pi_1^* + \beta I)^{-1}
\]
which is compact.
\end{proof}

\begin{remark} \label{rem:Lij}
If we additionally assume that $g_0'$ and $\nabla \rho_0$ are parallel, which is a requirement for well-posedness of the system \eqref{eq: MomentumConservation6}, and use the Brunt-V\"ais\"al\"a frequency $N^2$ (see \eqref{eq: N2stilde}), the proof of Proposition \ref{prop:Lij} implies the following formulae
\begin{equation}
\begin{split}
L_{12}(\lambda) & = \pi_1 \left ( F(\lambda) + N^2 \hat{g}_0' \hat{g}_0'^T + \nabla S \rho_0 \right ) \pi_2^*, \quad L_{22}(\lambda) = \pi_2 \left ( F(\lambda) + N^2 \hat{g}_0' \hat{g}_0'^T + \nabla S \rho_0 \right ) \pi_2^*,\\
L_{21}(\lambda) &= \pi_2 \left ( F(\lambda) + N^2 \hat{g}_0' \hat{g}_0'^T + \nabla S \rho_0 \right ) \pi_1^*,
\end{split}
\end{equation}
where
\[
\hat{g}_0' = \frac{g_0'}{\|g_0'\|}.
\]
From these formulae and Lemma \ref{lem:-4-}, $L_{22}(\lambda)$ cannot have a compact inverse. Thus by taking $u \in \operatorname{Ker}(T)$ we see that $L(\lambda)^{-1}$ cannot be compact as observed earlier.
\end{remark}

Before continuing we proof a technical lemma which was used in the proof of Proposition \ref{prop:Lij}, and will also be important below.

\begin{lemma} \label{lem:-4-}
The map $u \to S(\rho_0 u)$ is continuous from $L^2(M)$ to $L^6(\RR^3)$ and the map $u \to \nabla S(\rho_0 u)$ is continuous from
$L^2(M)$ to $L^2(\RR^3)$. The map $u \to \nabla S(\rho_0 u)$ is
compact from $H_{\Div}(M,L^2(\partial M))$ to $L^2(\RR^3)$.
\end{lemma}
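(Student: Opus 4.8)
The plan is to identify $S(\rho_0 u)$ with the decaying (Newtonian) potential on all of $\RR^3$. Let $F$ denote the extension of $\rho_0 u$ by zero outside $M$; since $u\in L^2(M)$ and $\rho_0\in C^1(M)$ is bounded, $F\in L^2(\RR^3)$ with $\|F\|_{L^2(\RR^3)}=\|\rho_0 u\|_{L^2(M)}\le\|\rho_0\|_{L^\infty(M)}\|u\|_{L^2(M)}$, and \eqref{eq: PoissonEqPert-2} reads $\nabla^2 S(\rho_0 u)=-4\pi G\,\nabla\cdot F$ on $\RR^3$. Moving the divergence onto the Newtonian kernel and taking absolute values gives the pointwise bound
\[
   |S(\rho_0 u)(x)|\;\le\;G\int_M \frac{|\rho_0(y)u(y)|}{|x-y|^{2}}\,\dd y\;=\;G\,(\,|\cdot|^{-2}\ast|F|\,)(x).
\]
Since $|x|^{-2}$ is a constant multiple of the Riesz kernel of order $1$ in dimension $3$, the Hardy--Littlewood--Sobolev inequality with $\tfrac16=\tfrac12-\tfrac13$ gives $\|S(\rho_0 u)\|_{L^6(\RR^3)}\lesssim\|F\|_{L^2(\RR^3)}\lesssim\|u\|_{L^2(M)}$, which is the first continuity statement. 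For the second, $\nabla S(\rho_0 u)$ is $-4\pi G$ times the gradient part of $F$ in the Helmholtz decomposition of $L^2(\RR^3;\RR^3)$ --- equivalently, an integration by parts gives $\|\nabla S(\rho_0 u)\|_{L^2(\RR^3)}^2=-4\pi G\int_{\RR^3}\nabla S(\rho_0 u)\cdot F$ --- so with Cauchy--Schwarz $\|\nabla S(\rho_0 u)\|_{L^2(\RR^3)}\le 4\pi G\|F\|_{L^2(\RR^3)}\lesssim\|u\|_{L^2(M)}$.

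The compactness assertion is the heart of the matter, since $H_{\Div}(M,L^2(\partial M))$ does \emph{not} inject compactly into $H$. The point is that $\nabla S(\rho_0 u)$ depends on $\rho_0 u$ only through its divergence, and the distributional divergence of $F$ on $\RR^3$ is
\[
   \nabla\cdot F\;=\;\widetilde m\;-\;(\rho_0\, n\cdot u)|_{\partial M}\,\delta_{\partial M},\qquad m:=\nabla\cdot(\rho_0 u)=\rho_0\,\nabla\cdot u+\nabla\rho_0\cdot u\in L^2(M),
\]
where $\widetilde m$ is the zero-extension of $m$ and $\delta_{\partial M}$ is surface measure on $\partial M$. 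Accordingly $S(\rho_0 u)$ splits, up to the constant $G$, into the volume Newtonian potential of the $L^2(M)$ density $\widetilde m$ and (minus) the single-layer potential of the $L^2(\partial M)$ density $\rho_0\, n\cdot u|_{\partial M}$. By interior elliptic regularity the first summand lies in $H^2_{\mathrm{loc}}(\RR^3)$, and, $\partial M$ being smooth, the single-layer potential maps $L^2(\partial M)$ continuously into $H^{3/2}_{\mathrm{loc}}(\RR^3)$; on any fixed ball $B\supset\overline M$ the corresponding norms are bounded by $\|m\|_{L^2(M)}+\|n\cdot u\|_{L^2(\partial M)}$, hence, using $\rho_0\in C^1(M)$, by $\|u\|_{H_{\Div}(M,L^2(\partial M))}$. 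Therefore $u\mapsto \nabla S(\rho_0 u)|_{B}$ is bounded from $H_{\Div}(M,L^2(\partial M))$ into $H^{1/2}(B)$, and $H^{1/2}(B)\hookrightarrow L^2(B)$ is compact by the Rellich--Kondrachov theorem.

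Finally I would upgrade this local compactness to compactness in $L^2(\RR^3)$ by a tightness argument: fixing a ball $B_R\supset\overline M$ and differentiating the explicit kernel, for $|x|\ge 2R$ one has $|x-y|\ge|x|/2$ on $M$, so $|\nabla S(\rho_0 u)(x)|\le C|x|^{-3}\|\rho_0 u\|_{L^1(M)}\le C\,|x|^{-3}\,\|u\|_{L^2(M)}$, whence $\int_{|x|>\varrho}|\nabla S(\rho_0 u)|^2\,\dd x\le C\varrho^{-3}\|u\|_{H_{\Div}(M,L^2(\partial M))}^2\to 0$ as $\varrho\to\infty$, uniformly on bounded subsets. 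A bounded subset of $L^2(\RR^3)$ whose restrictions to every ball are precompact and which is uniformly tight at infinity is itself precompact; applying this to the image of the unit ball of $H_{\Div}(M,L^2(\partial M))$ proves the claim. The step I expect to be the main obstacle is exactly this compactness argument: the constituent maps $u\mapsto\nabla\cdot(\rho_0 u)\in L^2(M)$ and $u\mapsto n\cdot u|_{\partial M}\in L^2(\partial M)$ are bounded but not compact, so the gain must come entirely from composing with $\nabla(\nabla^2)^{-1}$, which recovers one net derivative in the interior only after the boundary delta is absorbed via single-layer mapping properties; the remaining estimates (Hardy--Littlewood--Sobolev, elliptic and layer-potential regularity, Rellich--Kondrachov) are routine.
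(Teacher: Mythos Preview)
Your argument is correct. For the first two continuity statements, your approach (Hardy--Littlewood--Sobolev for the $L^6$ bound; the Helmholtz identity $\|\nabla S(\rho_0 u)\|_{L^2}^2=-4\pi G\langle\nabla S(\rho_0 u),F\rangle$ together with Cauchy--Schwarz for the $L^2$ bound) is essentially the same as the paper's, which computes the $L^2$ bound on the Fourier side via Parseval and then derives the $L^6$ bound from Gagliardo--Nirenberg--Sobolev.

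Where you genuinely diverge is the compactness step. You split $S(\rho_0 u)$ into a volume Newtonian potential of $\nabla\cdot(\rho_0 u)\in L^2(M)$ and a single-layer potential of $\rho_0\,n\cdot u|_{\partial M}\in L^2(\partial M)$, invoke elliptic and layer-potential regularity to place $\nabla S(\rho_0 u)$ in $H^{1/2}(B)$ on a fixed ball $B$, apply Rellich--Kondrachov, and finish with a tightness estimate at infinity. The paper instead takes a softer functional-analytic route: starting from the identity $\|\nabla S(v)\|_{L^2}^2=4\pi G\big(\langle S(v),\,n\cdot v\rangle_{L^2(\partial M)}-\langle S(v),\,\nabla\cdot v\rangle_{L^2(M)}\big)$, it passes a bounded sequence in $H_{\Div}(M,L^2(\partial M))$ to a weakly convergent subsequence and uses only that $S:L^2(M)\to H^1(M)$ is bounded (hence compact into $L^2(M)$ and, after restriction, into $L^2(\partial M)$); pairing strong convergence of $S(v)$ against weak convergence of $n\cdot v$ and $\nabla\cdot v$ forces $\|\nabla S(v)\|_{L^2}\to 0$. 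Your approach gives a quantitative regularity gain and is arguably more transparent about \emph{why} the map is compact, at the price of invoking single-layer mapping properties and a separate tightness argument; the paper's argument is shorter and uses nothing beyond the identity you already wrote down for part~(ii), together with the compact embedding $H^1(M)\hookrightarrow L^2(M)$.
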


\begin{proof} Starting from the definition given by \eqref{eq: PoissonEqPert-2} and \eqref{eq: PerturbGravOperator} of $S$, setting $v = \rho_0 u$, and using Parseval's identity we have
\begin{equation} \label{eq:DSest}
    \begin{split}
        \|\nabla S(v) \|^2_{L^2(\mathbb{R}^3)^3} = (4 \pi G)^2 \int_{\mathbb{R}^3} \frac{|\hat{v}(\xi) \cdot \xi|^2}{|\xi|^{2}} \dd \xi \leq (4 \pi G)^2 \int_{\mathbb{R}^3} |\hat{v}(\xi)|^2 \dd \xi = (4 \pi G)^2 \|v \|_{L^2(M)^3}^2.
    \end{split}
\end{equation}
From \eqref{eq:DSest}, we conclude that
\begin{enumerate}[label=(\roman*)]
\item
$u \mapsto \nabla S(\rho_0 u)$ is a bounded operator from $L^2(M)^3$ to $L^2(\mathbb{R}^3)^3$, and accordingly that \\[-0.2cm]
\item
$u \mapsto S(\rho_0 u)$ is a bounded operator from $L^2(M)^3$ into $L^6(\mathbb{R}^3)^3$ using the Gagliardo–Nirenberg–Sobolev inequality.
\end{enumerate}
Thus the first two assertions of the lemma have been proven.

Regarding the third assertion, we use \eqref{eq: PoissonEqPert-2} together with Parseval's identity, or integration by parts, in writing
\begin{equation} \label{eq:DSid}
\|\nabla S(v) \|^2_{L^2(\mathbb{R}^3)^3} = 4 \pi G \langle \nabla S(v), v \rangle_{L^2(\mathbb{R})^3} = 4 \pi G \langle \nabla S(v), v \rangle_{L^2(M)^3}.
\end{equation}
For $v\in H_{\Div}(M,L^2(\partial M))$, integration by parts gives
\begin{equation}\label{eq:dsintbypart}
\langle \nabla S(v), v \rangle_{L^2(M)^3} = \langle S(v), n\cdot v \rangle_{L^2(\partial M)} - \langle S(v), \nabla\cdot v \rangle_{L^2(M)^3}.
\end{equation}
Now consider any bounded sequence $\{ u^\ell\} \subset H_{\Div}(M,L^2(\partial M))$. By Alaoglu's Theorem, there is a subsequence that converges weakly in $H_{\Div}(M,L^2(\partial M))$ to a point $\widetilde{v} \in H_{\Div}(M,L^2(\partial M))$. Then the subsequence $\{v^{\ell_k} - \widetilde{v} \}$ is bounded in $H_{\Div}(M,L^2(\partial M))$ and converges weakly to zero. Now, note that by (i) and (ii), and using also H\"older's inequality, $S:L^2(M)^3 \rightarrow H^1(M)^3$ is continuous and so $S:L^2(M)^3 \rightarrow L^2(M)^3$ is compact as well as the composition of $S$ with restriction to the boundary. Thus, by taking a further subsequence if necessary, $\{S(v^{\ell_k} - \widetilde{v})\}$ converges strongly to some point $w \in L^2(M)^3$. Combining \eqref{eq:DSid} and \eqref{eq:dsintbypart} and applying to $v^{\ell_k} - \widetilde{v}$, we have
\[
\|\nabla S (v^{\ell_k}) - \nabla S (\widetilde{v})\|^2_{L^2(\mathbb{R}^3)^3} = \langle S(v^{\ell_k} - \widetilde{v}), n\cdot (v^{\ell_k} - \widetilde{v})\rangle_{L^2(\partial M)} - \langle S(v^{\ell_k} - \widetilde{v}), \nabla\cdot(v^{\ell_k} - \widetilde{v})\rangle_{L^2(M)^3}.
\]
Since $S(v^{\ell_k} - \widetilde{v})$ converges strongly to $w$, $v^{\ell_k} - \widetilde{v}$ is bounded and $v^{\ell_k} - \widetilde{v}$ converges weakly to zero, we obtain that
\[
\lim_{k\rightarrow \infty} \|\nabla S (v^{\ell_k}) - \nabla S (\widetilde{v})\|_{L^2(\mathbb{R}^3)^3} = 0.
\]
This completes the proof.

\end{proof}



We now apply Frobenius-Schur factorization to the operator $\mathcal{L}(\lambda)$ to draw conclusions about the decomposition of its spectrum. Our factorization and resulting spectral decomposition are essentially the same as \cite[Theorem 2.2.14]{tretter2008spectral} except we consider quadratic dependence on $\lambda$. This does not introduce any serious complication into the method. Suppose that $\rho_1$ is the resolvent set of $L_{11}$ and $\rho_2$ the resolvent set of $L_{22}$ with complements $\sigma_1$ and $\sigma_2$ the corresponding spectra. For $\lambda \in \rho_1$ we define the Schur complement
\[
S_2(\lambda) = L_{22}(\lambda) - L_{21}(\lambda) L_{11}(\lambda)^{-1} L_{12}(\lambda)
\]
and similarly for $\lambda \in \rho_2$ we have
\[
S_{1}(\lambda) = L_{11}(\lambda) - L_{12}(\lambda) L_{22}(\lambda)^{-1} L_{21}(\lambda).
\]
Then for $\lambda \in \rho_1$
\[
\mathcal{L}(\lambda) = \begin{pmatrix}
I & 0 \\
L_{21}(\lambda) L_{11}^{-1}(\lambda) & I
\end{pmatrix}
\begin{pmatrix}
L_{11}(\lambda) & 0 \\
0 & S_2(\lambda)
\end{pmatrix}
\begin{pmatrix}
I & L_{11}^{-1}(\lambda) L_{12}(\lambda) \\
0 & I
\end{pmatrix},
\]
while for $\lambda \in \rho_2$
\begin{equation}\label{eq:Schur2}
\mathcal{L}(\lambda) = \begin{pmatrix}
I & L_{12}(\lambda) L_{22}^{-1}(\lambda) \\
0 & I
\end{pmatrix}
\begin{pmatrix}
S_1(\lambda) & 0 \\
0 & L_{22}(\lambda)
\end{pmatrix}
\begin{pmatrix}
I & 0 \\
L_{22}^{-1}(\lambda) L_{21}(\lambda) & I
\end{pmatrix}.
\end{equation}
By \cite[Lemma 2.3.2]{tretter2008spectral}, since the matrix operators on the outside of the products in each equality above are invertible on the respective resolvent sets, we obtain
\[
\sigma(L) \setminus \sigma_1 = \sigma(S_2),
\quad
\sigma(L) \setminus \sigma_2  = \sigma(S_1).
\]
The same statement for essential spectrum is not explicitly given by \cite[Lemma 2.3.2]{tretter2008spectral} but follows by the same proof. This means
\[
\sigma_{ess}(L) \setminus \sigma_{ess}(L_{11}) = \sigma_{ess}(S_2),
\quad
\sigma_{ess}(L) \setminus \sigma_{ess}(L_{22})  = \sigma_{ess}(S_1).
\]
By Proposition \ref{prop:Lij}, $\sigma_{ess}(L_{11}) = \emptyset$ and so in fact we have
\[
\sigma_{ess}(L) = \sigma_{ess}(S_2).
\]
Using identity \eqref{eq:L11inv}, we have
\[
S_1(\lambda) (\pi_1 A_2 \pi_1^* + \beta I)^{-1} = I + (\pi_1 F(\lambda) \pi_1^* - \beta I + L_{12}(\lambda) L_{22}^{-1}(\lambda) L_{21}(\lambda)) (\pi_1 A_2 \pi_1^* + \beta I)^{-1}
\]
and so since $L_{12}(\lambda) L_{22}^{-1}(\lambda) L_{21}(\lambda)$ is bounded for $\lambda \in \rho_2$ we obtain that, as in \eqref{eq:L11inv}, $S_1(\lambda)$ is Fredholm with index 0 having only eigenvalues with finite multiplicity in its spectrum. We summarise the main results just proven in the following proposition.

\begin{proposition} \label{prop:sigma12}
The spectrum of $\sigma(S_1)$ is discrete:
\[
   \sigma(S_1) \subseteq \sigma_{\mathrm{disc}}(L) ,
\]
where $\sigma_{\mathrm{disc}}(L)$ denotes the discrete component of
$\sigma(L)$. Furthermore,
\[
\sigma_{ess}(L) = \sigma_{ess}(S_2).
\]
\end{proposition}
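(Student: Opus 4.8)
The plan is to read the proposition off from the Frobenius--Schur factorizations together with the properties of the component operators already assembled above; essentially no new computation is needed, and the work is in the spectral bookkeeping.

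For the first assertion I would use the identity for $S_1(\lambda)(\pi_1 A_2\pi_1^*+\beta I)^{-1}$ derived above from \eqref{eq:L11inv}: for $\lambda\in\rho_2$ the correction $L_{12}(\lambda)L_{22}(\lambda)^{-1}L_{21}(\lambda)$ is bounded by Proposition \ref{prop:Lij}, and $(\pi_1 A_2\pi_1^*+\beta I)^{-1}$ is compact (Lemma \ref{lem:T-compact}, as in the proof of Proposition \ref{prop:Lij}), so that identity exhibits $S_1(\lambda)(\pi_1 A_2\pi_1^*+\beta I)^{-1}$ as a compact perturbation of the identity. Hence $S_1(\lambda)$ is Fredholm of index $0$ for every $\lambda\in\rho_2$, depends analytically on $\lambda$ there, and is invertible for $\mathrm{Re}\,\lambda^2$ large (such $\lambda$ lie in $\rho_2\cap\rho(L)$, since $L_{22}(\lambda)=\lambda^2 I+\lambda B_1+B_0$ with $B_0,B_1$ bounded is then invertible and $L(\lambda)$ is invertible by Lax--Milgram, whence invertibility of $S_1(\lambda)$ follows from \eqref{eq:Schur2}). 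By the analytic Fredholm theorem $\sigma(S_1)$ then consists of isolated eigenvalues of finite multiplicity, and in particular $\sigma_{ess}(S_1)=\emptyset$. Using \eqref{eq:Schur2} -- valid precisely on $\rho_2$, with outer triangular factors and $L_{22}(\lambda)$ boundedly invertible there -- $\mathcal{L}(\lambda)$, equivalently $L(\lambda)$, is invertible (resp. Fredholm) at $\lambda\in\rho_2$ if and only if $S_1(\lambda)$ is; hence $\sigma(L)\cap\rho_2=\sigma(S_1)$ and $\sigma_{ess}(L)\cap\rho_2=\sigma_{ess}(S_1)=\emptyset$. Every point of $\sigma(S_1)$ thus lies in $\sigma(L)\setminus\sigma_{ess}(L)$, and the latter set consists of isolated eigenvalues of finite multiplicity (once more by the analytic Fredholm theorem, using that $L(\lambda)$ is invertible for $\mathrm{Re}\,\lambda^2>\beta$); therefore $\sigma(S_1)\subseteq\sigma_{\mathrm{disc}}(L)$.

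For the second assertion, the same factorization argument on $\rho_1$ -- multiplying by a boundedly invertible operator preserves both invertibility and the Fredholm property -- yields $\sigma_{ess}(L)\setminus\sigma_{ess}(L_{11})=\sigma_{ess}(S_2)$, the essential-spectrum analogue of \cite[Lemma 2.3.2]{tretter2008spectral} already noted above. By Proposition \ref{prop:Lij}, $L_{11}(\lambda)$ is Fredholm of index $0$ with discrete spectrum for every $\lambda$, so $\sigma_{ess}(L_{11})=\emptyset$, and hence $\sigma_{ess}(L)=\sigma_{ess}(S_2)$.

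I expect the only genuinely delicate point -- and the place a careful reader would scrutinize -- to be the verification that the Frobenius--Schur factorizations are honest operator identities compatible with the domains at play: the off-diagonal triangular factors must preserve $E_1\oplus\operatorname{Ker}(T)$, which uses the boundedness of $L_{12}$, $L_{21}$, $L_{22}(\lambda)^{-1}$ from Proposition \ref{prop:Lij} and the fact that $L_{11}(\lambda)^{-1}$ maps $\operatorname{Ran}(T^*)$ boundedly into $E_1$ (with the graph norm) for $\lambda\in\rho_1$; and one must check that the Fredholm property, not just invertibility, transfers through these factorizations, since \cite[Lemma 2.3.2]{tretter2008spectral} is stated only for resolvent sets. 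Once this bookkeeping is in place, the proposition follows by assembling the items above.
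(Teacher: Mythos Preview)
Your proposal is correct and follows essentially the same approach as the paper: the Frobenius--Schur factorizations \eqref{eq:Schur2} combined with \cite[Lemma 2.3.2]{tretter2008spectral}, the identity derived from \eqref{eq:L11inv} showing $S_1(\lambda)$ is a compact perturbation of the identity (hence Fredholm of index $0$ with discrete spectrum by analytic Fredholm), and the fact $\sigma_{ess}(L_{11})=\emptyset$ from Proposition~\ref{prop:Lij}. If anything, you are slightly more explicit than the paper in verifying the invertibility point needed for analytic Fredholm and in justifying why points of $\sigma(S_1)$ land in $\sigma_{\mathrm{disc}}(L)$ rather than merely in $\sigma(L)$; your remarks on the domain bookkeeping for the triangular factors are also well placed.
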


\begin{remark}
There are eigenvalues of $L$ that do not lie in $\sigma_1$. Specifically, the quasi-rigid modes form three separate
two-dimensional eigenspaces with eigenvalues $\pm \ii |\Omega|$ and
$0$. These eigenvalues are embedded in the essential spectrum. For the sake of self-containedness, the detailed computations are given in Appendix~\ref{A:r}.
\end{remark}

\begin{remark}[Geostrophic modes] \label{rem:geo}
    For completeness of the characterization, we briefly present how the geostropic modes (see \cite[Section 4.1.6]{dahlen2020theoretical}) appear in the analysis. Fluid motions which travel along the level surfaces of $\rho_0$ and preserve the density are eigenfunctions of $L$, or geostrophic modes, corresponding to $\lambda = 0$. 
    They are necessarily solutions to the problem
\begin{equation} \label{eq:geos0}
\left\{\begin{array}{ll}
&\tilde{s}\cdot u=0,\\
&\nabla\cdot (\rho_0 u)=0,\\
&\nabla\cdot u |_{\partial M}=0.
\end{array}
\right.
\end{equation}
Note that if $u \in H$ satisfies \eqref{eq:geos0}, then $u \in H_2  = \operatorname{Ker}(T)$. If  $\vp\in H^1(M)$ is such that \begin{equation} \label{eq:geos1}
\nabla\vp\cdot(\nabla\times \tilde{s})=0
\end{equation}
and we define $u$ by
\begin{equation}\label{eq:geos}
u=\rho_0^{-1}\nabla\vp\times\tilde{s},
\end{equation}
then $u$ satisfies the first and the second equations of \eqref{eq:geos0}
as $\nabla \cdot(\nabla\vp\times\tilde{s})=\tilde{s}\cdot(\nabla\times\nabla\vp)-\nabla\vp\cdot(\nabla\times \tilde{s}).$ Since we have also $$\nabla\cdot (\rho_0^{-1}\nabla\vp\times\tilde{s})=(\nabla\vp\times\tilde{s})\cdot \nabla\rho_0^{-1},$$
the boundary condition in \eqref{eq:geos0} is equivalent to
$$(\nabla\vp\times\tilde{s})\cdot \nabla\rho_0^{-1} |_{\partial M}= 0.$$
Assuming that $g_0'$, $\nabla \rho_0$ and $n$ are parallel on $\partial M$, which is required for well-posedness of the system, this boundary condition is automatically satisfied. In conclusion, the geostrophic modes form a infinite-dimensional subspace of $H_2.$ This is consistent with the fact that the essential spectrum of $L$ corresponds with the $H_2$ component (i.e. $L(0)$ fails to be Fredholm because of an infinite dimensional kernel contained in $H_2$).

\end{remark}

\section{Riesz projectors and acoustic mode decomposition}

A common approach to solution of \eqref{eq: MomentumConservation6} is to expand $u$ in so-called acoustic modes. In practice, this typically means expansion in the eigenfunctions of $L_{11}$, which by Proposition \ref{prop:Lij} correspond to discrete eigenvalues. Indeed, applying the spectral theory on Krein spaces (\cite{langer2008spectral,azizov1981linear}) and properties from Proposition \ref{prop:Lij} it is possible to develop a resolution of the identity for $L_{11}$ using its eigenfunctions. However, these eigenfunctions are not modes for the operator $L$. Indeed, suppose $\lambda \in \rho_2$, which by Proposition \ref{prop:sigma12} is outside of the essential spectrum of $L$. Using the decomposition \eqref{eq:Schur2}, we have that
\[
\mathcal{L}(\lambda)
\begin{pmatrix}
u \\ v
\end{pmatrix} = 0
\]
if and only if
\[
S_1(\lambda) u = 0, \quad L_{22}^{-1}(\lambda) L_{21}(\lambda) u = - v.
\]
Thus, eigenvalues of $L$ outside the essential spectrum and their corresponding modes actually correspond to eigenfunctions of $S_1$, and contain a component in $\operatorname{Ker}(T)$. Therefore, to develop a true expansion for $L$, at least away from the essential spectrum, we should use the eigenfunctions of $S_1$.


To arrive at an expansion using the proper acoustic modes, we assume secular stability. Then $\gamma(a_2) = 0$ and $A_2^{1/2}$ is well defined on $D(A_2)$, with a nontrivial $\operatorname{Ker}(A_2^{1/2})$ coinciding with $\operatorname{Ker}(A_2)$. We let
\begin{equation}
   B_2 = \begin{pmatrix} 0 & \ii A_2^{1/2} \\
         \ii A_2^{1/2} & -2 R_{\Omega} \end{pmatrix} ,
\end{equation}
with
\[
   D(B_2) = D(a_2) \times D(a_2) .
\]
It is immediate that $\ii B_2$ is \textit{self adjoint} on $H \times H$, equipped with the original inner product; indeed,
\begin{equation}
   \left(B_2 \begin{pmatrix} u \\ v \end{pmatrix} ,
                     \begin{pmatrix} u' \\ v' \end{pmatrix}\right)
   = (\ii A_2^{1/2} v, u')_H
          + (\ii A_2^{1/2} u - 2 R_{\Omega} v, v')_H
   = -\left(\begin{pmatrix} u \\ v \end{pmatrix} ,
        B_2  \begin{pmatrix} u' \\ v' \end{pmatrix}\right) .
\end{equation}
We introduce (noting the minus sign)
\begin{equation}
   \widetilde{L}(\lambda) = B_2 - \lambda \Id
   = \begin{pmatrix} -\lambda & \ii {A}_2^{1/2} \\
         \ii {A}_2^{1/2} & -\lambda-2{R}_{\Omega} \end{pmatrix}
\quad\text{and}\quad
   \widetilde{R}(\lambda) = \widetilde{L}(\lambda)^{-1} . 
\end{equation}
From Remark~\ref{rem:geo}, it follows that $0 \notin \rho(L)$, so for $\lambda \in \rho(L)$ we can invert the previous equation to obtain
\begin{equation}\label{eq:res}
\widetilde{R}(\lambda) = \widetilde{L}(\lambda)^{-1} = \begin{pmatrix} -\lambda^{-1}(\Id - A_2^{1/2} R(\lambda) A_2^{1/2}) & -\ii {A}_2^{1/2} R(\lambda) \\
         -\ii R(\lambda) {A}_2^{1/2} & -\lambda R(\lambda) \end{pmatrix} .
\end{equation}
On the other hand, if $\lambda \in \rho(\widetilde{L})$ we have an inverse
\begin{equation}\label{eq:tR}
\widetilde{R}(\lambda) = \begin{pmatrix}
    \widetilde{R}_{11}(\lambda) & \widetilde{R}_{12}(\lambda) \\
    \widetilde{R}_{12}(\lambda) &
    \widetilde{R}_{22}(\lambda)
\end{pmatrix} .
\end{equation}
Thus the resolvents are related. If $\lambda \neq 0$, $R(\lambda) = -\lambda^{-1} R_{22}(\lambda)$. Remark~\ref{rem:geo} also implies that $0 \notin \rho(\widetilde{L})$ and so we see that $\rho(L) = \rho(\widetilde{L})$. Hence, the spectra are the same.

Suppose that $\lambda \in \sigma_{disc}(L) = \sigma_{disc}(\widetilde{L})$. Then a corresponding eigenfunction $(u, v) \in H \times H$ will satisfy
\[
   \ii A_2^{1/2} v = \lambda u ,\quad
   \ii A_2^{1/2} u - 2 R_\Omega v = \lambda v .
\]
Restricting to acoustic modes, $v = 0$ is not possible since $\lambda \neq 0$. (That is, $\lambda = 0$ is an eigenvalue but does not correspond with an acoustic mode.) Thus we can combine these formulae to obtain
\begin{equation}\label{eq:Lproj}
   L(\lambda) v = 0 ,\quad
   u = \lambda^{-1} \ii A_2^{1/2} v .
\end{equation}
Using the above calculations, we can introduce the Riesz projectors onto the space of acoustic modes, which are the spectrum of $S_1$. Indeed, let $\lambda \in \sigma(S_1)$ and $\Gamma_\lambda$ be a contour around surrounding $\lambda$ and no other part of $\sigma(B_2)$. Then consider the standard formula for the projection onto the eigenspace of $\lambda$
\[
\widetilde{P}_\lambda = \frac{1}{2\pi i} \oint_{\Gamma_{\lambda}} \widetilde{R}(\mu) \ \mathrm{d} \mu.
\]
For more information on the definition of the Riesz projection and this contour integral, see \cite{hislop2012introduction}. we further let $\pi_v$ be projection onto the $v$ component and define $P_\lambda = \pi_v \widetilde{P}_\lambda \pi^*_v$. Then, using \eqref{eq:res},
\[
P_\lambda = -\frac{\lambda}{2\pi i} \oint_{\Gamma_\lambda} R(\mu) \ \mathrm{d} \mu.
\]
We can now use these projectors to define the projection onto the acoustic part of the spectrum, which is
\[
   E = \sum_{\lambda \in \sigma(S_1)} P_{\lambda}.
\]
We conclude that the projection onto the eigenspace of $\lambda$ for $\widetilde{L}$ gives a corresponding projection, by taking the $v$ component as in \eqref{eq:Lproj}, onto the space $\operatorname{Ker}(L(\lambda))$ of an acoustic mode. This projection $E$ shows it is possible to express the acoustic part of the wavefield as a sum of normal modes. 

Using the above mentioned Riesz projectors, we obtain a partial spectral decomposition of $\widetilde{R}_{22}(\lambda)$, namely into acoustic modes:
\[
\widetilde{R}_{22}(\lambda)|_{acoustic} = \sum_{\omega \in \sigma(S_1)} \frac{P_\omega}{(\omega- \lambda)}.
\]
This induces a corresponding partial spectral decomposition of $R(\lambda)$ from \eqref{eq:res}:
\[
R(\lambda)|_{acoustic} = \frac{1}{\lambda}\sum_{\omega \in \sigma(S_1)} \frac{P_\omega}{(\lambda- \omega)},
\]
which is commonly used in computations.

\section{Inertia-gravity modes and essential spectrum}
\label{sec:pure-point-dense}

We now investigate the essential spectrum of $L$. Because $L_{11}^{-1}(\lambda)$ is compact and the $L_{ij}(\lambda)$ are bounded from Proposition \ref{prop:Lij}, using Proposition \ref{prop:sigma12} we have that
\[
\sigma_{ess}(L) = \sigma_{ess}(L_{22}).
\]
Using the formula for $L_{22}$ given in Remark \ref{rem:Lij} and Lemma \ref{lem:-4-}, this further reduces to
\begin{equation} \label{eq:esseq}
\sigma_{ess}(L) = \sigma_{ess}\left (\pi_2 \left ( F(\lambda) + N^2 \hat{g}_0' \hat{g}_0'^T \right ) \pi_2^* \right ).
\end{equation}
Thus, referring to \eqref{eq:Fldef}, we are led to consider the spectrum of
\[
M(\lambda) = \pi_2 (\lambda^2 \rm{Id}+2\lambda R_\Omega + N^2 \hat{g}_0' \hat{g}_0'^T) \pi_2^*: \operatorname{Ker(T)} \rightarrow \operatorname{Ker(T)}.
\]
Solutions $u \in \operatorname{Ker}(T)$ of
\begin{equation} \label{eq:igm_1}
\partial_t^2 u + 2 \Omega \times \partial_t u + N^2 \hat{g}_0' \hat{g}_0'^T u = 0
\end{equation}
are modes of $M$, referred to as inertia-gravity modes. Indeed, the restoring force of inertial modes is the Coriolis force, $2 \Omega \times \partial_t (\rho_0 u)$, while the restoring force of gravity modes is the buoyancy, $(\nabla \cdot \rho_0 u) g_0'$, which equals $N^2 \hat{g}_0' \hat{g}_0'^T \rho_0 u$ for $u \in \operatorname{Ker}(T)$. With both restoring forces, we speak of inertia-gravity modes.

We will precisely characterise the essential spectrum of $L$ in Theorem \ref{thm:pp}, but must first introduce some notation and a definition. For convenience, let us define $P_\xi^\perp = \sigma_p(\pi_2)$ defined by \eqref{eq:pisymbol} which is the projection onto the space orthogonal to $\xi$. Also, for $\Omega \in \mathbb{R}^3$ let $\Omega_{\xi}$ be the component of $\Omega$ in the direction $\xi$ given by
\[
\Omega_{\xi} = \frac{\xi \cdot \Omega}{|\xi|}.
\]

\begin{definition} \label{def:sigmapt}
    For $x \in M$ and $\xi \in \mathbb{R}^3 \setminus \{0\}$, let $\sigma_{pt}(x,\xi)$ be the set of $\lambda \in \mathbb{C}$ such that
    \begin{equation}\label{eq:interiorp}
\mathbb{C}^3 \ni \eta \mapsto \lambda^2 P_\xi^\perp \eta + 2 \lambda P_\xi^\perp( \Omega \times P_\xi^\perp \eta) + N^2 (\hat{g}_0 \cdot P_\xi^\perp \eta) P_\xi^\perp \hat{g}_0.
\end{equation}
has rank less than two (note that two is the largest possible rank due to $P_\xi^\perp$).
\end{definition}

\noindent In fact, the set $\sigma_{pt}(x,\xi)$ can be precisely characterised, which is done in the next lemma.

\begin{lemma}\label{lem:pt}
If $\lambda \in \sigma_{pt}(x,\xi)$, then $\lambda = 0$ or
\begin{equation}\label{eq:lam1}
\lambda = \pm i \sqrt{4 \Omega_\xi^2 + N^2 |P_\xi^\perp \hat{g}_0|^2}.
\end{equation}
\end{lemma}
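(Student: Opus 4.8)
The plan is to analyze the linear map in \eqref{eq:interiorp} as a map on the two-dimensional space $V_\xi = \operatorname{Ran}(P_\xi^\perp) = \xi^\perp \subset \mathbb{C}^3$, and to find exactly when it drops rank. First I would fix an orthonormal basis of $V_\xi$: since $P_\xi^\perp$ annihilates the $\xi$-direction, the operator $\eta \mapsto \lambda^2 P_\xi^\perp \eta + 2\lambda P_\xi^\perp(\Omega \times P_\xi^\perp \eta) + N^2(\hat g_0 \cdot P_\xi^\perp \eta)P_\xi^\perp \hat g_0$ restricts to an endomorphism $T_\lambda$ of $V_\xi$, and $\lambda \in \sigma_{pt}(x,\xi)$ iff $\det(T_\lambda) = 0$ (rank $< 2$). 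So the whole lemma reduces to computing this $2\times 2$ determinant as a polynomial in $\lambda$ and reading off its roots.

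The key computation is to identify the three pieces of $T_\lambda$ in the chosen basis. The term $\lambda^2 P_\xi^\perp$ is just $\lambda^2 \Id_{V_\xi}$. For the Coriolis term, note that $\eta \mapsto P_\xi^\perp(\Omega \times \eta)$ restricted to $V_\xi$ is skew-symmetric (it is the restriction of a cross-product operator composed with an orthogonal projection onto an invariant-up-to-the-complement subspace); in fact on the plane $V_\xi$ the map $\eta \mapsto P_\xi^\perp(\Omega\times\eta)$ equals $\Omega_\xi$ times the rotation by $90^\circ$ in $V_\xi$ (the $\xi$-orthogonal components of $\Omega$ contribute only to the $\xi$-direction, which is killed by $P_\xi^\perp$). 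So in a suitable orthonormal basis $\{e_1,e_2\}$ of $V_\xi$ this term is $2\lambda \Omega_\xi \begin{pmatrix} 0 & -1 \\ 1 & 0 \end{pmatrix}$. For the buoyancy term, write $w = P_\xi^\perp \hat g_0 \in V_\xi$; then $\eta \mapsto N^2 (w\cdot\eta) w$ is the rank-one symmetric operator $N^2\, w w^T$ on $V_\xi$, with $|w|^2 = |P_\xi^\perp \hat g_0|^2$. Choosing $e_1$ along $w$ (when $w\neq 0$) makes this $\operatorname{diag}(N^2|w|^2, 0)$.

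Putting these together, $T_\lambda$ has matrix $\begin{pmatrix} \lambda^2 + N^2|w|^2 & -2\lambda\Omega_\xi \\ 2\lambda\Omega_\xi & \lambda^2 \end{pmatrix}$ (after accounting for possible misalignment of the $w$-basis and the $\Omega_\xi$-rotation basis, which only rotates the off-diagonal skew block and does not change the determinant). Hence $\det T_\lambda = \lambda^2(\lambda^2 + N^2|w|^2) + 4\lambda^2\Omega_\xi^2 = \lambda^2\bigl(\lambda^2 + 4\Omega_\xi^2 + N^2|P_\xi^\perp\hat g_0|^2\bigr)$. Setting this to zero gives $\lambda = 0$ or $\lambda^2 = -(4\Omega_\xi^2 + N^2|P_\xi^\perp\hat g_0|^2)$, i.e. \eqref{eq:lam1}. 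I should also handle the degenerate cases $w = 0$ and $\Omega_\xi = 0$ separately (then $T_\lambda$ is already diagonal or purely $\lambda^2\Id + $ skew, and the same formula holds), and note that when $\xi \parallel \Omega$ or $\xi \parallel \hat g_0$ some terms vanish but the determinant formula is unchanged.

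The main obstacle is bookkeeping the two different "natural" orthonormal bases of the plane $V_\xi$ — one adapted to the buoyancy direction $w = P_\xi^\perp\hat g_0$ and one adapted to the Coriolis rotation — and verifying that the cross-term of the two skew/symmetric pieces does not produce an extra $\lambda$-dependent contribution to the determinant. The clean way around this is to observe that on a $2$-dimensional space the determinant of $\lambda^2\Id + (\text{symmetric rank-one } S) + (\text{skew } J)$ equals $\det(\lambda^2\Id + S) + \det J$ (the mixed terms $\lambda^2\operatorname{tr}(\text{cofactor}) $ vanish because $J$ is skew and the cofactor-trace pairing with a skew matrix is zero, and $\det(S+J)$ splits this way in dimension two), which is a basis-free identity; with $\det(\lambda^2\Id+S) = \lambda^2(\lambda^2+N^2|w|^2)$ and $\det(2\lambda\Omega_\xi J_0) = 4\lambda^2\Omega_\xi^2$, the result follows immediately without choosing coordinates.
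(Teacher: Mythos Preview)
Your proof is correct and follows essentially the same approach as the paper: both restrict the map to the two-dimensional plane $V_\xi=\xi^\perp$, choose a basis adapted to $w=P_\xi^\perp\hat g_0$ (the paper uses $\{P_\xi^\perp\hat g_0,\ \xi\times P_\xi^\perp\hat g_0\}$ directly), compute the resulting $2\times 2$ system, and read off the singular values of $\lambda$. Your presentation is slightly more conceptual---identifying the three pieces as $\lambda^2\Id$, a skew Coriolis block $2\lambda\Omega_\xi J_0$, and a rank-one symmetric buoyancy block $N^2ww^T$, and then invoking the $2\times2$ identity $\det(\text{symmetric}+\text{skew})=\det(\text{symmetric})+\det(\text{skew})$---but the underlying computation is the same.
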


\begin{proof}
First, assume that $P_\xi^\perp \hat{g}_0 \neq 0$ and set
\begin{equation}\label{eq:u1}
\eta = a\ P_\xi^\perp \hat{g}_0 + b\ \xi \times P_\xi^\perp \hat{g}_0 
\end{equation}
where $a$ and $b$ are constants, not both equal to zero, to be determined. Calculation shows
\[
\begin{split}
P_\xi^\perp \Big (\Omega \times P_\xi^\perp(\xi \times P_\xi^\perp \hat{g}_0 )\Big ) & = P_\xi^\perp \Big (\Omega \times (\xi \times P_\xi^\perp \hat{g}_0 ) \Big )\\
& = - |\xi| \Omega_\xi P_\xi^\perp \hat{g}_0
\end{split}
\]
and
\[
P_\xi^\perp ( \Omega \times P_\xi^\perp \hat{g}_0) = \frac{\Omega_\xi}{|\xi|} \xi \times P_\xi^\perp \hat{g}_0.
\]
Therefore, if $\lambda \in \sigma_{pt}(x,\xi)$ then for some $a$ and $b$
\[
\Big ( \lambda^2 a - 2 \lambda |\xi|\Omega_\xi b + N^2 |P_\xi^\perp \hat{g}_0|^2 a \Big ) P_\xi \hat{g}_0 + \left ( \lambda^2 b + 2 \lambda \frac{\Omega_\xi}{|\xi|}a \right ) \xi \times P_\xi^\perp \hat{g}_0 = 0.
\]
Setting the two coefficients equal to zero, we see that either $\lambda = 0$ and $a = 0$ or
\begin{equation}\label{eq:lam2}
\lambda^2 = -4 \Omega_\xi^2 - N^2 |P_\xi^\perp \hat{g}_0|^2
\end{equation}
which completes the proof in this case. When $P_\xi^\perp\hat{g}_0 = 0$, we choose arbitrary $w$ orthogonal to $\xi$ and start with
\[
\eta = a\ w + b\ \xi \times w
\]
instead of \eqref{eq:u1}. A similar calculation gives $\lambda = 0$ or \eqref{eq:lam2} in this case, and so the lemma is proven.
\end{proof}

If $\lambda$ satisfies \eqref{eq:lam1}, then
\begin{equation} \label{eq:lambsq}
\lambda^2 = - \frac{1}{|\xi|^2} \Big (4 (\Omega\cdot \xi)^2 + N^2 |\xi|^2 - N^2 (\hat{g}_0' \cdot \xi)^2\Big ).
\end{equation}
The quantity in parentheses above is a quadratic form in $\xi$, and by determining the eigenvalues of the corresponding matrix we can determine the range of possible values of $\lambda^2$. These eigenvalues are $N^2$ and
\begin{equation}\label{eq:betapm}
\beta_\pm = \frac{1}{2} \left ( 4 |\Omega|^2 + N^2 \pm \sqrt{(N^2 + 4 |\Omega|^2)^2 - 16 (\Omega \cdot \hat{g}_0')^2 N^2} \right ).
\end{equation}
Therefore, the range of possible $\lambda^2$ will be $-1$ times the interval between the minimum and maximum of these eigenvalues. If $N^2\geq 0$, then this range will be $\lambda^2 \in -[\beta_-,\beta_+]$ which leads to $\lambda \in \pm i [\sqrt{\beta_-},\sqrt{\beta_+}]$. This agrees with the range of non-ellipticity of the Poincar\'e operator determined in \cite{VidalCdV:2024}. In this case, it will be useful later for the proof Lemma \ref{lem:Lopatinskii} to note that $\sqrt{N^2} \in [\sqrt{\beta_-},\sqrt{\beta_+}]$. If $N^2 < 0$, then the range of possible values is $\lambda^2 \in -[N^2,\beta_+]$, which gives $
\lambda \in [-\sqrt{(-N^2)},\sqrt{(-N^2)}] \cup i [-\sqrt{\beta_+},\sqrt{\beta_+}]$. We combine these cases in the next lemma.


\begin{lemma}
    Let $\beta_\pm$ be given by \eqref{eq:betapm}. Then
    \begin{equation}\label{eq:spt}
    \bigcup_{\xi \in \mathbb{R}^3 \setminus \{0\}} \sigma_{pt}(x,\xi) = \bigcup_{\pm \in \{-1,1\}} \Bigg (\left [ -\sqrt{\max(0,-N^2)},\sqrt{\max(0,-N^2)}\right ]\cup \pm i\left [ \sqrt{\max(0,\beta_-)},\sqrt{\beta_+} \right ]\Bigg ).
    \end{equation}
   Furthermore, this set contains $\sqrt{-N^2}$.
\end{lemma}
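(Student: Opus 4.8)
The plan is to reduce the claim, via Lemma~\ref{lem:pt}, to computing the range of a single quadratic form on the unit sphere, and then to translate that range into subsets of $\mathbb{C}$ by sorting the square roots.

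First I would sharpen Lemma~\ref{lem:pt} to an equivalence. Writing the map \eqref{eq:interiorp} in the basis $\{P_\xi^\perp\hat{g}_0', \xi\times P_\xi^\perp\hat{g}_0'\}$ of $\operatorname{Ran}(P_\xi^\perp)$ (or any basis of that plane when $P_\xi^\perp\hat{g}_0'=0$), the computation in the proof of Lemma~\ref{lem:pt} shows that the associated $2\times2$ matrix has determinant $\lambda^2\big(\lambda^2 + 4\Omega_\xi^2 + N^2|P_\xi^\perp\hat{g}_0'|^2\big)$. Hence $\lambda\in\sigma_{pt}(x,\xi)$ if and only if $\lambda=0$ or $\lambda^2=-q(\xi)$, where $q(\xi):=4\Omega_\xi^2+N^2|P_\xi^\perp\hat{g}_0'|^2$; in particular $0\in\sigma_{pt}(x,\xi)$ for all $(x,\xi)$. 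Since $q$ is homogeneous of degree $0$ in $\xi$, this gives $\bigcup_{\xi}\sigma_{pt}(x,\xi)=\{0\}\cup\{\lambda:\lambda^2\in-\mathcal{R}\}$ with $\mathcal{R}=\{q(\xi):|\xi|=1\}$, so everything comes down to identifying $\mathcal{R}$.

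Next I would note, as in \eqref{eq:lambsq}, that $q(\xi)=\xi^{T}A\,\xi$ for unit $\xi$, with $A=4\Omega\Omega^{T}+N^2\big(\Id-\hat{g}_0'\hat{g}_0'^{T}\big)$ symmetric; since $S^2$ is connected, $\mathcal{R}=[\lambda_{\min}(A),\lambda_{\max}(A)]$. The matrix $A$ preserves $\Pi=\operatorname{span}\{\Omega,\hat{g}_0'\}$ and $\Pi^\perp$; on $\Pi^\perp$ it is $N^2\Id$, and on $\Pi$ its two eigenvalues $\beta_\pm$ satisfy $\beta_-+\beta_+=4|\Omega|^2+N^2$ and $\beta_-\beta_+=4N^2(\Omega\cdot\hat{g}_0')^2$, which is \eqref{eq:betapm}. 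To place $N^2$ among $\beta_\pm$: the unit vector of $\Pi$ orthogonal to $\hat{g}_0'$ has Rayleigh quotient $N^2+4(\Omega\cdot\xi)^2\ge N^2$, so $\beta_+\ge N^2$ and $\lambda_{\max}(A)=\beta_+$ always; if $N^2\ge0$ the unit vector of $\Pi$ orthogonal to $\Omega$ has Rayleigh quotient $N^2\big(1-(\hat{g}_0'\cdot\xi)^2\big)\le N^2$, so $\beta_-\le N^2$; if $N^2<0$ then $A-N^2\Id=4\Omega\Omega^{T}-N^2\hat{g}_0'\hat{g}_0'^{T}\ge0$, so $\lambda_{\min}(A)=N^2$. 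In either case $\lambda_{\min}(A)=\min(N^2,\beta_-)$, and one verifies $\max(0,-\min(N^2,\beta_-))=\max(0,-N^2)$ and $\max(0,\min(N^2,\beta_-))=\max(0,\beta_-)$ — in the case $N^2<0$ using that $\beta_-\le0$, which follows from $\beta_-\beta_+=4N^2(\Omega\cdot\hat{g}_0')^2\le0$ together with $\beta_+\ge0$. Adjoining $\lambda=0$ and separating the solutions of $\lambda^2=c$ into real ($c\ge0$) and purely imaginary ($c\le0$) roots then yields exactly the right-hand side of \eqref{eq:spt}. The ``furthermore'' is immediate: if $N^2\le0$ then $\sqrt{-N^2}$ is the real right endpoint $\sqrt{\max(0,-N^2)}$ of the real interval, and if $N^2>0$ then $\sqrt{-N^2}=\ii\sqrt{N^2}$ with $\sqrt{N^2}\in[\sqrt{\beta_-},\sqrt{\beta_+}]$ since $\beta_-\le N^2\le\beta_+$.

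There is no genuinely hard step here: once the equivalence refining Lemma~\ref{lem:pt} and the explicit eigenvalues of $A$ are in hand, the proof is assembly. The only place requiring care — and the most likely spot for a slip — is the bookkeeping in the third paragraph: the case split on the sign of $N^2$, the matching of the $\max(0,\cdot)$ expressions with $\min(N^2,\beta_-)$, and the degenerate configurations ($\Omega=0$, $\Omega\parallel\hat{g}_0'$, or $N^2=0$) in which $A$ has a repeated eigenvalue and ``the'' unit vector orthogonal to $\hat{g}_0'$ or to $\Omega$ in $\Pi$ must be interpreted with care; all of these are handled directly or by continuity and do not affect the conclusion.
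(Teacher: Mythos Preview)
Your proposal is correct and follows essentially the same approach as the paper: both reduce the problem, via Lemma~\ref{lem:pt}, to computing the range of the quadratic form $\xi\mapsto 4(\Omega\cdot\xi)^2+N^2(|\xi|^2-(\hat g_0'\cdot\xi)^2)$ on the unit sphere by finding the eigenvalues $N^2,\beta_\pm$ of the associated symmetric matrix, and then case-split on the sign of $N^2$. Your write-up is somewhat more careful than the paper's in that you explicitly sharpen Lemma~\ref{lem:pt} to an equivalence (needed for the equality in \eqref{eq:spt}), justify the ordering $\beta_-\le N^2\le\beta_+$ via Rayleigh quotients rather than by assertion, and flag the degenerate configurations; but these are refinements of the same argument, not a different route.
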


\begin{figure}
    \centering
(a)
\begin{tikzpicture}
\draw[thick,->] (-3,0) -- (3,0) node[anchor=west] {\small$\nu$};
\draw[thick,->] (0,-3) -- (0,3) node[anchor=west] {\small$\omega$};
\draw (-2,2) node {\small$\lambda = \nu + i \omega$};
\draw[line width=1mm] (0,1.25) -- (0,2.25);
\draw[line width=1mm] (0,-1.25) -- (0,-2.25);
\draw[line width=1mm,dashed] (0,-.7) -- (0,.7);
\draw (-.2,-2.25) -- (0.2,-2.25)
node[anchor=west]
{\small$-\beta_+$};
\draw (-.2,2.25) -- (0.2,2.25)
node[anchor=west]
{\small$\beta_+$};
\draw (-.2,-1.25) -- (0.2,-1.25)
node[anchor=west] {\small$-\beta_-$};
\draw (-.2,1.25) -- (0.2,1.25)
node[anchor=west] {\small$\beta_-$};
\draw (-.2,0.7) -- (0.2,0.7)
node[anchor=west] {\small$|P_n^\perp \hat{g}_0'| \sqrt{N^2}$};
\draw (-.2,-0.7) -- (0.2,-0.7)
node[anchor=west] {\small$-|P_n^\perp \hat{g}_0'| \sqrt{N^2}$};
\draw (0,0) node[circle,fill,inner sep=1mm]{};
\end{tikzpicture}
(b)
\begin{tikzpicture}
\draw[thick,->] (-3,0) -- (3,0) node[anchor=west] {\small$\nu$};
\draw[thick,->] (0,-3) -- (0,3) node[anchor=west] {\small$\omega$};
\draw[line width=1mm] (0,-2.25) -- (0,2.25);
\draw[line width=1mm] (-1.5,0) -- (1.5,0);
\draw (-.2,-2.25) -- (0.2,-2.25)
node[anchor=west]
{\small$-\beta_+$};
\draw (-.2,2.25) -- (0.2,2.25)
node[anchor=west]
{\small$\beta_+$};
\draw (-1.5,-.2) -- (-1.5,.2)
node[anchor=south]
{\small$-\sqrt{-N^2}$};
\draw (1.5,-.2) -- (1.5,.2)
node[anchor=south]
{\small$\sqrt{-N^2}$};
\end{tikzpicture}
    \caption{The solid black regions give the set \eqref{eq:spt} for fixed $x \in M$; in (a) on the left we see the case when $N^2\geq 0$ (note that this region includes the origin as indicated by the black dot), while in (b) on the right we see the case $N^2<0$. In reference to Theorem \ref{thm:pp}, the solid black regions are the areas where ellipticity fails at the point $x$, and appear for a single $x \in M$ in the union on the top line of \eqref{eq: SpectrumCondition}. The dashed black region on the left is the set where the system \eqref{eq:bigsys} fails the Lopatinskii condition at the boundary but not in the interior, and appears for a single $x \in \partial M$ in the union on the second line of \eqref{eq: SpectrumCondition}. Note that in (a) it is possible for the dashed set to intersect the solid set.}
    \label{fig:pointwise}
\end{figure}
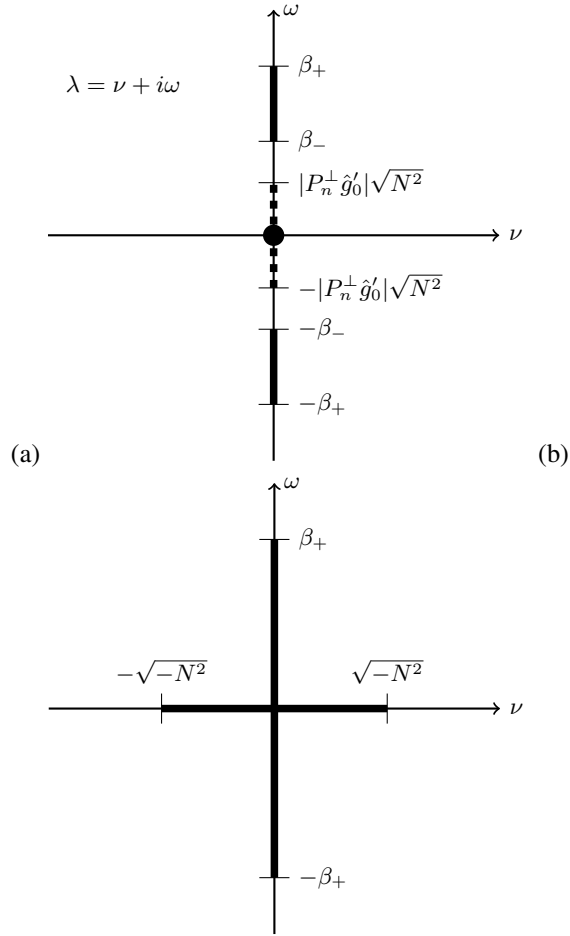

We now present our characterisation of the essential spectrum of $L$, which is the main result of this paper.

\begin{theorem} \label{thm:pp}
For $x\in \partial M$, let $n(x)$ denote the inward pointing unit normal vector. The essential spectrum $\sigma_{ess}(L)$ is given by
\begin{equation}
\begin{split}
   \sigma_{ess}(L) & = \left (\bigcup_{x\in M,\ \pm\in \{-1,1\}} \left [ -\sqrt{\max(0,-N^2)},\sqrt{\max(0,-N^2)}\right ]\cup \pm i\left [ \sqrt{\max(0,\beta_-)},\sqrt{\beta_+} \right ] \right )\\
   & \bigcup \left (\bigcup_{x\in \partial M} i|P_n^\perp \hat{g}_0'| \left [ -\sqrt{\max(0,N^2)},\sqrt{\max(0,N^2)}\right ] \right ).
\end{split}
\label{eq: SpectrumCondition}
\end{equation}
\end{theorem}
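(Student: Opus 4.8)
The plan is to reduce the computation of $\sigma_{ess}(L)$ to the spectral analysis of the operator pencil $M(\lambda) = \pi_2(\lambda^2\Id + 2\lambda R_\Omega + N^2\hat g_0'\hat g_0'^T)\pi_2^*$ on $\operatorname{Ker}(T)$, which has already been achieved in the excerpt via the Frobenius--Schur factorization and Lemma~\ref{lem:-4-}. Since $\pi_2$ is a zero-order pseudodifferential operator with principal symbol $P_\xi^\perp$ (Lemma~\ref{lem:pi12}), the operator $M(\lambda)$ is, microlocally in the interior, a pseudodifferential operator whose principal symbol is exactly the map appearing in Definition~\ref{def:sigmapt}. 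The first step is therefore to invoke Colin de Verdi\`ere's microlocal results \cite{CdV:2020} to show that for $\lambda$ outside the \emph{interior} set --- the first union in \eqref{eq: SpectrumCondition}, which is $\bigcup_{x\in M}\bigcup_\xi \sigma_{pt}(x,\xi)$ by the preceding lemma --- the symbol of $M(\lambda)$ has constant rank two, so $M(\lambda)$ is elliptic as an operator on the bundle $\operatorname{Ran}(P_\xi^\perp)$ and hence Fredholm up to the boundary contribution. This gives the containment of $\sigma_{ess}(L)$ in the interior set \emph{plus} whatever the boundary obstructs, and conversely a Weyl-sequence argument (concentrating at a point $x$ and frequency $\xi$ where the rank drops) shows every point of the interior set lies in $\sigma_{ess}(L)$.

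The second and harder step is the boundary analysis. Following the strategy announced in the introduction, I would reformulate $M(\lambda)u = f$ on $\operatorname{Ker}(T)$ --- which carries the constraints $\nabla\cdot(\rho_0 u) - \tilde s\cdot u = 0$ and $u\cdot n|_{\partial M}=0$ --- as a determined elliptic \emph{system} of PDEs for $u$ together with auxiliary unknowns (the Lagrange-type multiplier enforcing the $\operatorname{Ker}(T)$ constraint, i.e. a scalar potential), schematically the system labelled \eqref{eq:bigsys} in the paper. Once $\lambda$ avoids the interior set this system is elliptic in the Douglis--Nirenberg sense, and the question of whether $M(\lambda)$ is Fredholm on $\operatorname{Ker}(T)$ becomes precisely the question of whether the boundary conditions (the natural condition $\nabla\cdot u|_{\partial M}=0$ together with $u\cdot n|_{\partial M}=0$) satisfy the Lopatinskii--Shapiro condition for this system \cite{agmon1964estimates}. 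This is the content of Lemma~\ref{lem:Lopatinskii}: one freezes coefficients at a boundary point $x\in\partial M$, takes the tangential Fourier variable $\xi'$, and studies the ODE in the normal variable for decaying solutions; the Lopatinskii condition fails exactly when there is a nontrivial decaying solution satisfying the homogeneous boundary conditions. The key computation is to show that this failure occurs precisely for $\lambda \in i|P_n^\perp\hat g_0'|[-\sqrt{\max(0,N^2)},\sqrt{\max(0,N^2)}]$ --- the second union in \eqref{eq: SpectrumCondition} --- using the observation already flagged in the text that $\sqrt{N^2}\in[\sqrt{\beta_-},\sqrt{\beta_+}]$, which is what makes the normal-direction dispersion relation degenerate at the boundary for those $\lambda$.

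With the two ingredients in hand, the conclusion assembles as follows: for $\lambda$ outside the union of the interior set and the boundary set, ellipticity in the interior plus the Lopatinskii condition at the boundary yield that $M(\lambda)$ (hence $L(\lambda)$) is Fredholm, so $\lambda\notin\sigma_{ess}(L)$; for $\lambda$ in the interior set, the interior Weyl sequence already shows $\lambda\in\sigma_{ess}(L)$; and for $\lambda$ in the boundary set but not the interior set, the failure of the Lopatinskii condition is converted into a singular (Weyl) sequence for $M(\lambda)$ localized near the relevant boundary point --- one builds approximate null-solutions of the frozen-coefficient problem, cuts them off, and rescales --- showing $M(\lambda)$ is not Fredholm there either. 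I expect the main obstacle to be the boundary step: writing down an explicit determined elliptic system equivalent to the constrained pencil $M(\lambda)$ on $\operatorname{Ker}(T)$ with boundary conditions that exactly encode both $\nabla\cdot u|_{\partial M}=0$ and $u\cdot n|_{\partial M}=0$, verifying Douglis--Nirenberg ellipticity off the interior set, and then carrying out the Lopatinskii root count in the normal variable carefully enough to pin down the endpoints $\pm i|P_n^\perp\hat g_0'|\sqrt{N^2}$; the subtlety noted in Figure~\ref{fig:pointwise}, that the boundary (dashed) set can overlap the interior (solid) set when $N^2\ge 0$, means one must be careful not to double-count and must track which mechanism is responsible at each $\lambda$.
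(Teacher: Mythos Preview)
Your proposal is correct and follows essentially the same approach as the paper: reduce to $M(\lambda)$, establish the interior-set inclusion via Weyl sequences \`a la \cite{CdV:2020}, reformulate as an elliptic boundary-value system whose Lopatinskii condition characterizes the boundary set (Lemma~\ref{lem:Lopatinskii}), and build boundary-localized Weyl sequences where Lopatinskii fails. One point where the paper's execution differs from your sketch: the auxiliary system \eqref{eq:bigsys} is not simply $u$ plus a scalar Lagrange multiplier but a 16-component first-order system obtained by applying a secondary Helmholtz decomposition to \emph{both} $u$ and its image $V(\lambda)u$ (introducing vector and scalar potentials $w_u,z_u,\psi_u,w_v,z_v,\psi_v,\varphi_v,\widetilde{\varphi}$), with the operative boundary conditions $n\times w=0$, $n\cdot z=0$, $\psi=0$ coming from that decomposition rather than from $\nabla\cdot u|_{\partial M}=0$.
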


Before proving Theorem \ref{thm:pp}, we consider some special cases of the set in \eqref{eq:spt} from which we can obtain an upper bound on the essential spectrum in \eqref{eq: SpectrumCondition}. If for some value of $x$ we have $\Omega\cdot \hat{g}_0 = 0$, then from \eqref{eq:betapm} we have
\[
\beta_\pm = \min(0,4 |\Omega|^2+N^2), \ \max(0, 4|\Omega|^2 + N^2).
\]
Also, for general points $\beta_+ \leq 4 |\Omega|^2 + N^2$. Therefore, considering \eqref{eq: SpectrumCondition}, we see that the part of $\sigma_{ess}(L)$ along the imaginary axis must be contained in
\[
i\left [-\sqrt{4 |\Omega|^2 + \max(0,N_{\mathrm{sup}}^2)},\sqrt{4 |\Omega|^2 + \max(0,N_{\mathrm{sup}}^2)} \right ].
\]
On the other hand, directly from \eqref{eq: SpectrumCondition} we see that the part of $\sigma_{ess}(L)$ along the real axis must be contained in
\[
\left [ -\sqrt{\max(0,-N_{\mathrm{inf}}^2)},\sqrt{\max(0,-N_{\mathrm{inf}}^2)}\right ]
\]
Putting the previous remarks together, we see that
\begin{equation} \label{def:S1}
   \sigma_{ess}(L) \subset \mathfrak{S}_{1} = \{ \nu + \ii \omega \in \mathbb{C}\ :\
   \nu = 0\ \text{and}\
       \omega^2 \le 4 |\Omega|^2 + \max(0,N^2_{\mathrm{sup}})\
   \text{or}\ \omega = 0\ \text{and}\
       \nu^2 \le \max(0,-N^2_{\mathrm{inf}}) \}.
\end{equation}
An illustration of the set $\mathfrak{S}_1$ is given in Figure \ref{fig:cross}.
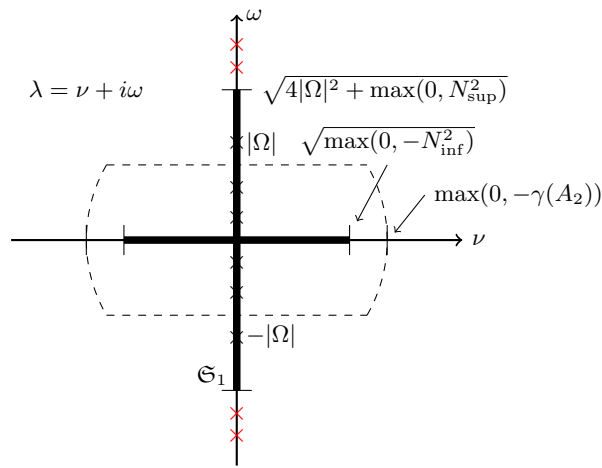
\begin{figure}
    \centering
\begin{tikzpicture}
\draw[thick,->] (-3,0) -- (3,0) node[anchor=west] {\small$\nu$};
\draw[thick,->] (0,-3) -- (0,3) node[anchor=west] {\small$\omega$};
\draw (-2,2) node {\small$\lambda = \nu + i \omega$};
\draw [dashed,domain=-30:30] plot ({2*cos(\x)},{2*sin(\x)});
\draw [dashed,domain=150:210] plot ({2*cos(\x)},{2*sin(\x)});
\draw[dashed] ({2*cos(150)},{2*sin(150)}) -- ({2*cos(30)},{2*sin(30)});
\draw[dashed] ({2*cos(210)},{2*sin(210)}) -- ({2*cos(-30)},{2*sin(-30)});
\draw[line width=1mm] (0,-2) -- (0,2);
\draw[line width=1mm] (-1.5,0) -- (1.5,0);
\draw (-1.5,-.2) -- (-1.5,.2);
\draw (1.5,-.2) -- (1.5,.2);
\draw (-2,-.2) -- (-2,.2);
\draw (2,-.2) -- (2,.2);
\draw (-.2,-2) -- (0.2,-2);
\draw (-.2,2) -- (0.2,2) node[anchor=west] {\small$\sqrt{4 |\Omega|^2 + \mathrm{max}(0,N^2_{\sup})}$};
\draw (0,{2*sin(30)}) node[anchor=south west] 
{\small$|\Omega|$};
\draw (0,{-2*sin(30)}) node[anchor=north west] {\small$-|\Omega|$};
\draw[->] (2,1) node[anchor=south] {\small$\sqrt{\mathrm{max}(0,-N^2_{\inf})}$} -- (1.6,.25);
\draw[->] (2.5,.6) node[anchor=west] {\small$\mathrm{max}(0,-\gamma(A_2))$} -- (2.1,.1);
\draw (0,-1.8) node[anchor=east] {\small$\mathfrak{S}_1$};
\draw (0,.3) node {$\times$};
\draw (0,-.3) node {$\times$};
\draw (0,.7) node {$\times$};
\draw (0,-.7) node {$\times$};
\draw (0,1.3) node {$\times$};
\draw (0,-1.3) node {$\times$};
\draw (0,2.3) node[color=red] {$\times$};
\draw (0,-2.3) node[color=red] {$\times$};
\draw (0,2.6) node[color=red] {$\times$};
\draw (0,-2.6) node[color=red] {$\times$};
\end{tikzpicture}
\caption{An illustration of the spectrum $\sigma(L)$ after Rogister and Valette \cite{RogisterValette:2009}. The dark cross is the set $\mathfrak{S}_1$ which by Theorem \ref{thm:pp}, Lemma \ref{lem:pt} and the discussion after the proof of that lemma, we have shown must contain the essential spectrum $\sigma_{ess}(L)$, but may in general be larger than the essential spectrum. By Proposition \ref{prop:DS}, the full spectrum $\sigma(L)$ is contained in the union of the imaginary axis and region surrounded by the dashed curve. The crosses on the imaginary axis are included to indicate eigenvalues, which could also occur within the dashed curve. The crosses which appear outside of the essential spectrum are red, indicating that they are part of $\sigma(S_1)$ which is the acoustic part of the spectrum.}

\label{fig:cross}
\end{figure}

\tred{In fact, the inclusion in \eqref{def:S1} will always be an equality for the rotating self-gravitating truncated gas planets that we consider. This is because, using \eqref{eq: PerturbGrav}, calculation shows
\[
\Omega \cdot g_0'(x) = -\Omega \cdot \int_{M} G\frac{(x - x')}{|x-x'|^3} \rho^0(x')\, \mathrm{d} x'.
\]
Let $x_{max} \in \partial M$ maximize $\Omega \cdot x$. Then for any $x' \in \partial M$, $\Omega \cdot (x_{max}-x')\geq 0$ which implies $\Omega \cdot g_0'(x_{max}) \leq 0$. Similarly, by choosing $x_{min}$ that minimizes $\Omega \cdot x$ we can show $\Omega \cdot g_0'(x_{min}) \geq 0$. Therefore, by continuity, $\Omega \cdot \hat{g}'_0 = 0$ somewhere in $M$ and so at this point $\beta_- = 0$. Because of this, we can conclude equality in \eqref{def:S1}.
On the other hand, in the f-plane approximation as considered in \cite{VidalCdV:2024}, $\Omega \cdot \hat{g}_0'$ is constant and so never vanishes and \eqref{def:S1} is a proper inclusion.}

\begin{remark}
For a neutrally buoyant planet, $\tilde{s} = 0$ and $N^2 = 0$. Then the relevant operator $M(\lambda)$ reduces to the Poincar\'{e} operator \cite{RieutordNoui:1999}. In the polytropic model, the planet is neutrally buoyant. In the case that $N^2 = 0$, Theorem \ref{thm:pp} also gives the essential spectrum of $2 \pi_2 R_\Omega \pi_2^*$ which, by Lemma \ref{lem:pt}, is $i[-|\Omega|,|\Omega|]$. As observed in \cite[Page 138]{valette1987spectre}, this interval contains the spectrum of $2 \pi_2 R_\Omega\pi_2^*$, and so must in fact be equal to the full spectrum.
\end{remark}

\begin{proof}[Proof of Theorem \ref{thm:pp}]
    We begin by proving the inclusion,
\begin{equation}\label{eq:leftinc}
\begin{split}
    \sigma_{ess}(L) &\supset \bigcup_{x\in M,\ \pm\in \{-1,1\}} \left [ -\sqrt{\max(0,-N^2)},\sqrt{\max(0,-N^2)}\right ]\cup \pm i\left [ \sqrt{\max(0,\beta_-)},\sqrt{\beta_+} \right ]\\
    & = \bigcup_{(x,\xi)\in M\times \mathbb{R}^3 \setminus\{0\}} \sigma_{pt}(x,\xi).
\end{split}    
\end{equation}
    Our method for this step is inspired by \cite[Theorem 2.1]{CdV:2020}, which considers a similar but simpler problem for a scalar function. Suppose that $\lambda \in \mathbb{C}$ is contained in $\sigma_{pt}(x_0,\xi_0)$ such that $x_0 \in M^{int}$. Thus, there exists nonzero $\eta$ orthogonal to $\xi_0$ such that
\begin{equation}\label{1.1}
    \lambda^2 P_{\xi_0} \eta + 2 \lambda P_{\xi} (\Omega \times P_{\xi_0} \eta) + N^2 (P_{\xi_0} \hat{g}_0 \cdot P_{\xi_0} \eta) P_{\xi_0} \hat{g}_0 = 0.
\end{equation}
Then, for any $\epsilon>0$, choose a neighbourhood $U \subset M^{int}$ of $x_0$ such that at all $x \in U$
\[
|\lambda^2 P_{\xi_0} \eta + 2 \lambda P_{\xi_0} (\Omega \times P_{\xi_0} \eta) + N^2 (P_{\xi_0} \hat{g}_0 \cdot P_{\xi_0} \eta) P_{\xi_0} \hat{g}_0| < \epsilon.
\]
Let $\phi \in C_c^\infty(U)$ be such that $\|\phi\|_{L^2(\rho^0  \dd x)} = 1$ and consider
\[
u(x) = \eta \phi(x) e^{it x \cdot \xi_0}.
\]
Considering the Fourier transform, we can see that as $t \rightarrow \infty$, $u$ converges to zero weakly. Since $\pi_2$ is a pseudodifferential operator with principal symbol given by \eqref{eq:pisymbol}, using the fact that $\xi_0$ is orthogonal to $\eta$, we have
\[
\pi_2(u)(x) = \phi(x) e^{it x \cdot \xi_0} \eta + O\left (\frac{1}{t} \right ).
\]
Therefore, for $t$ sufficiently large $\|\pi_2(u)\|_{L^2(\rho_0 \dd x)^3}>C>0$ where $C$ is a constant independent of $t$. Also, since $\pi_2$ is continuous $\pi_2(u)$ converges weakly to zero as $t \rightarrow \infty$. Let us set
\[
v = \frac{\pi_2(u)}{\| \pi_2(u) \|_{H}} \in \operatorname{Ker}(T).
\]
Then
\[
M(\lambda)v = \frac{1}{\| \pi_2(u) \|_{H}}\pi_2 (\lambda^2 {\rm Id}+2\lambda R_\Omega + N^2 \hat{g}_0' \hat{g}_0'^T) \pi_2 u
\]
and the operator on the right side is a pseudodifferential operator with principal symbol given by the map \eqref{eq:interiorp}. Thus,
\[
M(\lambda)v = \frac{1}{\| \pi_2(u) \|_{H}}\Big (\lambda^2 P_{\xi_0} \eta + 2 \lambda P_{\xi_0} (\Omega \times P_{\xi_0} \eta) + N^2 (P_{\xi_0} \hat{g}_0 \cdot P_{\xi_0} \eta) P_{\xi_0} \hat{g}_0 \Big ) \phi(x) e^{it x \cdot \xi_0} + O\left (\frac{1}{t} \right )
\]
and so by taking $t$ sufficiently large
\[
\|M(\lambda) v \|_{L^2(\rho_0 \dd x)^3} \leq \frac{2}{\| \pi_2(u) \|_{H}} \epsilon.
\]
Since $\epsilon >0$ was arbitrary we see that $M(\lambda) v$ converges to zero strongly and so $v$ defines a Weyl sequence. Therefore $\lambda \in \sigma_{ess}(M) = \sigma_{ess}(L)$. This proves $\sigma_{pt}(x_0,\xi_0) \subset \sigma_{ess}(L)$ for $x_0 \in M^{int}$. Since the essential spectrum is closed and \eqref{eq:lam1} is a continuous function of $x$ once $\pm$ is chosen, for $x_0 \in \partial M$ we can take a limit from $M^{int}$ to show $\sigma_{pt}(x_0,\xi) \subset \sigma_{ess}(L)$. This completes the proof of \eqref{eq:leftinc}.

To complete the proof, our method will be to introduce a certain system of PDEs, then show that this system satisfies the Lopatinskii conditions \cite{agmon1964estimates}, see also \cite[Chapter 5, Proposition 11.9]{Taylor1}, if and only if $\lambda$ is in the complement of the right side of \eqref{eq: SpectrumCondition}. When the Lopatinskii conditions are satisfied, the system is a Fredholm operator which implies $M(\lambda)$ is also Fredholm. Therefore, in this case $\lambda \in \sigma_{ess}(L)^c$ which will establish the right inclusion of \eqref{eq: SpectrumCondition}. The Lopatinskii conditions fail if either the system is not elliptic in the interior, or at the boundary. As we will see, interior ellipticity of the system is equivalent to
\begin{equation}\label{eq:lamintell}
\lambda \in \left (\bigcup_{(x,\xi)\in M\times \mathbb{R}^3 \setminus\{0\}} \sigma_{pt}(x,\xi) \right )^c.
\end{equation}
We have already shown that failure of this condition leads to existence of a Weyl sequence. Assuming interior ellipticity, we will show that boundary ellipticity is equivalent to
\[
\lambda \in \left (\bigcup_{x\in \partial M} i |P_n^\perp \hat{g}_0'|\left [ -\sqrt{\max(0,N^2)},\sqrt{\max(0,N^2)}\right 
]\right )^c
\]
We will show that failure of this condition also leads to existence of a Weyl sequence, which will complete the proof. Let us begin now deriving the PDE system.

For any $v \in H$ let us consider the decomposition given by Lemma \ref{lem:T-compact}, which can be written as
\[
v = w + T^* \varphi
\]
where $w \in \operatorname{Ker}(T)$ and $\varphi \in H^1(M)$. Let us further decompose $w$ according the standard Helmholtz decomposition as
\[
w = \nabla \times (\rho_0 w_v) + \nabla \varphi_v
\]
where $\varphi_v \in H^1(M)$ and the vector potential $\rho_0 w_v$ is in the space
\[
H_{\mathrm{Curl},0}(M) = \{ u \in L^2(\rho_0 \dd x) \ : \ \nabla \times u \in L^2(\rho_0 \dd x), \ n\times u|_{\partial M} = 0 \},
\]
while also satisfying
\[
\nabla \cdot (\rho_0 w_v) = 0.
\]
Given that $M$ is a ball, a unique such decomposition exists (see \cite[Section 3]{alberti2019essential}). Let us set $\rho_0 z_v = \nabla \varphi_v$ which must then satisfy
\[
\nabla \times (\rho_0 z_v) = 0.
\]
Then $w \in \operatorname{Ker}(T)$ is equivalent to
\[
\nabla \cdot (\rho_0 z_v) + \frac{g_0'}{c^2} \cdot \nabla \times (\rho_0 w_v) + \frac{\rho_0 g_0'}{c^2} \cdot z_v = 0 , \quad n \cdot z_v|_{\partial M} = 0.
\]
Now, suppose that $u \in \operatorname{Ker}(T)$ satisfies
\begin{equation}\label{eq:Mluf}
M(\lambda) u = f.
\end{equation}
As described above for $v$, there will be $w_u$ and $z_u$ such that
\[
u = \nabla \times (\rho_0 w_u) + \rho_0 z_u
\]
where
\begin{align}
\nabla \times (\rho_0 z_u) & = 0, \label{eq:curlz}
\\ \nabla \cdot (\rho_0 w_u) & = 0, \\
\nabla \cdot (\rho_0 z_u) + \frac{g_0'}{c^2} \cdot \nabla \times (\rho_0 w_u) + \frac{\rho_0  g_0'}{c^2} \cdot z_u & = 0 , \\ n \cdot z_u|_{\partial M} &= 0, \\ n \times w_u|_{\partial M} &= 0. \label{eq:wbd}
\end{align}
We comment that the same equations \eqref{eq:curlz}-\eqref{eq:wbd} will hold for $w_v$ and $z_v$ constructed above for arbitrary $v$. Indeed, let $V(\lambda) = \lambda^2 \mathrm{I} + 2 \lambda R_\Omega + N^2 \hat{g}_0'\hat{g}_0'^T$ and
\[
v = V(\lambda) u
\]
so that $f = \pi_2 v$. These equations become
\begin{align}
\nabla \times (\rho_0 w_v) + \rho_0 z_v + T^* \varphi_v & = V(\lambda) ( \nabla \times (\rho_0 w_u) + \rho_0 z_u), \label{eq:vVu}\\
f & = \nabla \times(\rho_0  w_v) + \rho_0 z_v.
\end{align}
To make the system of equations elliptic, we will also add several potentials $\psi_u$, $\psi_v$ and $\widetilde{\varphi}$. Setting these equal to zero, we find that the following system is satisfied.
\begin{equation} \label{eq:bigsys}
\begin{pmatrix}
    \frac{g_0'^T}{c^2} \nabla \times \rho_0 & \nabla \cdot \rho_0 + \frac{\rho_0 g_0'^T}{c^2} & 0 & 0 & 0 & 0 & 0 & 0\\
    0 & \nabla \times \rho_0  & \nabla \rho_0 & 0 & 0 & 0 & 0 & 0\\
    \nabla \cdot \rho_0 & 0 & 0 & 0 & 0 & 0 & 0 & 0\\
    0 & 0 & 0 & \frac{g_0'^T}{c^2}\nabla \times \rho_0  & \nabla \cdot\rho_0  + \frac{\rho_0 g_0'^T}{c^2} & 0 & 0 & 0\\
    0 & 0 & 0 & 0 & \nabla \times\rho_0  & \nabla\rho_0  & 0 & 0\\
    0 & 0 & 0 & \nabla \cdot\rho_0  & 0 & 0 & 0 & 0\\
    V(\lambda) \nabla \times\rho_0  & V(\lambda)\rho_0  & 0 & - \nabla \times\rho_0 & -\rho_0 I & 0 & -T^* & 0\\
    0 & 0 & 0 & \nabla \times \rho_0  & \rho_0 I & 0 & 0 & -T^*
\end{pmatrix}
\begin{pmatrix}
    w_u\\
    z_u\\
    \psi_u\\
    w_v\\
    z_v\\
    \psi_v\\
    \varphi_v\\
    \widetilde{\varphi}
\end{pmatrix}
=
\begin{pmatrix}
    0\\
    0\\
    0\\
    0\\
    0\\
    0\\
    0\\
    f\\
\end{pmatrix},
\end{equation}
\[
n\times w_u|_{\partial M} = n\times w_v|_{\partial M} = 0,\ n\cdot z_u|_{\partial M} = n \cdot z_v|_{\partial M} = \psi_u|_{\partial M} = \psi_v|_{\partial M} = 0.
\]
In Lemma \ref{lem:Lopatinskii}, we show that the system \eqref{eq:bigsys} satisfies the Lopatinskii conditions when $\lambda$ is in the complement of the right side of the \eqref{eq: SpectrumCondition}.
Therefore, for such $\lambda$ and by \cite[Chapter 5, Proposition 11.16]{Taylor1}, when acting on $H^1(M)^{16}$ the corresponding operator is Fredholm. Considering that whenever \eqref{eq:Mluf} is satisfied we have \eqref{eq:bigsys}, we therefore conclude that $M(\lambda)$ is also Fredholm in this case. Thus $\lambda \in \sigma_{ess}(M)^c = \sigma_{ess}(L)^c$ which shows the right inclusion for \eqref{eq: SpectrumCondition}.

All that remains is to show that when
\begin{equation}\label{eq:boundfail}
\lambda \in \left (\bigcup_{x\in \partial M} i|P_n\hat{g}_0'|\left [ -\sqrt{\max(0,N^2)},\sqrt{\max(0,N^2)}\right 
]\right ) \bigcap \left (\bigcup_{(x,\xi)\in M\times \mathbb{R}^3 \setminus\{0\}} \sigma_{pt}(x,\xi) \right )^c,
\end{equation}
$\lambda \in \sigma_{ess}(M)$. For this last step, it is necessary to use the details of the computation checking the Lopatinskii condition, and so it is also proven in Lemma \ref{lem:Lopatinskii}. Therefore, using Lemma \ref{lem:Lopatinskii} the proof is complete.

\end{proof}

The next lemma is the key technical step in the proof of Theorem \ref{thm:pp}.

\begin{lemma}\label{lem:Lopatinskii}
    Suppose that $\lambda$ is in the complement of the right side of \eqref{eq: SpectrumCondition}.
    Then the system \eqref{eq:bigsys} satisfies the Lopatinskii conditions. Furthermore, suppose \eqref{eq:boundfail}. Then $\lambda \in \sigma_{ess}(M)$.
\end{lemma}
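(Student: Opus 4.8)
The plan is to verify the Shapiro–Lopatinskii (complementing) conditions \cite{agmon1964estimates}, \cite[Chapter 5]{Taylor1} for the Douglis–Nirenberg system \eqref{eq:bigsys} at every point of $\partial M$, and, where they fail, to read off a Weyl sequence for $M(\lambda)$ from the obstruction. First I would settle interior ellipticity: equip \eqref{eq:bigsys} with Douglis–Nirenberg weights assigning index $2$ to the vector potentials $w_u,w_v$ and the potentials $\varphi_v,\widetilde\varphi$ and index $1$ to $z_u,z_v,\psi_u,\psi_v$ (so $\nabla\times\rho_0$, $\nabla\cdot\rho_0$, $\nabla\rho_0$, $T^*$ appear to top order and $V(\lambda)\rho_0$, $\rho_0 I$ one order below), and compute the principal determinant by Schur elimination. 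Rows $1$–$6$ are the usual elliptic div–curl/gauge symbol, which forces $\widehat z_u=\widehat z_v=0$, $\widehat\psi_u=\widehat\psi_v=0$, $\xi\cdot\widehat w_u=\xi\cdot\widehat w_v=0$ and identifies $\widehat u=i\xi\times\widehat w_u$, $\widehat v=i\xi\times\widehat w_v$ up to harmless $\rho_0$-factors; rows $7$–$8$ then collapse to $P_\xi^\perp V(\lambda)P_\xi^\perp\,\widehat u=P_\xi^\perp\widehat v$ and $P_\xi^\perp\widehat v=P_\xi^\perp\widehat f/\rho_0$, i.e. to the $2\times2$ map \eqref{eq:interiorp} on $\operatorname{Ran}(P_\xi^\perp)$. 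Hence the principal symbol is invertible for all $(x,\xi)$ with $\xi\neq0$ exactly when that map has full rank everywhere, i.e. $\lambda\notin\bigcup_{(x,\xi)}\sigma_{pt}(x,\xi)$, which by Lemma \ref{lem:pt} and \eqref{eq:spt} is the complement of the first line of \eqref{eq: SpectrumCondition}, that is \eqref{eq:lamintell}; its failure already gives a Weyl sequence, as in the first part of the proof of Theorem \ref{thm:pp}.

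Assuming interior ellipticity, I would check the boundary condition. Fix $x_0\in\partial M$ and $0\neq\xi'\in T^*_{x_0}\partial M$, pass to boundary normal coordinates with $x_n\ge0$ and inward normal $n$, freeze coefficients at $x_0$, replace tangential derivatives by $i\xi'$, and write $D=-i\partial_{x_n}$; the Lopatinskii condition requires that the only solution of the resulting constant-coefficient ODE system on $x_n>0$ bounded as $x_n\to+\infty$ and satisfying $n\times w_u=n\times w_v=0$, $n\cdot z_u=n\cdot z_v=0$, $\psi_u=\psi_v=0$ at $x_n=0$ is trivial. I would reduce this in two steps. First, the $v$-variables together with $\varphi_v,\widetilde\varphi$ and rows $4$–$6,8$ form, on their own, a standard Hodge-type boundary value problem (reconstructing a field from its $\pi_2$-component), which is always Lopatinskii-regular; it slaves those variables to the $u$-block and leaves the reduced problem of finding a decaying $u=\nabla\times(\rho_0 w_u)+\rho_0 z_u$, with $(w_u,z_u,\psi_u)$ solving rows $1$–$3$, $n\times w_u=n\cdot z_u=\psi_u=0$ and $\pi_2 V(\lambda)u=0$ at $x_n=0$. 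Second, the characteristic roots $\tau$ with $\operatorname{Im}\tau>0$ separate into those forced by the div–curl/gauge structure (the boundary-layer scale $|\xi'|$, whose generalized eigenvectors realise the Helmholtz pieces) and the ``dynamical'' roots, which solve $\det\bigl(P_\zeta^\perp V(\lambda)P_\zeta^\perp\bigr)=0$ with $\zeta=\xi'+\tau n$ --- by \eqref{eq:lambsq} the quadratic $(\lambda^2+N^2)|\zeta|^2+4(\Omega\cdot\zeta)^2-N^2(\hat g_0'\cdot\zeta)^2=0$ in $\tau$, which has no real root under interior ellipticity, so the number of stable modes is constant and equals the number ($8$) of boundary conditions. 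Assembling the stable solution space, imposing the boundary conditions, and using $g_0'\cdot n<0$ to fix orientation, I expect the Lopatinskii determinant to reduce to an explicit scalar in $\lambda,\xi',\Omega$ and the boundary values of $N^2,\hat g_0',n$ that vanishes precisely for $\lambda\in i|P_n^\perp\hat g_0'|\,[-\sqrt{\max(0,N^2)},\sqrt{\max(0,N^2)}]$ at $x_0$, i.e. exactly on the second line of \eqref{eq: SpectrumCondition}. (Under hydrostatic equilibrium $\hat g_0'=\pm n$ on $\partial M$ this term is just $\{0\}$; it is recorded in the stated generality, cf. Figure \ref{fig:pointwise}.) Combined with the interior step and \cite[Chapter 5, Proposition 11.16]{Taylor1}, this shows \eqref{eq:bigsys} is Fredholm on $H^1(M)^{16}$ exactly on the complement of \eqref{eq: SpectrumCondition}.

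For the ``furthermore'': if \eqref{eq:boundfail} holds, interior ellipticity holds but, by the previous step, the reduced Lopatinskii problem at some $x_0\in\partial M$ has a nonzero exponentially decaying solution $u_\sharp(x_n)$ meeting all homogeneous boundary conditions and $\pi_2 V(\lambda)u_\sharp|_{x_n=0}=0$: a surface mode. I would then construct a Weyl sequence for $M(\lambda)$ by localizing and rescaling: for a fixed cutoff $\chi$ supported near $x_0$, put $u_t(x)=\chi(x)\,u_\sharp(t\,x_n)\,e^{\,it\,x'\cdot\xi'}$ and normalize. As $t\to\infty$, $u_t\rightharpoonup0$ weakly, $\|u_t\|_H\sim1$, and $\|\pi_2 u_t\|_H$ stays bounded below (the principal symbol of $\pi_2$ is \eqref{eq:pisymbol}); since $u_\sharp$ solves the frozen-coefficient equation, satisfies the boundary condition, and is exponentially localized in $x_n$, the commutator with $\chi$, the coefficient-freezing error, and the $O(1/t)$ error from $\pi_2$ being a $\Psi$DO with principal symbol \eqref{eq:pisymbol} are all $o(1)$ in $H$, so $M(\lambda)\bigl(\pi_2 u_t/\|\pi_2 u_t\|_H\bigr)\to0$ in $H$; hence $\pi_2 u_t/\|\pi_2 u_t\|_H$ is a Weyl sequence and $\lambda\in\sigma_{ess}(M)=\sigma_{ess}(L)$.

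The hard part will be the boundary step: diagonalizing (or triangularizing) the large boundary symbol, cleanly separating the constant-multiplicity div–curl/gauge modes from the dynamical modes while tracking their coupling through rows $7$–$8$, checking that the number of stable modes equals the number of boundary conditions (so injectivity suffices), and reducing the resulting $8\times8$ Lopatinskii determinant to the scalar condition $\lambda\in i|P_n^\perp\hat g_0'|[-\sqrt{\max(0,N^2)},\sqrt{\max(0,N^2)}]$. The converse surface-mode construction is comparatively routine once that stable mode is identified.
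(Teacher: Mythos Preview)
Your architecture is right and matches the paper's: interior ellipticity reduces to invertibility of $P_\xi^\perp V(\lambda)P_\xi^\perp$, and the Lopatinskii check at the boundary is the heart of the matter. But the execution differs from the paper in two substantive ways, and one of your reductions is not obviously valid.

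\textbf{First difference: no Douglis--Nirenberg weights.} The paper treats \eqref{eq:bigsys} as a uniformly first-order $16\times16$ system: the principal symbol is obtained by replacing $\nabla\times,\nabla\cdot,\nabla,T^*$ by $i\xi\times,i\xi^T,i\xi,i\xi$ and dropping the zeroth-order blocks $V(\lambda)\rho_0$, $\rho_0 I$. Your DN weighting would push $V(\lambda)$ into the principal symbol differently; either scheme can work, but you should be aware that the paper's computation is in the uniform first-order setting.

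\textbf{Second difference: no decoupling.} The paper does \emph{not} split off the $v$-block and claim it is Lopatinskii-regular on its own. Instead it uses interior ellipticity at $\xi=n$ to write the boundary ODE as $\dd U/\dd x^3 = KU$ with $K=-\sigma_p(\mathcal{M})(n/i)^{-1}\sigma_p(\mathcal{M})(\widetilde\xi)$, and then computes the full spectrum of the $16\times16$ matrix $K$: eigenvalues $\pm|\widetilde\xi|$ each of algebraic multiplicity $7$, plus a pair $\alpha_\pm$ (your ``dynamical'' roots). Eight explicit (generalized) eigenvectors $U_{1,-},\dots,U_{8,-}$ span the stable subspace, and applying the boundary matrix $\mathcal{B}$ shows the Lopatinskii system is uniquely solvable iff the $n_\perp$-coefficient of the $w_u$-component of $U_{8,-}$ is nonzero, namely $\hat\xi^T\widetilde V_n^{-1}\hat\xi\neq0$. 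An explicit formula for $\widetilde V_n^{-1}$ then gives $\hat\xi^T\widetilde V_n^{-1}\hat\xi=(\lambda^2+N^2|P_n^\perp\hat g_0'|^2\,\hat\xi^TP_{(P_n^\perp\hat g_0')}^\perp\hat\xi)/(\lambda^4+\lambda^2(N^2|P_n^\perp\hat g_0'|^2+4\Omega_n^2))$, from which the second line of \eqref{eq: SpectrumCondition} drops out. Your proposed decoupling is suspect: row 7 holds in the bulk, not just at $x_n=0$, and it introduces the extra unknown $\varphi_v$, so the claim that the reduced $u$-problem carries ``$\pi_2V(\lambda)u=0$ at $x_n=0$'' as a boundary condition is not what the system says. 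You may be able to make a two-stage reduction rigorous, but as written it is a gap; the paper sidesteps it by brute-force diagonalization of $K$.

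\textbf{Weyl sequence.} Your surface-mode construction is morally the paper's, but the paper builds it through the system variables: from the kernel vector $\zeta=U_{8,-}-ib_{8,-}U_{1,-}-id_{8,-}U_{2,-}$ satisfying $\mathcal{B}\zeta=0$, it writes down $w_u,z_u$ explicitly, sets $u=\pi_2(\nabla\times(\rho_0 w_u)+\rho_0 z_u)$, and verifies $\|u\|_H\sim t$ (this uses that $a_{8,-},b_{8,-}$ do not vanish simultaneously) while $M(\lambda)u=\epsilon\,\mathcal{O}(t)+\mathcal{O}(1)$. Your shortcut $u_t=\chi\,u_\sharp(tx_n)e^{itx'\cdot\xi'}$ needs the same lower bound on $\|\pi_2 u_t\|_H$, which is not automatic.

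In short: your plan is sound, but ``I expect the Lopatinskii determinant to reduce to\dots'' is precisely the content of the lemma, and the paper supplies it by explicit eigenvector computation for $K$ rather than by block decoupling.
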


\begin{proof}
Let the operator on the left side of \eqref{eq:bigsys} be labeled $\mathcal{M}(
\lambda)$. Also suppose we collect the relevant operators for the boundary conditions in one large matrix 
\begin{equation}\label{eq:B}
\mathcal{B} = 
\begin{pmatrix}
    n\times & 0 & 0 & 0 & 0 & 0 & 0 & 0\\
    0 & n^T & 0 & 0 & 0 & 0 & 0 & 0\\
    0 & 0 & 1 & 0 & 0 & 0 & 0 & 0\\
    0 & 0 & 0 & n\times & 0 & 0 & 0 & 0\\
    0 & 0 & 0 & 0 & n^T & 0 & 0 & 0\\
    0 & 0 & 0 & 0 & 0 & 1 & 0 & 0
\end{pmatrix}.
\end{equation}
The principal symbol of $\mathcal{M}(\lambda)$ is
\begin{equation}
\sigma_p(\mathcal{M}(\lambda)) = i \rho_0
    \begin{pmatrix}
        \frac{g_0^T}{c^2} \xi\times & \xi^T & 0 & 0 & 0 & 0 & 0 & 0 \\
        0 & \xi \times & \xi & 0 & 0 & 0 & 0 & 0\\
        \xi^T & 0 & 0 & 0 & 0 & 0 & 0 & 0\\
        0 & 0 & 0 & \frac{g_0^T}{c^2} \xi\times & \xi^T & 0 & 0 & 0\\
        0 & 0 & 0 & 0 & \xi\times & \xi & 0 & 0 \\
        0 & 0 & 0 & \xi^T & 0 & 0 & 0 & 0\\
        V(\lambda) \xi \times & 0 & 0 & -\xi \times & 0 & 0 & \xi & 0\\
        0 & 0 & 0 & \xi \times & 0 & 0 & 0 & \xi
    \end{pmatrix}.
\end{equation}
This can be shown to be invertible if $V(\lambda)$ is invertible when projected onto the space orthogonal to $\xi$. Indeed, let us define
\[
V_{\xi_\perp\xi_\perp}(\lambda) = P_\xi^\perp V(\lambda) P_\xi^\perp, \ V_{\xi\xi_\perp}(\lambda) = P_\xi V(\lambda) P_\xi^\perp
\]
where $P_{\xi}$ is the projection onto the span of $\xi$ and $P_\xi^\perp$ the projection onto the space orthogonal to $\xi$. The condition \eqref{eq:lamintell} is equivalent to invertibility of $\widetilde{V}_\xi (\lambda) = V_{\xi_\perp \xi_\perp}(\lambda) + P_\xi$ at all points $(x,\xi) \in M \times (\mathbb{R}^3 \setminus \{0\})$. In the sequel we will suppress the dependence on $\lambda$ to ease the notation. When it exists, the inverse of $\sigma_p(\mathcal{M})$ is given by
\begin{equation} \label{eq:Msyminv}
\sigma_p(\mathcal{M})^{-1} = -\frac{i}{\rho_0 |\xi|^2}
    \begin{pmatrix}
        0 & 0 & \xi & 0 & 0 & 0 & - \xi \times \widetilde{V}^{-1}_\xi & - \xi \times \widetilde{V}^{-1}_\xi\\
        \xi & - \xi \times & 0 & 0 & 0 & 0 & -\xi \frac{g_0'^T}{c^2} P_\xi^\perp \widetilde{V}^{-1}_{\xi} & -\xi \frac{g_0'^T}{c^2} P_\xi^\perp \widetilde{V}^{-1}_{\xi}\\
        0 & \xi^T & 0 & 0 & 0 & 0 & 0 & 0\\
        0 & 0 & 0 & 0 & 0 & \xi & 0 & -\xi\times\\
        0 & 0 & 0 & \xi & -\xi\times & 0 & 0 & -\xi \frac{g_0'^T}{c^2}P_\xi^\perp\\
        0 & 0 & 0 & 0 & \xi^T & 0 & 0 & 0\\
        0 & 0 & 0 & 0 & 0 & 0 & \xi^T (I - V_{\xi\xi_\perp}) \widetilde{V}^{-1}_\xi & -\xi^T V_{\xi\xi_\perp} \widetilde{V}^{-1}_\xi P_\xi^\perp\\
        0 & 0 & 0 & 0 & 0 & 0 & 0 & \xi^T
    \end{pmatrix}.
\end{equation}
Let us consider the Lopatinskii condition in boundary normal coordinates $(\widetilde{x},x^3)$ where we freeze all coefficients at the central point where the Euclidean metric is the identity and write $n$ for the inward pointing unit normal vector. Without loss of generality we assume the central point is the origin. The condition is that there is a unique non-zero bounded solution of the system
\begin{equation}\label{eq:LopODE}
\sigma_p(\mathcal{M})(\widetilde{\xi}+ n D_3) U = 0, \quad \mathcal{B}U = \eta
\end{equation}
for any non-zero real $\widetilde{\xi} \in \mathbb{R}^{3}$ orthogonal to $n$ and $\eta \in \mathbb{C}^{8}$. Assuming $\lambda \in \sigma_{pt}((\widetilde{x},x^3),n)^c$, the ODE \eqref{eq:LopODE} is equivalent to
\[
\frac{\dd U}{\dd x^3} = -\sigma_p(\mathcal{M})\left (\frac{n}{i}\right )^{-1}\sigma_p(\mathcal{M})(\widetilde{\xi})U
\]
and checking the condition amounts to analysing the eigenvalues and eigenvectors of the matrix on the right side of this equation. Let us label this matrix
\[
K =  -\sigma_p(\mathcal{M})\left (\frac{n}{i}\right )^{-1}\sigma_p(\mathcal{M})(\widetilde{\xi}).
\]
Note that, because of \eqref{eq:Msyminv}, when the ellipticity condition is satisfied at the boundary $K$ cannot have any eigenvalues with zero real part. Considerable calculation shows that the eigenvalues of $K$ are $\pm |\widetilde{\xi}|$ each with algebraic multiplicity $7$ and
\[
\begin{split}
\alpha_\pm & = i|\widetilde{\xi}|\Bigg ( n^TV_{nn_\perp} \widetilde{V}_n^{-1} \hat{\xi}+\hat{\xi}^T\widetilde{V}_n^{-1}V_{n_\perp n} n \\
& \hskip1in \mp \sqrt{(n^TV_{nn_\perp} \widetilde{V}_n^{-1} \hat{\xi}-\hat{\xi}^T\widetilde{V}_n^{-1}V_{n_\perp n} n)^2 - 4(\hat{\xi}\widetilde{V}_n^{-1}\hat{\xi})n^T(V_{nn}-V_{nn_\perp}\widetilde{V}_n^{-1}V_{n_\perp n})n} \Bigg ) /2
\end{split}
\]
with multiplicity 1, or possibly $\pm|\widetilde{\xi}|$ with multiplicity $8$ if $\alpha_\pm  = \pm |\widetilde{\xi}|$. Note that, provided \eqref{eq:lamintell} holds, $\alpha_\pm$ must have non-zero real part by the ellipticity condition.

Let us introduce the notation
\[
\hat{\xi} = \frac{\widetilde{\xi}}{|\widetilde{\xi}|}, \quad n_\perp = \hat{\xi} \times n.
\]
Eigenvectors for $\pm |\widetilde{\xi}|$ are
\[
U_{1,\pm} = \begin{pmatrix}
    n \pm i \hat{\xi}\\
    0\\0\\0\\0\\0\\0\\0
\end{pmatrix},\
U_{2,\pm} = \begin{pmatrix}
    0\\ n \pm i \hat{\xi}\\0\\0\\0\\0\\0\\0
\end{pmatrix},
U_{3,\pm} = \begin{pmatrix}
    0\\n_\perp\\\pm i \\0\\0\\0\\0\\0
\end{pmatrix},
\]
\[
U_{4,\pm} = \begin{pmatrix}
    0\\0\\0\\n\pm i\hat{\xi}\\0\\0\\0\\0
\end{pmatrix},\
U_{5,\pm} = \begin{pmatrix}
    0\\0\\0\\0\\n\pm i \hat{\xi}\\0\\0\\0
\end{pmatrix},\
U_{6,\pm} = \begin{pmatrix}
    0\\0\\0\\0\\n_\perp\\\pm i\\0\\0
\end{pmatrix},
\]
and there are either eigenvectors or generalised eigenvectors for $\pm|\widetilde{\xi}|$ of the form
\[
U_{7,\pm} = \begin{pmatrix}
    0\\0\\0\\n_\perp\\a_{7,\pm}n+ b_{7,\pm}\hat{\xi}\\0\\\pm i\\\mp i
\end{pmatrix}
\]
for some constants $a_{7,\pm}$, $b_{7,\pm} \in \mathbb{C}$. Finally, either eigenvectors for $\alpha_\pm$ or generalised eigenvectors for $\pm |\tilde{\xi}|$ are given by
\[
U_{8,\pm} = \begin{pmatrix}
    2(\hat{\xi}^T\widetilde{V}_n^{-1} \hat{\xi}) n_\perp + a_{8,
    \pm} n + b_{8,\pm}\hat{\xi}\\ c_{8,
    \pm} n + d_{8,\pm}\hat{\xi}\\0\\0\\0\\0\\(n^TV_{nn_\perp} \widetilde{V}_n^{-1} \hat{\xi}-\hat{\xi}^T\widetilde{V}_n^{-1}V_{n_\perp n} n) \hskip3.8in \\
    \hskip1in \pm \sqrt{(n^TV_{nn_\perp} \widetilde{V}_n^{-1} \hat{\xi}-\hat{\xi}^T\widetilde{V}_n^{-1}V_{n_\perp n} n)^2 - 4 (\hat{\xi}^T\widetilde{V}_n^{-1}\hat{\xi})n^T(V_{nn}-V_{nn_\perp}\widetilde{V}_n^{-1}V_{n_\perp n})n }\\0
\end{pmatrix}
\]
for some constants $a_{8,\pm}$, $b_{8,\pm}$, $c_{8,\pm}$, $d_{8,\pm} \in \mathbb{C}$. For the Lopatinskii condition we must restrict to the generalised eigenspace corresponding to eigenvalues with negative real part. Thus, existence of a unique bounded solution of \eqref{eq:LopODE} is equivalent to a unique solution $(a_1, \ ... \ , a_8) \in \mathbb{C}^8$ of the system
\[
\mathcal{B}\sum_{j=1}^8 a_j U_{j,-} = \eta.
\]
Using \eqref{eq:B} and the equations for $U_{j,-}$ above we see that this linear system will have a unique solution if and only if $\hat{\xi}^T \widetilde{V}^{-1}_n \hat{\xi} \neq 0$. Calculation shows
\[
\widetilde{V}_{n}^{-1} = P_n + \frac{1}{\lambda^4 + \lambda^2 (N^2 |P_n^\perp\hat{g}_0'|^2 + 4 \Omega_n^2)} \Big (\lambda^2 P_n^\perp -2\lambda \Omega_n R_n + N^2 |P_{n}^{\perp}\hat{g}_0'|^2 P_{n}^\perp P_{(P_{n}^{\perp}\hat{g}_0')}^\perp P_{n}^\perp  \Big ),
\]
and so, since $\hat{\xi}$ is orthogonal to $n$,
\[
\hat{\xi}^T \widetilde{V}^{-1}_n \hat{\xi} = \frac{\lambda^2 + N^2 |P_{n}^{\perp}\hat{g}_0'|^2 \hat{\xi}^T P_{(P_{n}^{\perp}\hat{g}_0')}^\perp\hat{\xi}}{\lambda^4 + \lambda^2 (N^2 |P_n^\perp\hat{g}_0'|^2 + 4 \Omega_n^2)}
\]
Therefore, for $\lambda$ satisfying the interior ellipticity condition \eqref{eq:lamintell}, the Lopatinskii condition fails if and only if
\[
\lambda^2 = -N^2 |P_{n}^{\perp}\hat{g}_0'|^2 \hat{\xi}^T P_{(P_{n}^{\perp}\hat{g}_0')}^\perp\hat{\xi}.
\]
If $|P_n^\perp g_0'| \neq 0$, then $\hat{\xi}^T P_{(P_{n}^{\perp}\hat{g}_0')}^\perp\hat{\xi}$ takes all values in $[0,1]$ while if $|P_n^\perp g_0'|=0$ then the right side of this equation is always equal zero. Therefore, we see that the range of possible values of $\lambda$ satisfying this equation is $|P_n^\perp \hat{g}_0'| [-\sqrt{-N^2},\sqrt{-N^2}]$. If $N^2<0$, this is already contained in the interior part of the essential spectrum given by the first line of \eqref{eq: SpectrumCondition}. If $N^2\geq 0$, this interval will not be contained in the interior part of the essential spectrum and is given, for a single $x \in \partial M$, by the second line in \eqref{eq: SpectrumCondition} (see Figure \ref{fig:pointwise}(a)).

It remains to show that given \eqref{eq:boundfail}, $\lambda \in \sigma_{ess}(M)$. We will do this by showing the existence of a Weyl sequence. Indeed, by the calculations above, we see that when the Lopatinskii condition fails, for some $\widetilde{\xi}$ orthogonal to $n$ if we set $\zeta = U_{8,-} - i b_{8,-} U_{1,-} - i d_{8,-} U_{2,-}$, then we have
\[
\mathcal{B}\zeta = 0.
\]
Since $\zeta$ is composed of eigenvectors for eigenvalues with negative real part, there will be a corresponding non-zero bounded solution $U_\zeta$ of the ODE in \eqref{eq:LopODE} with $U_{\zeta}(
\widetilde{\xi},x^3 = 0) = \zeta$. Given $\epsilon>0$, let us choose a neighborhood $\Omega$ of $x$ sufficiently small so that all coefficients of operator $\mathcal{M}$ vary by at most $\epsilon$ within the neighborhood, and let $\phi \in C_c^\infty(\Omega)$. Then we set
\begin{equation}\label{eq:Udef}
\mathcal{U}(x) = \phi(x) e^{i t \widetilde{x} \cdot \widetilde{\xi}} U_\zeta(\widetilde{\xi},tx^3)
\end{equation}
which is in $H^1(M)^{16}$. With this choice of $\mathcal{U}$ we have
\[
\begin{split}
\mathcal{M}(\lambda) \mathcal{U}(x) & = \mathcal{M}(\lambda)|_{x = 0} \mathcal{U} + \epsilon \mathcal{O}(t)\\
& = it \phi(x) \sigma_p(\mathcal{M})|_{x =0}(\widetilde{\xi} + n D_3)\mathcal{U} + \epsilon \mathcal{O}(t) + \mathcal{O}(1)\\
& = \epsilon \mathcal{O}(t) + \mathcal{O}(1)
\end{split}
\]
as $t \rightarrow \infty$ with norm $H^1(M)^{16}$. Now let $w_u$ and $z_u$ be the corresponding components of $\mathcal{U}$. Since $\hat{\xi}^T \tilde{V}_n^{-1} \hat{\xi} = 0$, in the case when $\alpha_- \neq -|\widetilde{\xi}|$ these are explicitly given by
\begin{equation} \label{eq:wzdef}
\begin{split}
w_u & = e^{t(x^3\alpha_-+ i\widetilde{x}\cdot \widetilde{\xi})}(a_{8,-} n + b_{8_,-}\hat{\xi}) -ib_{8,-} e^{t(-x^3|\widetilde{\xi}| + i\widetilde{x}\cdot \widetilde{\xi})}(n-i\hat{\xi}),\\
z_u & = e^{t(x^3\alpha_-+ i\widetilde{x}\cdot \widetilde{\xi})}(c_{8,-} n + d_{8_,-}\hat{\xi}) -id_{8,-} e^{t(-x^3|\widetilde{\xi}| + i\widetilde{x}\cdot \widetilde{\xi})}(n-i\hat{\xi}).
\end{split}
\end{equation}
In the case that $\alpha_- = -|\widetilde{\xi}|$ and $U_{8,-}$ is a generalized eigenvector, these are replaced by
\begin{equation}\label{eq:wzdefgen}
\begin{split}
w_u & = e^{t(-x^3|\widetilde{\xi}|+ i\widetilde{x}\cdot \widetilde{\xi})}(a_{8,-}-ib_{8,-}) n  +tx^3 e^{t(-x^3|\widetilde{\xi}| + i\widetilde{x}\cdot \widetilde{\xi})}(n-i\hat{\xi}),\\
z_u & = e^{t(-x^3|\widetilde{\xi}|+ i\widetilde{x}\cdot \widetilde{\xi})}(c_{8,-}-id_{8,-}) n  +tx^3 e^{t(-x^3|\widetilde{\xi}| + i\widetilde{x}\cdot \widetilde{\xi})}(n-i\hat{\xi}).
\end{split}
\end{equation}
Then, considering the first component of \eqref{eq:bigsys}, we have $\nabla \times (\rho_0 w_u) + \rho_0 z_u \in D(T)$ and 
\[
T (\nabla \times (\rho_0 w_u) + \rho_0 z_u) = \epsilon \mathcal{O}(t) + \mathcal{O}(1).
\]
By the construction of $\pi_2$ described just above Lemma \ref{lem:pi12}, we have
\[
\pi_2(\nabla \times (\rho_0 w_u) + \rho_0 z_u) = \nabla \times (\rho_0 w_u) + \rho_0 z_u + \epsilon \mathcal{O}(t) + \mathcal{O}(1)
\]
with the norm $H$. With this in mind, let us set $u = \pi_2(\nabla \times (\rho_0 w_u) + \rho_0 z_u) \in \operatorname{Ker}(T)$, and consider $M(\lambda)u$. Using the last and second to last lines in \eqref{eq:bigsys} and the fact that most components of $\mathcal{U}$ are zero, we obtain
\[
M(\lambda)u = \epsilon \mathcal{O}(t) + \mathcal{O}(1).
\]
To construct a Weyl sequence, we need to normalize $u$, and so we consider $\|\nabla \times (\rho_0 w_u) + \rho_0 z_u\|_H$. In the case $U_{8,-}$ is not a generalized eigenvector, using \eqref{eq:wzdef} we see that
\[
\nabla \times (\rho_0 w_u) + \rho_0 z_u = t  e^{t(x^3\alpha_-+ i\widetilde{x}\cdot \widetilde{\xi})}\left ( -\frac{\alpha_-}{|\widetilde{\xi}|} b_{8,-} + i a_{8,-} \right ) |\widetilde{\xi}| n_\perp +  t e^{t(-x^3|\widetilde{\xi}|+ i\widetilde{x}\cdot \widetilde{\xi})} i b_{8,-}|\widetilde{\xi}| n_\perp + \mathcal{O}(1). 
\]
Since, from the calculation constructing $U_{8,-}$, we know that $a_{8,-}$ and $b_{8,-}$ cannot simultaneously vanish, from this last formula we see that
\[
\|u\|_H = \|\nabla \times (\rho_0 w_u) + \rho_0 z_u\|_H + \epsilon \mathcal{O}(t) + \mathcal{O}(t) \approx \mathcal{O}(t).
\]
By this notation, we mean that $\|u\|_H$ is bounded below by $C t$ as $t\rightarrow \infty$ for some constant $C>0$. A similar calculation beginning with \eqref{eq:wzdefgen}, omitted here, proves the same result when $U_{8,-}$ is a generalized eigenvector.
Therefore
\[
M(\lambda) \frac{u}{\|u\|_H} = \epsilon\mathcal{O}(1) + \mathcal{O}(t^{-1})
\]
and so by choosing $t$ sufficiently large we can obtain a sequence $v_\epsilon = u/\|u\|_H \in \operatorname{Ker}(T)$ with $H$-norm equal to one and such that $M(\lambda)v_\epsilon \rightarrow 0$ as $\epsilon \rightarrow 0$. Because of the oscillatory nature of \eqref{eq:Udef}, it is also clear that $v_\epsilon$
converges weakly to zero, meaning it is a Weyl sequence and so $\lambda \in \sigma_{ess}(M)$. This completes the proof.
\end{proof}

\section{Full spectrum bound} \label{sec:right}

In section \ref{sec:pure-point-dense}, we completely characterised the essential spectrum of $L$. We are unable to do the same for the full spectrum, but we can constrain $\sigma(L)$ as in \cite[Theorem 1]{dyson1979perturbations}. For completeness, we include a proof of Proposition \ref{prop:DS}.

\begin{proposition}[Dyson and Schutz] \label{prop:DS}
The spectrum $\sigma(L)$ satisfies
\begin{enumerate}
\item
\[
   \sigma(L) \subseteq \ii \RR \cup \{ \lambda \in \C\ :\
            |\operatorname{Im}(\lambda)| \le |\Omega| \};
\]
\item
while $A_2$ is bounded below by $\gamma(A_2)$, $\lambda \in \sigma(L)$ and $\lambda \notin i \mathbb{R}$,
\[
   |\lambda|^2 \le \max(0,-\gamma(A_2)) .
\]
\end{enumerate}
\end{proposition}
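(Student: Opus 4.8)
The plan is to prove both statements by contraposition: I will show that if $\lambda\in\CC$ lies outside the sets described in (1)--(2), then $L(\lambda)=\lambda^2\operatorname{Id}+2\lambda R_\Omega+A_2$ (see \eqref{eq:Fldef}--\eqref{eq:Llambda}) maps $D(A_2)$ bijectively onto $H$ with bounded inverse, so that $\lambda\in\rho(L)$ by Definition~\ref{def:spect}. The only structural inputs are that $R_\Omega$ (multiplication by $\Omega\times\,\cdot\,$) is bounded and skew-adjoint on the weighted space $H=L^2(\rho_0\,\dd x)^3$, so that $\langle u,R_\Omega u\rangle_H$ is purely imaginary with $|\langle u,R_\Omega u\rangle_H|\le|\Omega|\,\|u\|_H^2$; that $A_2$ is self-adjoint (Corollary~\ref{cor:Hilbert3}); and that $\gamma(A_2):=\inf\sigma(A_2)>-\infty$, which follows from the coercivity estimate \eqref{eq:coerc} of Lemma~\ref{lem:a2}.

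The core step is a numerical-range estimate for the pencil. Fix $u\in D(A_2)$ with $\|u\|_H=1$; skew-adjointness of $R_\Omega$ gives $\langle u,R_\Omega u\rangle_H=-\ii s$ for some real $s$ with $|s|\le|\Omega|$, and self-adjointness gives $a:=\langle u,A_2u\rangle_H\in\RR$ with $a\ge\gamma(A_2)$. Hence
\[
\langle u,L(\lambda)u\rangle_H=\lambda^2-2\ii s\lambda+a ,
\]
and, writing $\lambda=p+\ii q$ with $p=\operatorname{Re}\lambda$ and substituting $t=q-s$, a short algebraic manipulation gives
\[
\big|\langle u,L(\lambda)u\rangle_H\big|^2=\big(p^2+q^2+a-2qt\big)^2+4p^2t^2 .
\]
By Cauchy--Schwarz, $\|L(\lambda)u\|_H\ge|\langle u,L(\lambda)u\rangle_H|$, so any lower bound for the right-hand side uniform over unit $u$ --- equivalently over $s\in[-|\Omega|,|\Omega|]$ (hence over $t=q-s$) and over $a\ge\gamma(A_2)$ --- shows $L(\lambda)$ is bounded below. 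The same computation applied to $L(\lambda)^*=\bar\lambda^2\operatorname{Id}-2\bar\lambda R_\Omega+A_2$ produces the identical estimate (only $|s|\le|\Omega|$ is used, so the sign flip of $R_\Omega$ is irrelevant), whence $L(\lambda)^*$ is injective and $L(\lambda)$ has dense range; since $L(\lambda)$ is closed with closed range it is then surjective, and the lower bound makes $L(\lambda)^{-1}$ bounded. It therefore only remains to identify, for each of (1) and (2), the $\lambda$ for which the displayed quantity stays away from zero.

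For (1): assuming $\lambda\notin\ii\RR$ (so $p\ne0$) and $|\operatorname{Im}\lambda|=|q|>|\Omega|$, one has $|t|=|q-s|\ge|q|-|\Omega|>0$ for every admissible $s$, hence $|\langle u,L(\lambda)u\rangle_H|^2\ge4p^2t^2\ge4p^2(|q|-|\Omega|)^2>0$ uniformly. For (2): assuming $\lambda\notin\ii\RR$ and $|\lambda|^2=p^2+q^2>\max(0,-\gamma(A_2))$, the hypothesis $p\ne0$ makes this equivalent to $c_0:=p^2+q^2+\gamma(A_2)>0$, so $c:=p^2+q^2+a\ge c_0>0$; minimizing the quadratic $t\mapsto(c-2qt)^2+4p^2t^2$ over $t\in\RR$ (minimizer $t_*=qc/(2(p^2+q^2))$) gives the value $c^2p^2/(p^2+q^2)\ge c_0^2p^2/|\lambda|^2>0$, again uniformly in $u$. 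Taking complements yields the two inclusions.

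I expect the argument to be essentially mechanical once the numerical-range identity is in place; the points that need a little care are the skew-adjointness of $R_\Omega$ on the \emph{weighted} space $H$ (so that $\langle u,R_\Omega u\rangle_H$ is purely imaginary of modulus $\le|\Omega|$, rather than merely controlled by $\|R_\Omega\|$), and the standard but necessary passage from ``$L(\lambda)$ and $L(\lambda)^*$ bounded below'' to ``$\lambda\in\rho(L)$'', which is where surjectivity enters. Conceptually there is no real obstacle, since the proposition only \emph{confines} the spectrum: the estimate above describes precisely the region outside of which $L(\lambda)$ is bounded below, and it makes no attempt to locate the discrete eigenvalues lying inside it (cf.\ the sharp description of $\sigma_{ess}(L)$ already obtained in Theorem~\ref{thm:pp}).
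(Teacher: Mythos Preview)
Your proof is correct and more direct than the paper's. The paper proceeds in three steps: first a rough bound $\sigma(L)\subset\mathscr{S}(c_3)$ obtained by a Neumann-series argument, then a topological argument showing that the \emph{boundary} $\partial\sigma(L)$ lies in the claimed region (using an auxiliary lemma that produces approximate eigenvectors $u^\ell$ at any boundary point, after which one passes to the limit in the quadratic relation $\lambda_\ell^2+2\ii s_\ell\lambda_\ell+\tau_\ell=q_\ell$), and finally a connectedness argument (slicing by horizontal lines $\operatorname{Im}\lambda=\xi_0$) to upgrade the boundary inclusion to the full spectrum. You instead bypass all of this by extracting a uniform lower bound for $|\langle u,L(\lambda)u\rangle_H|$ directly from the numerical-range identity, and then running the same estimate for $L(\lambda)^*$ to get surjectivity. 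Your approach is shorter and avoids the separate rough-bound step and the approximate-eigenvector lemma; what the paper's route buys is that it makes the role of $\partial\sigma(L)$ explicit and isolates the quadratic relation $\lambda^2+2\ii s\lambda+\tau=0$ satisfied at spectral points, which connects more transparently to the classical Dyson--Schutz formulation.
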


\begin{proof}
    We begin with introducing the sets
\[
   \mathscr{R}_\Omega = \ii \mathbb{R}
                       \cup \{ \lambda \in \mathbb{C}\ :\
            |\operatorname{Im}(\lambda)| \leq \| R_\Omega \| \}
\]
and
\[
   \mathscr{S}(c) = \{ \lambda = \zeta + \ii \xi \in \mathbb{C}\ :\
       \zeta, \xi \in \mathbb{R} ,\ \zeta^2 - \xi^2 \leq c \} .
\] 
\textbf{Step 1}: \textit{rough bound for $\sigma(L)$}. Let us first assume that $A_2$ is bounded below by $\gamma(A_2)$. Setting $\lambda = \zeta + i \xi$ with $\zeta$, $\xi \in \mathbb{R}$ and taking $u \in D(A_2)$, we first estimate
\begin{multline}
   \| (\lambda^2 \Id + A_2) u \|_H^2
        = \| (\zeta^2 - \xi^2) u + A_2 u + 2 \ii \zeta \xi u \|_H^2
\\
   = \| (\zeta^2 - \xi^2) u + A_2 u \|_H^2 + \| 2 \ii \zeta \xi u \|_H^2
   \geq ((\zeta^2 - \xi^2 + \gamma(A_2))^2
             + 4 \zeta^2 \xi^2) \| u \|_H^2
\end{multline}
provided that $\zeta^2 - \xi^2 + \gamma(A_2) > 0$. Hence, in this part
of the complex plane, $\lambda^2 \Id + A_2$ is invertible. We write
\[
   c_1 = \max(0,-\gamma(A_2)) + 1
\]
and find that $(\lambda^2 \Id + A_2)^{-1}$ is a bounded operator for
$\lambda \in \mathscr{S}(c_1)^c$ with
\begin{equation} \label{eq:Invest1}
   \| (\lambda^2 \Id + A_2)^{-1} \| \leq 
      (( \zeta^2 - \xi^2 + \gamma(A_2))^2 + 4 \zeta^2 \xi^2)^{-1/2} .
\end{equation}
Now, for $\lambda \in \mathscr{S}(c_1)^c$ we have
\[
L(\lambda) = (\lambda^2 \Id + A_2)( \Id + (\lambda^2 \Id + A_2)^{-1} 2 \lambda R_\Omega),
\]
and using \eqref{eq:Invest1}
\begin{multline*}
   \| (\lambda^2 \Id + A_2)^{-1} 2 \lambda R_\Omega \|
   \leq 2 |\lambda| \, \| R_\Omega \|
       ((\zeta^2 - \xi^2 + \gamma(A_2))^2 + 4 \zeta^2 \xi^2)^{-1/2}
\\
   = 2 |\lambda| \, \| R_\Omega \|
     [|\lambda|^4 + 2 (\zeta^2 - \xi^2) \gamma(A_2)
                         + \gamma(A_2)^2]^{-1/2} .
\end{multline*}
If $|\lambda|^2 = \zeta^2 + \xi^2 > c_2
>0$ is such that
the right-hand side is less than $1/2$, then
\[
\Id +  (\lambda^2 \Id + A_2)^{-1} 2 \lambda R_\Omega
\]
is invertible and so
\[
L(\lambda)^{-1} = (\Id +  (\lambda^2 \Id + A_2)^{-1} 2 \lambda R_\Omega)^{-1} (\lambda^2 \Id + A_2)^{-1}
\]
is bounded. Therefore, if we set $c_3 = \max(c_1,c_2)$, then
$\mathscr{S}(c_3)^c \subset \rho(L)$. Consequently, we have $\sigma(L) \subset \mathscr{S}(c_3)$.

\textbf{Step 2}: \textit{proof of Proposition~\ref{prop:DS}, 1.} We
assume that $\lambda \in \partial\sigma(L) = \sigma(L) \cap
\overline{\rho(L)}$. Applying Lemma~\ref{lem:seq}, we generate a sequence
$\{\lambda_\ell\}_{\ell=1}^\infty \subset \rho(L)$ and displacement
vectors $\{ u^\ell \}_{\ell=1}^\infty$ such that
\[
   \lim_{\ell \to \infty} \lambda_\ell = \lambda ,\quad  
   \| u^\ell \|_H = 1 ,\quad
   \lim_{\ell \to \infty} \| L(\lambda_\ell) u^\ell \|_H = 0 .
\]
It follows that
\[
   \lim_{\ell \to \infty} (L(\lambda_\ell) u^\ell,u^\ell) = 0 .
\]
We define the quantities
\begin{equation}\label{eq:stq}
   s_\ell = \frac{1}{\ii} \, (R_\Omega u^\ell,u^\ell) ,\quad 
   \tau_\ell = (A_2 u^\ell,u^\ell) ,\quad
   q_\ell = (L(\lambda_\ell) u^\ell,u^\ell)
\end{equation}
with $s_\ell \in \mathbb{R}$, $\tau_\ell \in \mathbb{R}$ and
$\lim_{\ell \to \infty} q_\ell = 0$, while
\[
   \lambda_\ell^2 + 2 \ii s_\ell \lambda_\ell + \tau_\ell = q_\ell .
\]
Writing $\zeta_\ell = \operatorname{Re}(\lambda_\ell)$, $\xi_\ell =
\operatorname{Im}(\lambda_\ell)$ and taking the
imaginary part of both sides,
\[
   2 \zeta_\ell \left(\xi_\ell
          + \frac{1}{\ii} \, (R_\Omega u^\ell,u^\ell)\right)
                                       = \operatorname{Im}(q_\ell) .
\]
Because the right-hand side goes to zero as $\ell \to \infty$, we have
\begin{equation} \label{eq:proofDS-1}
   \lim_{\ell \to \infty} \min \left\{ |\zeta_\ell| , 
   \left|\xi_\ell + \frac{1}{\ii} \, (R_\Omega u^\ell,u^\ell)
         \right|\right\} = 0 .
\end{equation}
Clearly, for all $\ell$
\begin{equation}\label{eq:sbound}
   -\| R_\Omega \| \leq \frac{1}{\ii} (R_\Omega u_\ell,u_\ell)
   \leq \| R_\Omega \|
\end{equation}
and, hence, (\ref{eq:proofDS-1}) implies that
\[
   \lim_{\ell \to \infty}
        \operatorname{dist}(\lambda_\ell,\mathscr{R}_\Omega) = 0 .
\]
But then $\lambda \in \mathscr{R}_\Omega$. Therefore, $\partial \sigma(L)
= \sigma(L) \cap \overline{\rho(L)} \subset \mathscr{R}_\Omega$.

We will prove $\sigma(L) \subset \mathscr{R}_\Omega$ by
contradiction. Assume that $\sigma(L) \not\subset \mathscr{R}_\Omega$,
then there exists a $\xi_0 \in \mathbb{R}$ such that the line
$M(\xi_0) = \{\zeta + \ii \xi_0 \in \mathbb{C}\ :\ \zeta \in
\mathbb{R}\}$ parallel to the real axis intersects $\sigma(L) \setminus
\mathscr{R}_\Omega$. From the characterization of $\mathscr{R}_\Omega$
it follows that $|\xi_0| > \| R_\Omega \|$ and that $M(\xi_0) \cap
\rho(L) \ne \emptyset$ using the rough estimate $\sigma(L) \subset
\mathscr{S}(c_3)$. We define the set
\[
   \mathscr{T}(\xi_0) = \{\zeta \in \mathbb{R}\ :\
                    \zeta + \ii \xi_0 \in \sigma(L)\} .
\]
This set is a bounded and closed set in $\mathbb{R}$. We define
$\zeta_1 = \max \, \mathscr{T}(\xi_0)$, $\zeta_2 = \min \,
\mathscr{T}(\xi_0)$. Because $\sigma(L)$ is a closed set, $\lambda_1 =
\zeta_1 + \ii \xi_0$ and $\lambda_2 = \zeta_2 + \ii \xi_0$ belong to
$\sigma(L)$; from the definition of $\zeta_1$, $\zeta_2$ it follows that
$\lambda_1$ and $\lambda_2$ belong to $\partial \sigma$. However, we
proved that $\partial\sigma \subset \mathscr{R}_\Omega$ and that
$|\xi_0| > \| R_\Omega \|$ and, hence, $\lambda_1 = \lambda_2 = \ii
\xi_0$. This is a contradiction which completes the proof of part {\it 1} of Proposition \ref{prop:DS}.

\textbf{Step 3}: \textit{proof of Proposition~\ref{prop:DS}, 2.} We
assume that $\lambda \in \partial\sigma(L) = \sigma(L) \cap \overline{\rho(L)}$
with $\operatorname{Re}(\lambda) \ne 0$. Applying Lemma~\ref{lem:seq},
we generate a sequence $\{\lambda_\ell\}_{\ell=1}^\infty \subset \rho(L)$
and displacement vectors $\{u^\ell\}_{\ell=1}^\infty$ such that
\[
   \lim_{\ell \to \infty} \lambda_\ell = \lambda ,\quad  
   \| u^\ell \| = 1 ,\quad
   \lim_{\ell \to \infty} \| L(\lambda_\ell) u^\ell \| = 0
\]
as before. Also, let $s_\ell$, $\tau_\ell$ and $q_\ell$ be as in \eqref{eq:stq}. By \eqref{eq:sbound}, $s_\ell$ is bounded and since $q_\ell \rightarrow 0$ and $\lambda_\ell \rightarrow \lambda$, $\tau_\ell$ must also be bounded. Therefore $s_\ell$ and $\tau_\ell$ have convergent subsequences and, by passing to a subsequence if necessary, we can assume without loss of generality that $s_\ell$ and $\tau_\ell$ converge respectively to some $s$ and $\tau \in \mathbb{R}$. Then, taking the limit in \eqref{eq:stq} we obtain
\[
\lambda^2 + 2 i s \lambda + \tau = 0,
\]
which implies
\[
\lambda = -is \pm \sqrt{-s^2 - \tau}.
\]
If $-s^2 - \tau \leq 0$, then $\lambda \in i \mathbb{R}$ which we have excluded by assumption. Therefore, $s^2 + \tau < 0$ and
\[
|\lambda|^2 = s^2 + (-s^2 - \tau) = -\tau \leq -\gamma(A_2)
\]
This proves that if $\lambda \in \partial\sigma(L) \setminus \ii \mathbb{R}$, then $|\lambda|^2 \leq
\max(0,-\gamma(A_2))$.



Next, we prove that $\lambda \in \sigma(L) \setminus \ii \mathbb{R}$
implies that $|\lambda|^2 \leq \max(0,-\gamma(A_2))$. We introduce
\[
   \mathscr{R}_\Omega^\prime = \{\lambda \in \mathbb{C}\ :\
     |\operatorname{Im}\lambda| \leq \| R_\Omega \| ,\
     |\lambda|^2\leq \max(0,-\gamma(A_2))\} .
\]
We already know that $\partial\sigma(L) \setminus \ii \mathbb{R} \subset
\mathscr{R}_\Omega^\prime$. Now, we assume that
\begin{equation} \label{eq:proofDS-2}
   (\sigma(L) \setminus \ii \mathbb{R})
                     \not\subset \mathscr{R}_\Omega^\prime,
\end{equation}
and let us use the same notation $M(\xi_0)$ and $\mathscr{T}(\xi_0)$ as in {\it Step 2} above.
Then there exists a $\xi_0 \in [-\| R_\Omega \|,\| R_\Omega \|]$ such
that the line $M(\xi_0)$ parallel to the real axis intersects $(\sigma(L)
\setminus \ii \mathbb{R}) \setminus \mathscr{R}_\Omega^\prime$. With
this $\xi_0$, $\mathscr{T}(\xi_0)$ is a bounded and closed set using
the result obtained in \textit{Step 1}. We define, again, $\zeta_1 =
\max \, \mathscr{T}(\xi_0)$, $\zeta_2 = \min \,
\mathscr{T}(\xi_0)$. Then $\lambda_1 = \zeta_1 + \ii \xi_0$,
$\lambda_2 = \zeta_2 + \ii \xi_0$ necessarily belong to
$\partial\sigma(L)$. Due to assumption (\ref{eq:proofDS-2}), we have
$|\lambda_1| > \max(0,-\gamma(A_2)$ or $|\lambda_2| >
\max(0,-\gamma(A_2))$. This is a contradiction to $\partial\sigma(L)
\subset \mathscr{R}_\Omega^\prime$.
\end{proof}

The next lemma was used in the proof of Proposition \ref{prop:DS}.

\begin{lemma}\label{lem:seq}
    Suppose that $\lambda \in \partial \sigma(L)$. Then there exists a sequence $\lambda_\ell \in \rho(L)$ and $u^\ell \in D(L)$ such that
\[
   \lim_{\ell \to \infty} \lambda_\ell = \lambda ,\quad  
   \| u^\ell \|_H = 1 ,\quad
   \lim_{\ell \to \infty} \| L(\lambda_\ell) u^\ell \|_H = 0 .
\]
\end{lemma}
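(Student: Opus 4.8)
The statement is a standard fact about the boundary of the spectrum of an operator pencil: points of $\partial\sigma(L)$ are always approximate eigenvalues, in the sense that there is a sequence of regular points $\lambda_\ell\to\lambda$ together with unit vectors $u^\ell$ on which $L(\lambda_\ell)$ is asymptotically singular. The plan is to exploit the fact that the resolvent norm blows up as one approaches the spectrum from the resolvent set. Concretely, since $\lambda\in\partial\sigma(L)=\sigma(L)\cap\overline{\rho(L)}$, I can pick a sequence $\lambda_\ell\in\rho(L)$ with $\lambda_\ell\to\lambda$. For each $\ell$, $L(\lambda_\ell)^{-1}:H\to H$ is bounded; the first key step is to show that $\|L(\lambda_\ell)^{-1}\|\to\infty$. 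If this failed along a subsequence, then $\|L(\lambda_\ell)^{-1}\|\le C$ on that subsequence, and since $L$ depends continuously (indeed analytically) on $\lambda$ through $F(\lambda)=\lambda^2\Id+2\lambda R_\Omega$ with $A_2$ fixed, a Neumann-series argument shows $L(\lambda)=L(\lambda_\ell)+(L(\lambda)-L(\lambda_\ell))$ is invertible for $\ell$ large because $\|L(\lambda)-L(\lambda_\ell)\|\to 0$ (note $L(\lambda)-L(\lambda_\ell)$ is a bounded operator since only the $F$-part changes). That would put $\lambda\in\rho(L)$, contradicting $\lambda\in\sigma(L)$.

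Given $\|L(\lambda_\ell)^{-1}\|\to\infty$, the second step is routine: for each $\ell$ choose $f^\ell\in H$ with $\|f^\ell\|_H=1$ and $\|L(\lambda_\ell)^{-1}f^\ell\|_H\ge \tfrac12\|L(\lambda_\ell)^{-1}\|$, then set
\[
u^\ell=\frac{L(\lambda_\ell)^{-1}f^\ell}{\|L(\lambda_\ell)^{-1}f^\ell\|_H}\in D(L(\lambda_\ell)).
\]
Then $\|u^\ell\|_H=1$ and
\[
\|L(\lambda_\ell)u^\ell\|_H=\frac{\|f^\ell\|_H}{\|L(\lambda_\ell)^{-1}f^\ell\|_H}\le\frac{2}{\|L(\lambda_\ell)^{-1}\|}\to 0,
\]
which is exactly the conclusion. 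One small point to address is the domain: $u^\ell\in D(L(\lambda_\ell))=D(A_2)=D(L)$ since the domain of $L(\lambda)$ is independent of $\lambda$ (only the bounded part $F(\lambda)$ varies), so the claim $u^\ell\in D(L)$ is justified.

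The main obstacle is really just the first step — verifying the resolvent-norm blow-up — and the only thing to be careful about there is that $L$ is a quadratic pencil rather than $\lambda\mapsto A-\lambda$; but because the $\lambda$-dependence is entirely in the bounded operator $F(\lambda)$ with the unbounded part $A_2$ fixed, the difference $L(\lambda)-L(\mu)=F(\lambda)-F(\mu)$ is bounded with norm controlled by $|\lambda-\mu|$ (on compact sets), so the Neumann-series perturbation argument applies verbatim. No compactness or Fredholm input is needed; this is a general fact valid for any analytic family of closed operators with $\lambda$-independent domain, and the proof above is self-contained given only the structure of $L(\lambda)=F(\lambda)+A_2$ already recorded in \eqref{eq:Llambda}.
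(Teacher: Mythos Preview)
Your argument is correct and complete. It is, however, a genuinely different route from the paper's own proof. The paper does not argue via blow-up of the resolvent norm; instead it splits into cases according to whether $L(\lambda)$ is injective. If not, a constant sequence in $\operatorname{Ker}L(\lambda)$ works. If $L(\lambda)$ is injective, the paper fixes a single $f\in H$ outside the range of $L(\lambda)$, sets $v_\ell=L(\lambda_\ell)^{-1}f$, and argues by contradiction: if $v_\ell/\|v_\ell\|_H$ has no subsequence with the desired property, then $\|v_\ell\|_H$ is bounded, a weak limit $v$ exists, and a duality computation with $L(\lambda)^*$ shows $L(\lambda)v=f$, contradicting the choice of $f$.

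Your Neumann-series approach is more elementary and more direct: it avoids the case split, the weak-compactness step, and the adjoint computation, relying only on the observation that $L(\lambda)-L(\mu)=F(\lambda)-F(\mu)$ is bounded with norm $O(|\lambda-\mu|)$ on compact sets. This makes the result transparent as a general fact about analytic pencils with $\lambda$-independent domain. The paper's argument, by contrast, is tied more specifically to the Hilbert-space structure (weak compactness of balls, existence of the adjoint), and produces the sequence from a single fixed right-hand side $f$ rather than a varying near-maximiser $f^\ell$; neither feature is needed for the conclusion, so your proof is a clean simplification.
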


\begin{proof}
    Let $\lambda \in \partial \sigma(L)$ and so there exists a sequence $\lambda_\ell \in \rho(L)$ such that $\lambda_\ell \rightarrow \lambda$. Since $\sigma(L)$ is closed, $\lambda \in \sigma(L)$. Suppose that $L(\lambda)$ is not injective. Then, the lemma is proven by taking $u_\ell$ a constant sequence with norm $1$ in the kernel of $L(\lambda)$. So, assume now that $L(\lambda)$ is injective. Then $L^{-1}(\lambda)$ is defined on some domain in $H$, and if this domain is all of $H$ then $L(\lambda)^{-1}$ must be continuous. Thus, there must exist $f \in H$ which is not in the range of $L(\lambda)$. Define
    \[
    v_\ell = L(\lambda_\ell)^{-1} f.
    \]
    We then claim that some subsequence of $u_\ell = v_\ell/\|v_\ell\|_H$ is a sequence of unit vectors that satisfies
    \begin{equation}\label{eq:decay}
    \lim_{\ell\rightarrow \infty}\|L(\lambda_\ell) u^\ell \|_H = 0.
    \end{equation}
    We will argue by contradiction.

    Indeed, suppose that no subsequence of $u_\ell$ as defined above satisfies \eqref{eq:decay}. Then
    \[
    \|L(\lambda_\ell) u^\ell\|_{H} \geq C > 0 \Rightarrow \|f\|_H =  \|L(\lambda_\ell) v^\ell\|_H \geq C \|v^\ell\|_{H}
    \]
    for some constant $C$ and all $\ell$. Therefore, since $v_\ell$ is bounded, it must have a weakly convergent subsequence: say $v_\ell$ converges to $v$ weakly.

    Now suppose $g \in D(A_2)$. Then
    \[
    \langle f, g \rangle_H = \langle L(\lambda_\ell) v^\ell, g \rangle_H = \langle v^\ell, L(\lambda_\ell)^* g \rangle_H = \langle v^\ell, (L(\lambda_\ell)^* - L(\lambda)^*) g \rangle_H + \langle v^\ell, L(\lambda)^*g \rangle_H
    \]
    Taking the limit as $\ell \rightarrow \infty$ gives
    \[
    \langle f, g \rangle_H = \langle v, L(\lambda)^*g \rangle_H.
    \]
    Therefore $v \in D(L)$ and $L(\lambda)v = f$. This is a contradiction since $f$ was assumed to be outside the range of $L$. This completes the proof.
\end{proof}

\noindent
If $\gamma(A_2) \geq 0$, it follows immediately from Proposition \ref{prop:DS} that $\sigma \subseteq \ii \RR$, but this is unlikely to be the case. In general, Proposition \ref{prop:DS} provides an upper bound on the full spectrum $\sigma(L)$ which is illustrated in Figure \ref{fig:cross}.


\section{Discussion}

We precisely characterized the spectrum of rotating truncated gas planets for both variable \tred{densities and variable} positive and negative \tred{squared} Brunt-V\"{a}is\"{a}l\"{a} frequencies. Acoustic modes correspond with part of the point spectrum (while other modes such as quasi-rigid body modes are also associated with the point spectrum) and inertia-gravity modes with the essential spectrum. We presented a partial resolution of the identity with acoustic modes which reveals inaccuracies in common approaches to compute these.

A further study of the dynamics and attractors associated with the inertia-gravity modes described in this paper will be left for future research. We note that such analysis was carried out by Colin de Verdi\`{e}re and Vidal \cite{VidalCdV:2024}. In preparation for this, making the connection to their work explicit, we briefly relate our formulation to theirs. We introduced
\begin{equation}
   \tilde{s} = \nabla \rho_0 - \frac{\rho_0}{c^2} g_0'
\end{equation}
and identified the dynamic pressure as
\begin{equation}
   P = -c^2 [\nabla \cdot (\rho_0 u) - \tilde{s} \cdot u]
\end{equation}
or
\begin{equation}
   P = -\rho_0\ [c^2 \nabla \cdot u + g_0' \cdot u] .
\end{equation}
Using that
\begin{equation}
   \tilde{s} \cdot u = \frac{\tilde{s} \cdot g_0'}{|g_0'|^2}
   (g_0' \cdot u) = \frac{N^2}{|g_0'|^2} (g_0' \cdot (\rho_0 u))
\end{equation}
as $\nabla \rho_0$ and $g_0'$ must be parallel, we obtain
\begin{equation}
   P = -c^2 \left[ \nabla \cdot (\rho_0 u)
   - \frac{N^2}{|g_0'|^2} (g_0' \cdot (\rho_0 u)) \right] .
\end{equation}
While introducing the particle velocity, $v = \partial_t u$, equations (\ref{eq: MCts}) and (\ref{eq: PoissonEqPert-2}) are equivalent to the system
\begin{eqnarray}
   \partial_t \rho + \nabla \cdot (\rho_0 v) &=& 0 ,
\\[0.25cm]   
   \partial_t (\rho_0 v) + 2 \Omega \times (\rho_0 v) &=&
   -\nabla P + \rho g_0' - \rho_0 \nabla \Phi' ,
\label{eq:m}
\\
   \partial_t P &=& c^2 \left[ \partial_t \rho
   + \frac{N^2}{|g_0'|^2} (g_0' \cdot (\rho_0 v)) \right] ,
\end{eqnarray}
supplemented with (\ref{eq: PoissonEqPert-2}), which is equivalent to the system in linearized hydrodynamics as in \cite{Prat-etal:2016}
\begin{eqnarray}
   \partial_t \rho + \nabla \cdot (\rho_0 v) &=& 0 ,
\\[0.25cm]   
   \partial_t (\rho_0 v) + 2 \Omega \times (\rho_0 v) &=&
   -\nabla P + \rho g_0' - \rho_0 \nabla \Phi' ,
\\[0.25cm]
   \partial_t P + v \cdot \nabla P_0 &=&
   c^2 [ \partial_t \rho + v \cdot \nabla \rho_0 ]
\end{eqnarray}
as $\nabla P_0 = -\rho_0 g_0'$. In the Cowling approximation, one drops the term $-\rho_0 \nabla \Phi'$. If $u \in \ker(T)$ then $P = 0$ and
\begin{equation}
   \rho g_0' = -(\nabla \cdot (\rho_0 u)) g_0'
   = -(\tilde{s} \cdot u) g_0'
   = -N^2 \hat{g}_0' (\hat{g}_0' \cdot \rho_0 u) . 
\end{equation}
Then, \eqref{eq:m} is seen to be equivalent to
\[
\partial_t v + 2 \Omega \times  v + N^2 \hat{g}_0' (\hat{g}_0' \cdot u) = 0, \quad Tu = 0.
\]
which is closely related to \eqref{eq:igm_1}. 
Upon first introducing
\begin{equation} \label{eq:red_2}
   \rho' = N (\underbrace{\hat{g}_0' \cdot u}_{u_{\parallel}}),
\end{equation}
this equation can be written as the system
\begin{equation} \label{eq:red_1}
   (\partial_t
   + A) \left(\begin{array}{c}
   v\\ \rho' \end{array}\right)
   = 0\quad\text{with}\
   A = \left(\begin{array}{cc}
   2 \Omega \times & N \hat{g}_0' \\
   -N \hat{g}_0'^T & 0
   \end{array}\right) ,\quad T v = 0.
\end{equation}
In \cite{VidalCdV:2024}, this system is formed by expressing $v$ in an orthogonal basis where one of the basis vectors is $\hat{g}_0'$. Including the projectors,
\begin{equation}
    \pi_2' \left(\begin{array}{c} v \\ \rho' \end{array}\right)
    = \left(\begin{array}{c} \pi_2 v \\ \rho' \end{array}\right) ,
\end{equation}
the system takes the form
\begin{equation}
   (\partial_t + H) \left(\begin{array}{c}
   v\\ \rho' \end{array}\right)
   = 0\quad\text{with}\
   H = \pi_2' A \pi_2'
\end{equation}
as in Colin de Verdi\`{e}re and Vidal \cite{VidalCdV:2024}, who considered the case when $\hat{g}_0'$ and $N$ are constants. (These authors consider the further spectral analysis of this equation which is, in turn related to the work of \cite{dyatlov2021mathematics} in case the (compact) manifold would not have a boundary.) The system needs to be supplemented with the boundary condition $u\cdot n|_{\partial M} = 0$ (see \eqref{eq:Tdef}). $H$ is identified with the Poincar\'{e} operator. The spectrum of $H$ is $\sigma_{ess}(L_{22})$.

We can write the constrained system (\ref{eq:red_1}) (in the Cowling approximation) in the form
\begin{equation}
   \left(\begin{array}{cc}
   -\operatorname{i} \lambda + \pi'_2 A &
   \begin{array}{c} \nabla \\ 0 \end{array} \\
   \begin{array}{cc} T & 0 \end{array} & 0
   \end{array}\right) 
   \left(\begin{array}{c} v \\ \rho' \\ P \end{array}\right)
   = 0.
\end{equation}
The principal, $\sigma_p(\pi_2)$ corresponds with the Leray projector and is given by \eqref{eq:pisymbol}. Consistent with the Leray projector, we may restrict $u \in \ker T_1$, $T_1 u = \rho_0 \nabla \cdot u$ when $P = P_1 = \rho_0 g_0' \cdot u$ and is non-vanishing. Then
\begin{equation}
   \left(\begin{array}{cc}
   -\operatorname{i} \lambda + \sigma_p(\pi_2') A &
   \begin{array}{c} \nabla \\ 0 \end{array} \\
   \begin{array}{cc} \nabla\cdot & 0 \end{array} & 0
   \end{array}\right) 
   \left(\begin{array}{c} v \\ \rho' \\ P_1 \end{array}\right)
   = 0 .
\end{equation}
Keeping the principal parts, eliminating $v$ and $\rho'$, leads to a Poincar\'{e} equation for $P_1$, $S_{\omega} P_1 = 0$, where the principal symbol of $S_{\omega}$ is given by
\[
\begin{split}
   s_{\omega}(x,\xi) & = \det\left(\begin{array}{cc}
   -\operatorname{i} \lambda + \sigma_p(\pi_2')(\xi) A(x) &
   \begin{array}{c} \operatorname{i} \xi \\ 0 \end{array} \\
   \begin{array}{cc} \operatorname{i} \xi^T & 0 \end{array} & 0
   \end{array}\right)\\
   & = -i \lambda |\xi|^2 (\lambda^2 + 4 \Omega_\xi^2 + N^2 |P_{\xi}^{\perp} \hat{g}_0'|^2).
\end{split}
\]
This determinant can be calculated by considering the matrix in an orthonormal basis that includes $(\xi/|\xi|,0,0)^T)$ as one of the basis vectors. Therefore, $S_\omega$ is elliptic except when $\lambda = 0$ or
\[
\lambda^2 = - 2 \Omega_\xi^2 - N^2 |P_\xi^\perp \hat{g}_0'|^2 ,
\]
which corresponds with Lemma~\ref{lem:pt}.

Future work includes a generalization to the precise characterization of the spectra of rotating terrestrial planets involving boundary conditions at the core-mantle \tred{interface} different from the ones appearing in the present results, and extending the work of Valette \cite{Valette:1989a}. It also includes removing the truncation employed in the present analysis of gas planets by letting $c^2$ vanish (proportional to the pseudo-enthalpy in a polytropic model) and $N^2$ blow-up (proportional to $c^{-2}$ in a polytropic model) at the boundary, see Prat \textit{et al.} \cite{Prat-etal:2016}.


\section{Acknowledgements}

\tred{We would like to thank the anonymous referees for comments and suggestions which improved the manuscript. We would also like to thank Yves Colin de Verdi\`{e}re, Bernard Vallette, Mikko Salo and Marco Marletta for useful conversations during this research}

MVdH was supported by the Simons Foundation under the MATH + X program and the National Science Foundation under grant DMS-2407456. Contributions from SH were supported by the Engineering and Physical Sciences Research Council (EPSRC) through grant EP/V007742/1. The authors would like to thank the Isaac Newton Institute for Mathematical Sciences  for support and hospitality through the program Rich and Non-linear Tomography: A Multidisciplinary Approach, where work on this paper was undertaken. This program was supported by EPSRC grant number EP/R014604/1.  

\appendix

\section{A study of $\operatorname{Ker}(A_2)$}
\label{A:r}

In this appendix, we study $\operatorname{Ker}(A_2)$ which leads to
the introduction of rigid body motions with corresponding eigenvalues not contained in $\sigma_1$. \tred{Here we review modes associated with these motions and provide explicit calculations.}

Quasi-rigid displacements are those with zero strain. The vanishing strain condition implies that they must be of the form
\begin{equation} \label{eq:qrd}
   u(x) = t + k \wedge x ,\quad t, k \in \C^3 .
\end{equation}
We briefly demonstrate that such quasi-rigid displacements are zero
eigenfunctions in the case of a non-rotating planet. In a rotating planet the quasi-rigid motions are still eigenfunctions, but they have different eigenvalues as we will demonstrate below.

\subsection{Non-rotating planet}

To consider the non-rotating case we first calculate
\[
   u(x) \cdot \nabla \nabla \Phi^0(x)
      = G \lim_{\epsilon \rightarrow 0^+}
        \int_{\tred{M} \setminus B_\epsilon (x)}
           \frac{x - x'}{|x - x'|^3}\cdot u(x)\ \nabla \rho^0 (x')\
                        \dd x',
\]
\[
   \nabla S(u)(x) = -G \lim_{\epsilon \rightarrow 0^+}
        \int_{\tred{M} \setminus B_\epsilon (x)} \nabla u(x')
           \frac{x - x'}{|x - x'|^3}\ \rho^0(x')
        + \frac{x - x'}{|x - x'|^3}\cdot u(x')\ \nabla \rho^0(x')\
                        \dd x',
\]
and
\[
   \nabla \Phi^0(x)^T \nabla u(x)
       = G \lim_{\epsilon \rightarrow 0^+}
         \int_{\tred{M} \setminus B_\epsilon(x)}
      \frac{(x - x')^T}{|x - x'|^3} \nabla u(x)\ \rho^0(x') \dd x' .
\]
Using these formulae and the facts that $u$ given by \eqref{eq:qrd}
has zero strain, constant antisymmetric first differential (that is,
$\nabla u$ is a constant antisymmetric matrix), and vanishing second
derivatives we find that for such $u$
\[
\begin{split}
   A_2^0 u(x) & = \nabla S(u)(x) - \nabla \Phi^0(x)^T \nabla u(x)
       + u \cdot \nabla \nabla \Phi^0(x)
\\
   & = G \lim_{\epsilon \rightarrow 0^+}
       \int_{\tred{M} \setminus B_\epsilon(x)}
       \frac{(x - x')^T}{|x - x'|^3} (\nabla u(x') - \nabla u(x))\
                        \rho^0(x')
\\
   &\hskip1.5in + \frac{x - x'}{|x - x'|^3} \cdot (u(x) -  u(x'))\
                        \nabla \rho^0(x') \dd x'
\\
   & = G \lim_{\epsilon \rightarrow 0^+}
       \int_{\tred{M} \setminus B_\epsilon(x)}
       \frac{x - x'}{|x - x'|^3} \cdot (k \wedge (x -  x'))\
       \nabla \rho^0(x') \dd x'
\\
   & = 0
\end{split}
\]
This calculation shows that, as stated above, in the non-rotating case the quasi-rigid motions are eigenfunctions associated with the zero eigenvalue.

\subsection{Rotating planet}

Now we consider the rotating case. First, from the above calculation, since $A_2^0 u = 0$ for the quasi-rigid motions, we find
\[
\begin{split}
   A_2 u(x) &= u(x) \cdot \nabla \nabla \Psi^s(x)
                       - \nabla \Psi^s(x)^T \nabla u(x)
\\
   & = (\Omega \cdot u)\ \Omega - \Omega^2\ u
       + (\Omega \cdot x)\ \Omega^T \nabla u - \Omega^2\ x^T\nabla u
\\
   & = (\Omega \cdot t)\ \Omega + \Omega \cdot (k \wedge x)\ \Omega
       - \Omega^2\ t - \Omega^2 k \wedge x
       - (\Omega \cdot x)\ k \wedge \Omega + \Omega^2 \ k \wedge x
\\
   & = (\Omega \cdot t)\ \Omega + \Omega \cdot (k \wedge x)\ \Omega
       - \Omega^2\ t - (\Omega \cdot x)\ k \wedge \Omega .
\end{split}
\]
From this formula we can immediately see that when $t = c_t \Omega$ and $k = c_k \Omega$,
\[
   A_2 u = c_t\ \Omega^2\ \Omega  - c_t\ \Omega^2 \ \Omega = 0 .
\]
Thus the quasi-rigid motion
\[
   u(x) = c_t\ \Omega + c_k\ \Omega \wedge x ,
\]
is an eigenfunction associated with zero eigenvalue in the rotating
case. The first term on the right-hand side is identified with the
\textit{axial translational mode}, while the second term on the right-hand side is identified with the \textit{axial spin mode}.

The other rigid motions correspond with non-zero eigenvalues. We must also include the first and second order terms. Thus we calculate for quasi-rigid motions:
\[
\begin{split}
   L(\lambda) u & = (\lambda^2 \Id + 2 \lambda R_{\Omega} + A_2) u
\\
   & = \lambda^2 t + \lambda^2 k \wedge x
         + 2 \lambda \Omega \wedge t
       + 2 \lambda \Omega \wedge (k \wedge x)
         + (\Omega \cdot t) \ \Omega
         + \Omega \cdot (k \wedge x)\ \Omega - \Omega^2\ t
         - (\Omega \cdot x)\ k \wedge \Omega .
\end{split}
\]
First, let us consider the case when $t \perp \Omega$ and $k = 0$. Then we have
\[
   L(\lambda)u = (\lambda^2 - \Omega^2) t
                         + 2 \lambda \Omega \wedge t .
\]
Choosing any $a \perp \Omega$ and setting
\[
   t_\pm = a \pm \ii \frac{\Omega \wedge a}{|\Omega|} m
\]
we have
\[
   \Omega \wedge t_\pm = \mp \ii | \Omega | t_\pm .
\]
Using this we see that with $\lambda_\pm = \pm \ii |\Omega |$
\[
\begin{split}
   L(\lambda_\pm)t_\pm & = - 2 \Omega^2 t_\pm
        + 2 (\pm \ii |\Omega|) (\mp \ii | \Omega | t_\pm) .
\end{split}
\]
Therefore, we have a two-dimensional space of eigenfunction and eigenvalue pairs given by
\[
   \lambda_\pm = \pm \ii |\Omega|, \quad t_\pm
      = a \pm \ii \frac{\Omega \wedge a}{|\Omega|}, \quad k = 0 .
\]
These are the so-called \textit{equatorial translation modes}. We note that the translation modes are not contained in $H$; thus the only mode in $\operatorname{Ker} A_2$ playing a role is the axial spin mode.

Now let us consider the case $t = 0$, and $k \perp \Omega$. Choosing
any $a \perp \Omega$ we have as before
\[
   k_\pm = a \pm \ii \frac{\Omega \wedge a}{|\Omega|}
       \Rightarrow \Omega \wedge k_\pm = \mp \ii | \Omega | k_\pm .
\]
Again, taking $\lambda_\pm = \pm \ii |\Omega|$, we find that
\[
\begin{split}
   L(\lambda_\pm) k_\pm \wedge x & = - \Omega^2  k_\pm \wedge x
      + 2 (\pm \ii |\Omega|) (\Omega \cdot x)  k_\pm 
      + \Omega \cdot \left( k_\pm \wedge x \right) \Omega
        - (\Omega \cdot x) \left ( k_\pm \wedge \Omega \right)
\\
   & = - \Omega^2 k_\pm \wedge x \pm 2 \ii |\Omega|
        (\Omega \cdot x)  k_\pm + \Omega^2 k_\pm \wedge x
                   + 2 (x \cdot \Omega) \Omega \wedge k_{\pm}
\\
   & = 0 .
\end{split}
\]
Therefore there is also a two-dimensional space of eigenfunction and
eigenvalue pairs given by
\[
   \lambda_\pm = \pm \ii |\Omega|, \quad t = 0 ,\quad
         k_\pm = a \pm \ii \frac{\Omega \wedge a}{|\Omega|} .
\]
These are the so-called tilt-over modes. \tred{We note that  tilt-over modes are well-known (according to Chandrasekhar \cite{Chandrasekhar:1969}) transversely skewed oscillations that can occur in rotating, precessing bodies}.

These calculations show that in moving from the non-rotating model to the rotating one the six-dimensional eigenspace of quasi-rigid modes with eigenvalue zero is split into three separate two-dimensional eigenspaces with eigenvalues $\pm \ii |\Omega|$ and
$0$.

\bibliographystyle{abbrv} 
\bibliography{GlobalSpectrum.bib}

\end{document}